\documentclass[acmsmall,screen,nonacm]{acmart}
\newif\ifarXiv
 \arXivfalse

\newif\ifdraft%
\draftfalse%

\newif\ifCR%
\CRfalse%

\newif\ifmargins%
\marginsfalse%

\newif\ifanon
\anontrue

\usepackage[T1]{fontenc}
\usepackage[utf8]{inputenc}
\usepackage{thmtools}
\usepackage{enumerate}
\usepackage[compat=0.6]{yquant}
\usepackage{quantikz}
\usepackage{breqn}
\usepackage{bussproofs}
\usepackage{stmaryrd}
\usepackage[inkscapeformat=png]{svg}
\usepackage{framed}
\usepackage{wrapfig}
\usepackage{currfile}
\usepackage{subcaption}
\usepackage{float}
\usepackage{arydshln}
\usepackage{dsfont} 
\usepackage{multirow}
\usepackage{lipsum}
\usepackage{xparse}
\ifdraft%
    \usepackage[draft]{fixme}
\else
    \usepackage[final]{fixme}
\fi
\fxusetheme{colorsig} 
\FXRegisterAuthor{cc}{cci}{Christophe}
\FXRegisterAuthor{rp}{rpi}{Romain}
\FXRegisterAuthor{ji}{jii}{Jad}
\makeatletter
\@ifpackageloaded{listings}{}{
  \usepackage{listings}
}
\makeatother
\usepackage{xcolor}
\usepackage{graphicx} 
\usepackage[capitalise]{cleveref}
\crefname{figure}{Fig.}{Figs.}
\Crefname{figure}{Figure}{Figures}
\crefname{table}{Table}{Tables}
\Crefname{table}{Table}{Tables}
\crefname{section}{\S}{\S}
\Crefname{section}{Section}{Sections}
\crefname{definition}{Def.}{Defs.}
\Crefname{definition}{Definition}{Definitions}
\makeatletter
\AddToHook{cmd/appendix/before}{\def\cref@section@alias{appendix}}
\AddToHook{cmd/appendix/before}{\def\cref@subsection@alias{appendix}}
\makeatother

\newcommand{\newthmparent}[2]{%
    \newcounter{subctr@#1}
    \expandafter\renewcommand\csname thesubctr@#1\endcsname{\ref*{#2}\alph{subctr@#1}}
    \spnewtheorem{sublemma@#1}[subctr@#1]{Theorem }{\itshape\/}{\rmfamily}
}
\definecolor{darkred}{rgb}{0.8,0.1,0.1}

\Crefname{conjecture}{Conjecture}{Conjectures}

\theoremstyle{remark} 
\newtheorem*{remark}{Remark}

\makeatletter
\newcommand{\labelthis}[2]{%
  \def\@currentlabel{#2}\label{#1}#2%
}
\makeatother

\makeatletter
\newcommand{\labelrule}[2]{%
    \def\@currentlabel{#2}\label{#1}#2%
}
\makeatother

\newif\ifalternative

\newcommand{\HQbricks}{\textsc{HQbricks}}
\newcommand{\lang}{\textsc{HQbricks}\ensuremath{_\infty}}
\newcommand{\Lang}{\textsc{HQbricks}\ensuremath{_\infty}}
\newcommand{\hps}{\texttt{HPS}}
\newcommand{\ihps}{\texttt{IHPS}}
\newcommand{\libname}{\texttt{IHPSlib}}

\newcommand{\N}{\ensuremath{\mathbb{N}}}
\newcommand{\Z}{\ensuremath{\mathbb{Z}}}
\newcommand{\R}{\ensuremath{\mathbb{R}}}
\newcommand{\C}{\ensuremath{\mathbb{C}}}
\newcommand{\Q}{\ensuremath{\mathbb{Q}}}

\newcommand{\B}{\ensuremath{\mathbb{B}}}

\newcommand{\while}[2]{\ensuremath{\mathbf{while}\; #1\; \mathbf{do}\; #2\;
\mathbf{done}}}
\newcommand{\whileAnnot}[3]{\ensuremath{\mathbf{while}\; \{#3\}\; #1\;
\mathbf{do}\; #2\; \mathbf{done}}}
\newcommand{\whileBounded}[3]{\ensuremath{\mathbf{while}_{#1}\; #2\; \mathbf{do}\;
#3\ \mathbf{end}}}
\newcommand{\ifthene}[3]{\ensuremath{\mathbf{if}\; #1\;
            \allowbreak\mathbf{then}\;\allowbreak #2\;\mathbf{ else }
\;#3 \;\mathbf{end}}}
\newcommand{\dowhile}[2]{\ensuremath{\mathbf{do}\; #1\; \mathbf{while}\; #2}}

\newcommand{\prog}{{\ensuremath{\mathtt{p}}}}
\newcommand{\skipProg}{{\ensuremath{\mathbf{skip}}}}
\newcommand{\sequenceProg}[2]{{\ensuremath{#1 \mathbin{\mathtt{;}} #2}}}
\newcommand{\assign}[2]{{\ensuremath{#1 \mathbin{\mathtt{:=}} #2}}}
\newcommand{\bool}{\ensuremath{{\mathtt{\ensuremath{b}}}}}

\renewcommand{\measure}[2]{{\ensuremath{#1 := \mathbf{measure}\; #2}}}
\newcommand{\qubitInitProg}[1]{\ensuremath{\mathbf{qubit}\; #1}}
\newcommand{\cbitInitProg}[1]{\ensuremath{\mathbf{bit}\; #1}}
\newcommand{\intInitProg}[1]{\ensuremath{\mathbf{int}\; #1}}
\newcommand{\qubit}{\ensuremath{\mathtt{\ensuremath{q}}}}
\newcommand{\qbit}{\qubit} 
\newcommand{\cbit}{\ensuremath{\mathtt{\ensuremath{c}}}}
\newcommand{\trueBool}{\ensuremath{\mathtt{tt}}}
\newcommand{\falseBool}{\ensuremath{\mathtt{ff}}}
\newcommand{\unitary}{\ensuremath{\mathtt{U}}}
\newcommand{\integer}{\ensuremath{\mathtt{i}}}

\newcommand{\cInt}{\ensuremath{\mathtt{x}}}
\newcommand{\programSet}{\ensuremath{\mathtt{Prog}}}
\newcommand{\addressSet}{\ensuremath{\mathtt{A}}}
\newcommand{\qbitSet}{\ensuremath{\mathtt{Q}}}
\newcommand{\cbitSet}{\ensuremath{\mathtt{B}}}
\newcommand{\cIntSet}{\ensuremath{\mathtt{I}}}
\newcommand{\classAddressSet}{\ensuremath{\mathtt C}}
\renewcommand{\address}{\ensuremath{\mathtt{a}}}
\newcommand{\programTerm}{\ensuremath{\mathtt{t}}}

\newcommand{\signature}{\ensuremath{\mathfrak s}}
\newcommand{\validProg}[3]{\ensuremath{#1 \vdash #2 : #3}}
\newcommand{\hoare}[3]{\ensuremath{\left\{ #1 \right\}\; #2\; \left\{ #3 \right\}}}

\newcommand{\intType}{\ensuremath{\mathtt{Int}}}
\newcommand{\boolType}{\ensuremath{\mathtt{Bool}}}

\newcommand{\phaseType}{\ensuremath{\mathtt{Phase}}}
\newcommand{\normType}{\ensuremath{\mathtt{Norm}}}
\newcommand{\memoryType}{\ensuremath{\mathtt{Memory}}}
\newcommand{\hpsType}{\ensuremath{\mathtt{IHPS}}}

\newcommand{\boolVar}{c}
\newcommand{\intVar}{x}

\newcommand{\intVarOther}{y}
\newcommand{\var}{a} 
\newcommand{\boolTerm}{b}
\newcommand{\intTerm}{i}
\newcommand{\lift}[1]{\ensuremath{\mathop{\uparrow}} #1}
\newcommand{\freeVars}[1]{\ensuremath{\mathrm{FV}(#1)}}

\newcommand{\varValue}{\ensuremath{v}} 
\newcommand{\intValue}{\varValue} 
\newcommand{\classicalState}{\sigma}

\newcommand{\boolVarSet}{\ensuremath{B}}
\newcommand{\intVarSet}{\ensuremath{I}}
\newcommand{\varSet}{\ensuremath{A}}

\renewcommand{\ket}[2][{}]{\ensuremath{{\left| #2 \right\rangle}_{#1}}}
\renewcommand{\bra}[1]{\ensuremath{\left\langle #1 \right|}}

\newcommand{\ketbra}[2]{\ensuremath{\left| #1 \right\rangle \left\langle #2 \right|}}
\newcommand{\classKet}[2][{}]{\ensuremath{{\left[ {#2} \right]}_{{#1}}}}

\newcommand{\pastKet}[2][{}]{\ensuremath{{\left( {#2} \right)}_{{#1}}}}

\newcommand{\finsubseteq}{\mathrel{\subseteq_{\mathrm{fin}}}}

\newcommand{\equaldef}{\ensuremath{\stackrel{\smash{\raisebox{-0.2ex}{\scalebox{0.5}[0.5]{\textnormal{def}}}}}{=}}}

\newcommand{\ultraperp}{\perp\!\!\!\perp}
\newcommand{\ot}{\leftarrow}

\newcommand{\h}{\ensuremath{h}}
\newcommand{\pasum}[3]{\ensuremath{\left\langle #1,#3\cdot#2\right\rangle}}
\newcommand{\pasuminline}[3]{\ensuremath{\langle #1,#3\cdot#2\rangle}}
\newcommand{\psadd}{\ensuremath{+}}
\newcommand{\bigpsadd}{\sum}
\newcommand{\sqpsadd}{\oplus}

\newcommand{\sumpasumsplit}[3]{%
  &&\ensuremath{\left\langle #1 \vphantom{#1#2#3} \right.} &
\ensuremath{\left. , #3     \vphantom{#1#2#3} \right.} &
\ensuremath{\left. \cdot \, #2 \vphantom{#1#2#3} \right\rangle}&
}

\newcommand{\sumpasummemcols}[1]{r@{}r@{\,}l@{}l@{\,}#1@{}r}

\newcommand{\assignmentContext}{\Gamma}
\NewDocumentCommand{\hilbert}{o}{
  \IfNoValueTF{#1} {\ensuremath{\mathcal{H}}} {\ensuremath{\mathcal{H}(#1)}} }
\NewDocumentCommand{\fockSpace}{d()}{ 
  \IfNoValueTF{#1} {\ensuremath{\mathcal{F}}} {\ensuremath{\mathcal{F}(#1)}} }
\NewDocumentCommand{\basis}{d()}{ 
  \IfNoValueTF{#1} {\ensuremath{\mathrm{Basis}}} {\ensuremath{\mathrm{Basis}(#1)}} }
\NewDocumentCommand{\probability}{d()}{ 
  \IfNoValueTF{#1} {\ensuremath{\mathbb{P}}} {\ensuremath{\mathbb{P}(#1)}} }
\NewDocumentCommand{\expectation}{o}{
  \IfNoValueTF{#1} {\ensuremath{\mathbb{E}}} {\ensuremath{\mathbb{E}[#1]}} }
\newcommand{\pastSpaceDecor}{\ensuremath{\mathcal P}}

\newcommand{\dephase}{\ensuremath{\mathcal D}}
\newcommand{\superop}{\ensuremath{\mathcal E}}

\newcommand{\sem}[1]{\left\llbracket{#1}\right\rrbracket}
\newcommand{\sembreak}[1]{\llbracket{#1}\rrbracket}
\newcommand{\semRho}[2][{}]{\ensuremath{{\mathrm{cq}_{#1}}(#2)}}
\newcommand{\semFock}[2][{}]{{\ensuremath{{\left\{\!\!\left\{ {#2}
\right\}\!\!\right\}}_{{#1}}}}}

\newcommand{\redotimes}[1]{\ensuremath{\mathrm{red}_{\otimes}\left(#1\right)}}
\newcommand{\redsqpsadd}[1]{\ensuremath{\mathrm{red}_{\sqpsadd}\left(#1\right)}}
\newcommand{\applySig}[2]{#1(#2)}

\newcommand{\applyU}[1]{\ensuremath{\mathtt{apply}(#1)}}
\newcommand{\eval}[2]{\ensuremath{\mathtt{ev}_{#2}(#1)}}
\newcommand{\CQ}{\ensuremath{\mathrm{CQ}}}

\ifdraft%
    \newcommand{\romain}[1]{{\color{orange!85!black!80}{RP: #1}}} 
\else
    \newcommand{\romain}[1]{}
\fi

\ifdraft
    
\else
    
\fi
\newcommand{\judgement}[3]{\ensuremath{#1 \vdash #3}}
\newcommand{\logicContext}{\ensuremath{\Delta}}
\newcommand{\equivContext}{\ensuremath{\Xi}}
\newcommand{\semJudgement}[3]{\ensuremath{#1\models #3}}
\newcommand{\equivvdash}{\ensuremath{\vdash_{\equiv}}}
\newcommand{\dom}[1]{\ensuremath{\mathrm{dom}(#1)}}
\newcommand{\equivStrong}{\ensuremath{\equiv_{s}}}
\newcommand{\equivWeak}{\ensuremath{\equiv}}

\ifdraft{}\else

\fi
\newcommand{\appx}{\ensuremath{\ddagger}}
\newcommand{\hoareVar}{\ensuremath{S}}

\begin{document}
\newcommand{\thetitle}{An Effective Quantum Hoare Logic for Hybrid Quantum Programs with Unbounded Loops}
\title{\thetitle}

\author{Jad Issa}
\email{jad.issa@{cea.fr,univ-lorraine.fr}}
\orcid{0009-0000-8595-728X}
\affiliation{
    \institution{CEA LIST/LSL, Qbricks Team}
    \city{Nancy}
    \country{France}
}
\affiliation{
    \institution{Université de Lorraine, CNRS, Inria, LORIA, MOCQUA team}
    \city{Paris}
    \country{France}
}

\author{Christophe Chareton}
\email{christophe.chareton@cea.fr}
\orcid{0000-0001-7113-563X}
\affiliation{
    \institution{CEA LIST/LSL, Qbricks Team}
    \city{Nancy}
    \country{France}
}

\author{Romain Péchoux}
\email{romain.pechoux@loria.fr}
\orcid{0000-0003-0601-5425}
\affiliation{
    \institution{Université de Lorraine, CNRS, Inria, LORIA, MOCQUA team}
    \country{France}
}

\begin{abstract}
While quantum hardware remains limited, hybrid quantum-classical algorithms with
complex control structures, including unbounded loops, are emerging, posing new
challenges for quantum program analysis, including the accurate estimation of
the resource consumption of a given program.  Meanwhile, precise analysis
techniques such as symbolic execution have largely left out hybridization and
unbounded recursion. On the other hand, current quantum Hoare logics that generally support
them are lacking in expressiveness and miss out on efficient computational equational
reasoning that could be implemented in a semi-automated tool. This leaves a gap
awaiting to be filled.  In this work, we answer this challenge with the first
semi-automated static analysis solution combining effective  functional
verification and resource (termination or cost) estimation for hybrid
quantum programs with unbounded loops.  Towards that end, we introduce
\emph{integer hybrid path-sums} ($\ihps{}$), extending path-sums to handle
unbounded \texttt{while} loops, as a representation of possible executions of a
program.  A generic strategy for determining termination and expected resource
consumption via loop invariants is also proposed and illustrated on several
examples. Finally, the solution is implemented as a semi-automatic Haskell
program. This work is the first step toward the design of a complete static
resource analysis tool for hybrid quantum programs, essential for the
development of real-world quantum computing.

\keywords{Quantum Computing \and Hybrid Programs \and Path-Sums \and
Static Analysis \and Resource Analysis  \and Termination Analysis \and Symbolic
Execution}

\end{abstract}

\maketitle

\section{Introduction}%
    \subsection{Context and Motivations}%
\label{sub:Context and motivation}

In the quest for higher computational power, quantum computing has
been under research since its introduction by Feynman and Benioff in the
early 1980s~\cite{B82}.  For much of its history, research in quantum computing
has focused on proving the advantage of quantum computers over classical
computers, be it theoretically (e.g., Shor's algorithm for factoring in
polynomial time~\cite{S94}) or experimentally (e.g., the demonstration
of the so-called quantum advantage by Google in 2024~\cite{Google24}).

However, a major challenge for integrated and scalable quantum computing is
the integration of purely quantum (also called unitary) sequences of computation
in a wider environment that includes classical operations and classical control
structures~\cite{VLRH23,DLPZ25}.  As such, today's quantum applications often
require \textit{hybrid} programming languages offering  hybrid unbounded
recursion as a core component of the computation. This requirement is, for
instance, notably  observed in all variants of \emph{repeat-until-success}
(RUS), where a given block of unitary computations is run until an identifiable
success condition is met.  Such patterns appear, for instance, in the
post-selection~\cite{Y26} model of computation, where a computation is simply
discarded if the measurement does not result in a desired outcome, as well as in
the synthesis of unitaries~\cite{BRS15,PS13,LBK05}. RUS has also been used to
apply \emph{eventually deterministic} 2-qubit gates in linear optical quantum
computing~\cite{LBB+06,LBK05,dGHE+24}, and it often appears in error correction,
for, e.g., the fault-tolerant preparation of logical states by repeatedly
preparing a faulty state and measuring whether it is correct~\cite{WPW26}.
Finally, most quantum algorithms (including Grover, Phase Estimation, Shor,
etc.) do not succeed with probability 1 in a single run and tend to benefit from
being repeated until success. 

The increasing importance of hybrid programs with unbounded recursion in quantum
computing calls for specialized verification techniques to answer both semantic
correctness properties (functional correctness, symbolic simulation, equivalence
checking) and properties that are specific to this class of programs, such as
their physicality or their termination.  This is precisely where formal methods
and verification come into play~\cite{CBDVVX23,FY20,U19}, particularly
considering the inherent difficulty of testing such programs~\cite{CBDVVX23}.

Due to the probabilistic nature of hybrid programs and because of the inclusion
of unbounded recursion, one of the fundamental aspects to be studied concerns
their \emph{resource-awareness}.  This `resource' can be anything from
termination to costs in time, gate count, or more. Estimating precise resource
properties of hybrid programs is critical in an age of limited quantum
resources and requires a very fine understanding of the behavior of
programs.

This need for the functional verification of unbounded hybrid programs has
received  attention from Quantum Hoare Logics~\cite{Y12,liu2019formal, Y24}
(QHL). While this direction opened a  fruitful working
program~\cite{FY21,LZB+25,SFY+26}, it still lacks critical components for
practical scaling. In particular, as  recently observed, the logics still rely
on infeasibly complex computations over matrices, and ``scalable verification
remains elusive''~\cite{YPR26} today.  Furthermore, the expressivity of such
logics, though showing advances very recently~\cite{SFY+26}, remains
insufficient for resource analysis.  As such, there is a need for logics that
have efficient computational aspects: it is no longer enough to produce
\emph{some} specification no matter its size, but rather to produce compact yet
expressive specifications equipped with effective equational theories to rewrite
them according to the different needs of the verification process. While
theoretical models have their place, it is today critical that the verification
process be effective, i.e., amenable to a semi-automatic implementation, and yet
their effectiveness is still lacking in the current literature.

On the other hand, another line of research focuses on such compact, efficient,
and expressive specifications using symbolic representations and equational
theories to reason about them, including
path-sums~\cite{A18,A23,C+21,CIN+26,V24}, diagrammatic calculi~\cite{CD11,CK18},
and more. However, these approaches do not handle unbounded recursion, and only
one very recent approach, \HQbricks~\cite{CIN+26} considers hybrid programs
while recognizing the need for unbounded recursion without providing a solution. 

A gap thus emerges between the expressiveness and computability of symbolic
approaches restricted to unitary programs and the applicability of QHLs to
hybrid unbounded programs without sufficient expressiveness and computability.
Meanwhile, the need, from the applicative side of quantum computing, for filling
this gap and complementing it with resource analysis is becoming more and more
evident.

    \subsection{Contributions}%
\label{sub:Contributions}
In this paper, we tackle this gap in the literature by developing a symbolic
analysis framework for hybrid programs with unbounded loops and classical
integer computations, with applications in the inference of fine guarantees on
the resources consumed by such programs written in the language $\lang$
introduced in this paper.

$\lang$ builds upon the open-source framework $\HQbricks$~\cite{CIN+26}, which
uses symbolic execution to avoid the exponential blow-up in representations of
hybrid quantum states along the execution of \emph{bounded} programs. We extend
\HQbricks{} by allowing for general recursion in the form of unbounded
\texttt{while} loops and computations on classical integers, possibly dependent
on measurement results.  In turn, we solve a fundamentally more difficult
representation problem arising from unbounded recursion: states may have
infinite supports as vectors, and the exponential complexity is raised to a
qualitatively more difficult \emph{undecidability barrier}. 

General recursion allows for non-termination; this implies that the complexity
properties we can guarantee for hybrid quantum programs are fairly subtle
properties that echo, to some extent, the properties of probabilistic
programming, such as almost-sure termination~\cite{BG05} (i.e., termination with
probability 1).  These properties also go well beyond the scope of properties
that can be addressed by (hybrid) quantum \emph{circuits}, which terminate by
construction.  Our work also provides general functional correctness guarantees
as well as guarantees on probabilistic properties of the program, including the
expected number of gates used, expected runtime, almost-sure termination, etc.

The analysis framework presented in this article is based on
\emph{path-sums}~\cite{A18}, a formalism introduced to provide a symbolic and
compact representation of quantum circuits in the context of formal
verification~\cite{CBDVVX23,V24,AL25}. The compactness of this formalism stems
from the fact that quantum operations, generally represented by operators in a
Hilbert space of exponential dimension (in the number of qubits), can be
symbolically represented by a symbolic sum of \emph{paths} (vectors), often
avoiding this exponential blow-up.  To study the properties of hybrid
quantum programs with general recursion, the path-sums formalism has to be
extended to handle both classical and quantum data as well as unbounded
loops. A recent proposition, \HQbricks~\cite{CIN+26}, presents the extension to
\emph{hybrid path-sums} (\hps{}) for the verification of hybrid but bounded
quantum programs. 

We further extend this formalism to \emph{integer HPS} (\ihps{}), which include
variables and constructs for handling integers. This allows us to make a major
leap in the analysis framework by replacing deterministic symbolic computation
with a Hoare logic, an equational theory for \ihps, and invariant-based
analysis of loops, a necessary step in resolving the undecidability issues
inherent in the analysis of unbounded loops. 

Our main contributions to the analysis of unbounded hybrid programs are the
following.

\begin{enumerate}
    \item The introduction of an imperative programming language $\lang$
        supporting hybrid quantum programs with unbounded \texttt{while} loops
        and computation on integers (\cref{fig:language-syntax}). \lang{} is a
        variant of the open-source language $\HQbricks$.
        It is equipped with a denotational semantics
        (\cref{fig:standard-semantics}) defined on Classical-Quantum states
        (\emph{CQ states}, \cref{def:cq-state}), a standard representation of
        hybrid states in terms of density operators and
        super-operators~\cite{W13,FY21}.

    \item The definition of \emph{integer HPS} or \ihps{} (\cref{def:hps}), a
        compact, exact, and symbolic representation of infinite-dimensional CQ
        states extending \hps{}~\cite{CIN+26} used for finite-dimensional CQ
        states. Integer HPS are interpretable as pure vectors in
        higher-dimensional Fock spaces (\cref{def:hps-interpretation}) as well
        as CQ states (\cref{def:density-operator-interpretation}). They are
        equipped with a natural notion of equivalence
        (\cref{def:equivalence-hps}) derivable by sound equational theories
        (\cref{sub:equational-theory}) extending those of
        path-sums~\cite{A18,A23,CIN+26}.

    \item A quantum Hoare logic for $\lang$ in terms of
        transformations of integer HPS (\cref{fig:hoare-logic}) by programs, or
        by substitutions of equivalent \ihps{} according to an equational theory
        of \ihps{}
        (\cref{sub:equational-theory}).
        Compared to previous QHLs, the logic is non-branching, effective, and
        fully expressive (see
        \cref{sec:related-work}). It is proven
        sound with respect to the denotational
        semantics
        (\cref{thm:soundness})
        allowing
        the extraction of properties of the program via proof and symbolic
        execution rather than explicit computation on CQ states
        (\cref{cor:estimation-via-symbolic-execution}).
    \item Invariant-based loop analysis (the logic rule~\ref{rule:while}) and a
        heuristic for forming such invariants (\cref{thm:generic-strategy}),
        which is broadly applicable and illustrated through the running example
        of the repeated-until-success unitary synthesis
        (\cref{lst:rus-unitary}) and through which resource consumption
        properties can be inferred.
    \item An implementation $\libname$ (\cref{sec:implementation}) of an
        \ihps{}-based semi-automated Hoare logic engine for $\lang$. $\libname$ applies the rules of the logic automatically in a forward-directed manner, along the way raising proof obligations for \ihps{} equivalences, which can then be checked semantically. To the best of our knowledge, this is the
        first semi-automated tool for analyzing hybrid quantum programs and
        their resource consumption, with future extensions underway for full
        mechanization of the proofs of \ihps{} equivalences and invariant
        conservation.
    \item A catalog of case studies
        (\cref{sec:applications}) including a generic
        analysis of multiple instances of Repeat-Until-Success, an example that does not
        terminate with probability 1, and an example that is more convoluted than RUS containing a nested
        \texttt{while} loop illustrating the expressive power of the approach
        and its different features, and highlighting the use of the approach in
        resource estimation.
\end{enumerate}

The
details of the more technical definitions and constructions as well as the
proofs of the results are presented in{\ifCR{ an extended version of the
paper}\else{~\cref{app:exhaustive-formalism} and~\cref{sec:proofs},
respectively}\fi}. We will use the symbol $\appx$ to point to the \ifCR{extended
version of the paper}\else{appendix}\fi{} when relevant.

    \subsection{Bird's Eye View of the Approach}\label{sub:Overview}

Among the most ubiquitous patterns in hybrid programs is the
\emph{repeat-until-success} (RUS) pattern, where a program that
probabilistically produces a (checkable) desired outcome is repeated until such an
outcome is obtained.  Our approach handles such RUS patterns as well as more
convoluted loops where success probabilities may depend on
the quantum state in question (\cref{sec:weak-measurements}) and
nested loops where the probability of success for the outer loop depends on a
quantum state, which itself is dependent on the number of iterations before
halting in the inner loop
(\cref{sec:nested-while}). 

\begin{wrapfigure}[13]{r}{0.48\textwidth}
    \centering
    \vskip -0.5em
\begin{minipage}{0.4\textwidth}
    \centering

    \begin{lstlisting}[caption={The repeat-until-success program for unitary
    synthesis \labelthis{prog:rus-unitary}{\textsc{Synth}}},
    label={lst:rus-unitary}]
qubit q$_1$; bit c; int $\intVar$;
X(q$_1$); c := 1; 
do
    X(q$_1$); H(q$_1$); T(q$_1$);
    CNOT(q$_1$,q); H(q$_1$); CNOT(q$_1$,q);
    T(q$_1$); H(q$_1$);
    c := measure(q$_1$);
    x := x + 1;
while c
    \end{lstlisting}
\end{minipage}
\end{wrapfigure}
To illustrate our approach, consider a typical RUS program: the synthesis \ref{prog:rus-unitary}
of the unitary $U = 1/\sqrt 3 (I + i \sqrt 2 X)$ on a qubit $\qubit$ in
\cref{lst:rus-unitary}.  This implementation is one of many similar instances of
RUS synthesis in the literature~\cite{PS13} and was used as a prototypical
example in a related work on quantum expectation transformers~\cite{AMPPZ22}. In
\ref{prog:rus-unitary}, an ancilla qubit $\qubit_1$, a (classical) bit $\cbit$,
and a (classical) integer counter $\cInt$ are all initialized to $0$ (line 1),
then the qubit and bit are immediately set to 1 (line 2). Then,
 lines 4-7 apply the circuit in \cref{fig:rus-unitary-circuit} to $\qubit$
and $\qubit_1$, leaving the result of the measurement of $\qubit_1$ in $\cbit$.
Finally, the counter of iterations $\cInt$ is incremented (line 8). At this point, if
$\cbit$ is $0$, the unitary $U$ has successfully been applied to $\qubit$;
otherwise, $\qubit$ is left unchanged, and a repetition is needed (line 9).

\begin{figure}[ht]
    \centering
    \begin{tikzpicture}
        \node[inner sep = 0pt, scale=0.8] {
                \begin{quantikz}[wire types={q,q}]
                    \lstick{\ket[\qubit]{\psi}} &&&& \targ{} &&\targ{} & &&&
                    \rstick{$U^\boolVar \ket[\qubit]{\psi}$}\\
                    \lstick{\ket[\qubit_1]{1}} &  \gate{X} & \gate{H} & \gate{T} &
                    \ctrl{-1} & \gate{H} & \ctrl{-1} & \gate{T} & \gate{H} &
                    \meter{} & \cw \rstick{$\boolVar$}
            \end{quantikz}};
    \end{tikzpicture}
    \caption{The circuit for the RUS unitary synthesis of $U$ in
    \cref{lst:rus-unitary}.}\label{fig:rus-unitary-circuit}
\end{figure}

Our analysis of the program relies on \emph{integer hybrid path-sums} (\ihps)
(\cref{sec:hps}), extending the hybrid path-sums of~\cite{CIN+26} to support
unbounded \texttt{while} loops in symbolic analysis. In essence, an integer HPS is a symbolic tuple
{\small $\sum_{\vec \var}\pasum{p}{\ket[\qbit]{\boolTerm_1}\classKet[\cbit]{\boolTerm_2}}{n}$}
of expressions $p$, $n$, $\boolTerm_1$, and $\boolTerm_2$ over tuples $\vec \var = (\var_1,
\ldots, \var_k) $ of boolean or integer \emph{path variables}. This expression
describes a sum of \emph{paths} (complex-weighted basis states) of the form
$n(\vec v) \cdot e^{2\pi i \cdot p(\vec v)}\ket[\qbit]{\boolTerm_1(\vec v)}$ where 
$\vec v = (v_1, \ldots, v_n)$ ranges over the instantiations $v_i \in \N$ or $v_i \in
\B$ of integer or boolean path variables $\var_i$ respectively, and $t(\vec v)$
is the evaluation of a term $t$ with each $\var_i$ assigned to $v_i$.

\[
    \ket{\psi} = \sum_{\substack{1 \leq i \leq k \\ v_i \in
    \N \text{ or } v_i \in \B}} n(\vec{v}) e^{2 \pi i \cdot p(\vec{v})}
    \ket[\qbit]{\boolTerm_1(\vec{v})}
\]

In addition to the quantum states given by $\ket[\qbit]{\boolTerm_1(\vec v)}$, integer HPS,
as their name suggests, also handle classical data $\classKet[\cbit]{\boolTerm_2}$.  This allows
them to symbolically and compactly represent the branching structure of the
execution, both in terms of quantum superpositions and classical probabilistic
branching.  For instance, a qubit $\qbit$ in state $\ket +$ is described using
the \ihps{} {\small $ \sum_\boolVar \pasuminline{0}{{\ket \boolVar}_\qbit}{(1 /{\sqrt
2})}$} with the boolean variable $\boolVar$ encoding the quantum branching in the
superposition $\ket +$ of the basis states $\ket 0$ and $\ket 1$.  Measuring
qubit $\qbit$ (non-destructively) is then encoded as copying the symbolic
expression $\boolVar$ in ${\ket \boolVar}_\qbit$ into a classical bit $\cbit$ as
$\classKet[\cbit]{\boolVar}$ in the path-sum {\small $\sum_\boolVar
\pasuminline{0}{\ket[\qbit]{\boolVar} \classKet[\cbit]{\boolVar}}{(1/ {\sqrt
2})}$}.  The interpretation is then that each measurement outcome $\nu \in \{0,
1\}$ corresponds to a filtration of the sum $\sum_{\boolVar \in \{0, 1\},
\boolVar = \nu} \frac 1 {\sqrt 2} \ket[\qbit]{\boolVar}$ into a vector whose
squared norm is the probability of obtaining the outcome $\nu$ and which, when
re-normalized, is the resulting quantum state if $\nu$ was obtained.

We can perform symbolic execution of programs (\cref{sec:operational-semantics})
as transformations of integer HPS.{}  For instance, a quantum bit-flip
$X$ transforms $\boolTerm_1$ into $\boolTerm_1 \oplus 1$ by performing an XOR, while the
Hadamard $H$ introduces a new path variable $\boolVar$ to sum over corresponding
to the fork of the basis states into the superpositions $\ket +$ and $\ket -$:
\begin{align*}
    X &: \pasum{p}{\ket[\qbit]{\boolTerm_1}\classKet[\cbit]{\boolTerm_2}}{n} \mapsto
\pasum{p}{\ket[\qbit]{\boolTerm_1 \oplus 1}\classKet[\cbit]{\boolTerm_2}}{n}\\
    H &: \pasum{p}{\ket[\qbit]{\boolTerm_1}{\classKet[\cbit]{\boolTerm_2}}}{n} \mapsto
            \sum_\boolVar
            \pasum{p +
            \frac{{\boolTerm_1}\boolVar}{2}}{\ket[\qbit]{\boolVar}{\classKet[\cbit]{\boolTerm_2}}}{\frac n {\sqrt 2}}
\end{align*}

With such a symbolic description of all possible states of the computer at a
given point, many formal verification tasks (including functional correctness,
termination, and resource consumption) can be reformulated as equivalence checks
between integer HPS.{}

For the\ \ref{prog:rus-unitary} program, we can write a functional specification
as a loop invariant $\h_{\textsc{{RUS-inv}}}[\intVar]$ with a free
variable $\intVar$ describing the state of the system after $\intVar$ iterations.  This
symbolic representation encodes all possible branches of execution:
halting by iteration $\intVar$ or still needing to continue.
\[
    \h_{\textsc{RUS-inv}}[\intVar] \equaldef \bigpsadd_{\intVarOther=1}^{\intVar}
\sum_{\boolVar_1}
\pasum{\frac{\boolVar_1}{4}}{\ket[\qubit]{\boolVar\oplus
    \boolVar_1}\ket[\qubit_1]{0}\classKet[\cbit]{0}\classKet[\cInt]{\intVarOther}}{\frac{1}{2^{\intVarOther}}
\sqrt{2}^{\boolVar_1}} \psadd
\pasum{0}{|\boolVar\rangle_\qbit \ket[\qubit_1]{1}\classKet[\cbit]{1}
\classKet[\cInt]{\intVar}}{\frac 1
    {2^{\intVar}}}
\]

We can prove that it is correct by a heuristic based on the symbolic execution
of a generic iteration from ($\intVar$ to $\intVar+1$) (see
\cref{sec:generic-strategy}).  This functional specification can then be used to
study the limiting behavior of the program,
again symbolically, by taking the limit as $\intVar\to \infty$:
\[ \h_{\textsc{RUS-}\infty} \equaldef \bigpsadd_{\intVar > 0} \sum_{\boolVar_1}
    \pasum{\frac{\boolVar_1}{4}}{\ket[\qubit]{\boolVar\oplus
    \boolVar_1}\ket[\qubit_1]{0}\classKet[\cbit]{0}\classKet[\cInt]{\intVar}}{\frac{1}{2^{\intVar}}
\sqrt{2}^{\boolVar_1}} \]
In this form, we can extract the entire probability distribution of the
number of iterations before halting: $\probability(\cInt = 0) = 0$ and
$\probability(\cInt = \intVar) = \frac{3}{4^\intVar}$, from which we
can deduce both the almost-sure termination (AST) of the program, and the
expected number of repeats before success, which is $\mathbb{E}[\cInt] =
\sum_{\intVar > 0} \intVar \cdot \frac{3}{4^\intVar} = \frac 4 3$.  This
corresponds to average runtime. The type of analysis performed here can be
generalized to other types of resources by choosing different ways to increment
the variable $\cInt$.  For example, calculating the average number of gates of a
certain type (e.g., expensive $T$~\cite{V25,MSCRdM19} or multi-qubit
gates~\cite{L18}) by choosing to increment the counter according to
the type of resource considered.

In our implementation $\libname$, the \ref{prog:rus-unitary} program is written
with the \texttt{while} loop annotated with the loop invariant hints
$\h_{\textsc{{RUS-inv}}}[\intVar]$. Once the invariant is given, $\libname$ is then
capable of automatically computing the final \ihps{} $\h_{\textsc{RUS-}\infty}$,
producing proof obligations for \ihps{} equivalences, and checking said
equivalences semantically, granting high confidence in the validity of the
equivalences to be proven.

    \subsection{Related Work}%
\label{sec:related-work}

When it comes to general-purpose verification, our work fits into a broader line
of research on quantum Hoare logics (QHL), including the QHL of Mingsheng Ying~\cite{Y12},
later extended to hybrid programs storing classical variables~\cite{FY21}. There
has also been work on relational QHL~\cite{U19,BGWZ25}. Our work also inherits
from a long line of research on path-sums~\cite{A18,A23,CIN+26}, notably
including \HQbricks{}~\cite{CIN+26} on which it is based.
A few differences appear in our work, however. 

First, we choose a more
expressive \emph{programming language} than the state-of-the-art. In particular,
Ying's \texttt{while} language~\cite{Y12}, on which most QHLs are based, does
not support separate measurement and control, while its extension~\cite{FY21} to
support that still lacks integer computations, largely recognized as an
essential feature in classical computing. This, among others, makes it difficult
to express generic \texttt{for} loops, ubiquitous in quantum algorithms.
On the other hand, \HQbricks{}~\cite{CIN+26} misses both integers and unbounded
loops. Our language \lang{} subsumes these previous languages in expressivity.

Second, we have a more expressive \emph{specification language} representing the
entire ensemble of quantum states, rather than predicates giving only partial
information about the state. This is necessary for resource analysis. Our logic
is also non-branching, as all branches are encoded in a
single symbolic specification at a time. 
Most importantly, as recently noted~\cite{YPR26}, the current QHLs are not
scalable given their requirement for matrix computations. In contrast, choosing
\ihps{} as a specification language gives us effective equational reasoning
tools inherited from path-sums which allowed us to implement the logic into a
semi-automated tool.

Meanwhile, when it comes to resource analysis, and in terms of applications,
some work has provided concrete tools for estimating the resources of quantum
programs, or more precisely, circuit description languages.  In this regard, we
note the development of Microsoft's Azure Quantum Resource
Estimator~\cite{B+22,Azure24} developed especially in the lens of benchmarking:
to understand precisely when the quantum advantage has been achieved and to
verify claims about this advantage.  In the same vein, Colledan et
al.~\cite{CDL24,CDL25} have developed an approach based on dependent types that
allows inferring bounds on the size and depth of a quantum circuit generated by
a program from the proto-Quipper family~\cite{FKRS20}.  However, the control
flow does not depend on measurement outcomes, and they do not support general
recursion. 

On more theoretical aspects, some work has sought, in the spirit of implicit
complexity~\cite{P20}, to characterize complexity classes such as quantum
polynomial time on high-level languages~\cite{Y20,HPS23} and to certify
compilation to circuits of polynomial size~\cite{HPS25}.  However, these
languages are inherently concerned with asymptotic bounds instead of actual
costs. Furthermore, they have a limited hybridity, with no support for  general
recursion.

A closely related approach to our work uses quantum expectation
transformers~\cite{AMPPZ22,LZBY22,KKM+16}.  This technique also makes it
possible to study the properties of resources such as almost-sure
termination or the average cost of quantum programs with general recursion by
computing symbolic weakest pre-expectation.  However, the current
literature still does not address the problem of efficiently representing
the programs or the expectation transformers; an issue which can be resolved
with integer HPS.\ Indeed, integer HPS generally avoid
exponential explosions of representations by moving the exponential branching
into the assignment of formal variables within a symbolic sum. 
Note also that most problems concerning the study of quantum program resources
are highly undecidable~\cite{AMPP24}.  There is therefore a need to sacrifice
completeness in order to enable automation through the development of
heuristics. From this perspective, an additional novelty of our approach is to
offer this kind of reasoning and analysis tool, in particular, in the concern of
loop termination and invariant analysis.

Finally, another line of research addresses the verification of parametrized quantum programs using proof assistants and automated solvers on
parameterized programs~\cite{ZBSLY23,xu2024automating,hietala2020proving,li2024qafny,cheng2025embedding}. 

In the \emph{classical} probabilistic
setting, similar results and tools for runtime estimation are under active
development~\cite{BCJ+23,NYC+25}. However, as these techniques do not take into
account quantum aspects, they are inspiring but cannot be directly applied.

\section{Hybrid Quantum Programs with Iteration}%
\label{sec:language}
    \subsection{Syntax}%
\label{sec:language-syntax}

We present $\lang$, an imperative hybrid (quantum and classical) programming
language whose syntax is given in \cref{fig:language-syntax}. $\Lang$ includes
primitive expressions, denoted $\programTerm$, which can be booleans $\bool$ or
integers $\integer$. The variables include qubits $\qubit$, bits $\cbit$, and
integers $\cInt$ whose
identifiers (hereafter \emph{addresses}) are drawn from the disjoint sets
$\qbitSet$, $\cbitSet$, and $\cIntSet$ respectively.  We write
$\classAddressSet \equaldef  \cbitSet \cup \cIntSet $ for the classical
addresses and $\addressSet \equaldef \qbitSet \cup \classAddressSet$ for the set
of all the addresses, which we assume to be ordered.  In terms of operations,
$\lang$ supports \emph{initializations}
$\qubitInitProg{\qubit}, \cbitInitProg{\cbit}, \intInitProg{\cInt}$, multi-qubit
\emph{unitary applications} $\unitary(\qubit, \ldots, \qubit)$, \emph{classical
assignments} $\assign{\cInt}{\integer}$ and
$\assign{\cbit}{\bool}$, \emph{measurements} $\measure{\cbit}{\qubit}$ in the
computational basis, as well as classically-controlled conditioning
\emph{if-then-else} and \emph{while} loops. Note that we also admit the
syntactic sugar $\dowhile{\prog}{\bool}$ for the sequence $\prog; \while{\bool}{\prog}$.

\begin{figure}[htpb] 
    \centering
    $\begin{array}{rcl@{\hspace{1em}}l}
                \integer &::=& k \mid \cInt \mid \integer + \integer \mid \integer *
                \integer \mid \integer ^ \integer & \emph{Integer expressions} \\
                \bool &::=& \trueBool \mid \falseBool \mid \cbit \mid \integer
                \leq \integer \mid \integer = \integer \mid \bool \land \bool
                \mid \lnot \bool & \emph{Boolean expressions}
                \\
                \prog &::=& \skipProg \mid \qubitInitProg{\qubit} \mid
                \cbitInitProg{\cbit} \mid \intInitProg{\cInt}  & \emph{Programs}
                \\
                && \mid \unitary(\qubit,\ldots,\qubit) \mid
                \assign{\cInt}{\integer} \mid \assign{\cbit}{\bool} \mid
                \measure{\cbit}{\qubit} \\
                && \mid \sequenceProg{\prog}{\prog} \mid
                \ifthene{\bool}{\prog}{\prog} \mid \while{\bool}{\prog}
            \end{array}$
            \caption{Syntax of $\lang$}%
            \label{fig:language-syntax}
\end{figure}

$\lang$ is flexible on the choice of the supported unitaries, but in this
article, we fix\footnote{Other unitaries can still be treated as black boxes;
see \cref{sec:rus-bernoulli}.} $\unitary \in \{\texttt{CNOT}, \texttt{X},
\texttt{Z}, \texttt{H}\} \cup \{\texttt{R}_k =
\texttt{R}_{\texttt{Z}}(\frac{2\pi}{2^k}) \mid k \in \N\}$ which lends itself
well to symbolic representations in terms of path-sums (\cref{sec:hps}) while
remaining pseudo-universal; i.e., capable of approximating any unitary with
arbitrary precision~\cite{DBE95}.

\paragraph{Well-formedness.}%
\label{sec:language-well-formedness}

The programs of $\lang$ are subject to constraints that ensure their physicality, 
including \emph{memory constraints}: no access to unallocated
memory or double allocation, and \emph{unitarity constraints}: qubits
may not be used more than once in the same unitary application.

\ifmargins%
\marginpar{$\signature$, $\validProg{\signature}{\prog}{\signature'}$,
    $\programSet_\signature$, $\applySig{\prog}{\signature}$, }
\fi

A \emph{signature} $\signature$ is a finite subset of addresses, i.e.,
$\signature \finsubseteq \addressSet$.  The validity of a program, according to
the \emph{signature} $\signature$ describing addresses allocated, can be checked
statically by the judgment  $\validProg{\signature_1}{\prog}{\signature_2}$ read
as ``$\prog$ is valid on states with signature $\signature_1$ and transforms
them into states with signature $\signature_2$'' \appx. It
checks that $\prog$ allocates memory homogeneously in each branch of an
\texttt{if-then-else}, does not allocate any memory in a \texttt{while}, and
that $\qubit_1, \ldots, \qubit_n$ are distinct in any application of
$\unitary(\qubit_1, \ldots, \qubit_n)$.  When
$\validProg{\signature_1}{\prog}{\signature_2}$ holds, the signature
$\signature_2$ is unique. Hence, we write $\applySig{\prog}{\signature_1}
\equaldef \signature_2$. We also define sets of programs valid on $\signature$
by $\programSet_\signature \equaldef \{\prog \mid \exists \signature'
\finsubseteq \addressSet, \validProg{\signature}{\prog}{\signature'}\}$, and the
set of all valid programs by $\programSet \equaldef \bigcup_{\signature
\finsubseteq \addressSet} \programSet_\signature$. We use the notation
$\signature(\programTerm)$ for the set of addresses occurring in $\programTerm$.

\vskip 1em
\noindent \begin{minipage}{0.55\textwidth}
    \begin{example}\label{running:validity}
        Along this article, we will use a simple program,
        \ref{cointoss} in \cref{lst:coin-toss}. It tosses a quantum
        coin until it lands on heads, counting the number of tosses in $\cInt$ to
        illustrate atomic concepts and definitions. For this program, we have
        $\validProg{\emptyset}{\ref{cointoss}}{\signature_{\textsc{CT}}}$ with
    $\signature_{\textsc{CT}} \equaldef \{\qbit, \cbit, \cInt\}$.  \end{example}
\end{minipage}
\hfill
\begin{minipage}{0.43\textwidth}
	\centering
    \vspace{-1em}
	\begin{minipage}{0.7\textwidth}
		\begin{lstlisting}[caption={The \labelthis{cointoss}{\textsc{CoinToss}} program}, label={lst:coin-toss}]
qubit $\qbit$; bit $\cbit$; int x;
do  H($\qbit$); $\cbit$ := measure $\qbit$;
    x := x + 1
while ($\lnot \cbit$)
        \end{lstlisting}
	\end{minipage}
	\vspace{-1.5em}
\end{minipage}

    \subsection{Denotational Semantics}%
\label{sec:language-semantics}

In the quantum computing literature, the semantics of quantum programs is often
given in terms of transformations of density operators over some underlying
Hilbert space, e.g.,~\cite{Y24}. In this section, we present such semantics as
the denotational semantics for \lang{} and discuss the limitations of using
density operator semantics in the context of formal analysis. 

For our work, the Hilbert spaces depend on signatures $\signature$.

\begin{definition}[State space ${\hilbert[\signature]}$]%
    \label{def:basis-signature}
    \ifmargins%
        \marginpar{$\basis(\signature)$, $\hilbert[\signature]$}
    \fi
    Given a signature $\signature$, let its \emph{basis} and \emph{state space} be defined by:  
\[
    \basis(\signature) \equaldef \B^{\signature \cap
    (\qbitSet \cup \cbitSet)} \times \Z^{\signature \cap \cIntSet} \quad
    \text{and} \quad
    \hilbert[\signature] \equaldef {\hilbert_2}^{\otimes \left(\signature \cap
    (\qbitSet \cup \cbitSet)\right)} \otimes {\left(\ell^2(\Z)\right)}^{\otimes \signature \cap
    \cIntSet},
\]
where $\B \equaldef \{0,1\}$, $\hilbert_2 \equaldef \C^2=\mathrm{span}(\ket 0,
\ket 1)$ is the 1-qubit Hilbert space, $X^{\otimes Y} \equaldef
\bigotimes_{\address \in Y}{X_\address}$ is the tensor product of $X$ labeled
by elements of $Y$, and $\ell^2(\Z)$ is the Hilbert space over $\Z$ given
by $\textstyle \ell^2(\Z) \equaldef \left\{\alpha \in \C^\Z \mid \sum_{i \in
\Z} |\alpha(i)|^2 < \infty\right\}$.

For disjoint $Y_1, Y_2 \subseteq \addressSet$, we interpret $X_1^{\otimes Y_1}
\otimes X_2^{\otimes Y_2}$ the same as $X_2^{\otimes Y_2} \otimes X_1^{\otimes
Y_1}$ as $\bigotimes_{\address \in Y_1 \cup Y_2} Z_\address$ where $Z_\address =
X_1$ if $\address \in Y_1$ and $Z_\address = X_2$ if $\address \in Y_2$.
\end{definition}

Since the states include classical parts, they are described by density
operators over $\hilbert[\signature]$ called CQ states~\cite{W13} where the
classical data are encoded as quantum data that are dephased (measured).

\begin{definition}[CQ state]%
    \label{def:cq-state}
    \ifmargins%
        \marginpar{$\mathrm{CQ}(\signature)$, $\signature(\rho)$}
    \fi
    A \emph{CQ state} $\rho$ of signature $\signature$ is a linear operator over
    the space $\hilbert[\signature]$ which is self-adjoint $(\rho =
    \rho^\dagger)$, positive semi-definite $(\forall \ket{\psi} \in
    \hilbert[\signature], \bra{\psi} \rho\, \ket{\psi} \geq 0)$, bounded
    $(\mathrm{tr}(\rho) \leq 1)$, and classically dephased:
            \[
                \rho = \left(\dephase_{\signature \cap \classAddressSet} \otimes
                I_{\signature \cap \qbitSet} \right)(\rho) \quad \text{where}
                \quad
                \dephase_{\signature \cap \classAddressSet}(\rho) \equaldef
            \sum_{\classicalState \in \basis(\signature \cap \classAddressSet)} |\classicalState\rangle \langle \classicalState| \rho
                |\classicalState\rangle \langle \classicalState|
            \]
    with $I_{\signature \cap \qbitSet} \text{ the identity on }
    \hilbert(\signature \cap \qbitSet)$ and $D_{\signature \cap
    \classAddressSet}$ the \emph{dephasing channel} on
    $\hilbert(\signature\cap\classAddressSet)$.  The space of CQ states
    over $\hilbert[\signature]$ is denoted by $\mathrm{CQ}(\signature)$. If a
    CQ state $\rho$ is given with an implicit signature, $\signature(\rho)$
    denotes this signature. Similarly, $\signature(\superop)$ denotes
    the signature of a superoperator $\superop$.
\end{definition}

\subsubsection*{Preliminary constructions.}
Given an assignment $\sigma \in
\basis(\signature)$ with $\signature \supseteq \signature(\programTerm)$, the
evaluation of $\programTerm$ in $\sigma$ is denoted by
$\sem{\programTerm}_\sigma$ while $\sigma[\address \mapsto \programTerm]$ is the
assignment where $\address$ is reassigned to $\sem{\programTerm}_\sigma$.
With that, we define \emph{projections} $P_{\assign{\address}{\programTerm}}$, \emph{filters} $F_{\bool}(\rho)$, and
\emph{extensions} $\widehat {\superop}^\signature$ as follows:

\[
    \displaystyle P_{\assign{\address}{\programTerm}} (\rho) \equaldef
        \sum_{\classicalState \in \basis(\signature(\programTerm) \cup \{
        \address\})}
        |\classicalState[\address \mapsto \programTerm]\rangle \langle \classicalState|\,  \rho \,
        |\classicalState\rangle \langle \classicalState[\address \mapsto
    \programTerm]| \]\[
    \displaystyle F_{\bool}(\rho) \equaldef \sum_{\substack{\classicalState \in
                \basis(\signature(\bool))\\
        \sem{\bool}_\classicalState = 1}}
        | \classicalState \rangle \langle \classicalState |\, \rho\, | \classicalState \rangle \langle
        \classicalState | \quad \text{and} \quad
            \displaystyle \widehat {\superop}^\signature \equaldef I_{\hilbert(\signature \setminus
        \signature(\superop))} \otimes \superop
\]

$P_{\assign{\address}{\programTerm}}$ projects basis vectors onto other basis
vectors,
thus performing an assignment operation $\assign{\address}{\programTerm}$;
$F_\bool$ filters a CQ state $\rho$ leaving only those states that satisfy $\bool$; and
$\widehat {\superop}^\signature$ is the extension of the superoperator
$\superop$ defined over the space $\hilbert[\signature (\superop)]$ to the
space $\hilbert[\signature]$. We will usually leave $\signature$ implicit when
unambiguous. Finally, $\mathcal C_U \equaldef \rho \mapsto U \rho U^\dagger$ is the
superoperator corresponding to the unitary $U$.

\vskip 0.5em
With these notations, the denotational semantics of $\lang$ is defined
in~\cref{fig:standard-semantics} as a map 
\[
    \sem{\cdot} :
    \bigcup_{\signature \finsubseteq \addressSet}\ \bigcup_{\prog \in
    \programSet_{\signature}} \mathrm{CQ}(\signature) \to
    \mathrm{CQ}(\applySig{\prog}{\signature}),
\]

\begin{figure}[h]
    \[
        \begin{array}{rl@{\hspace{1em}}rl}
            \sem{\skipProg} \equaldef& id &
              \sem{\qubitInitProg{\qbit}} \equaldef&
             \rho \mapsto  \rho \otimes \ket[\qbit]{0}\! \bra{0}_\qubit \\[0.5em]
            \sem{\cbitInitProg{\cbit}} \equaldef&
            \rho \mapsto \rho \otimes \ket[\cbit]{0}\! \bra{0}_\cbit &
            \sem{\intInitProg{\cInt}} \equaldef&
            \rho \mapsto \rho \otimes \ket[\cInt]{0}\! \bra{{0}}_\cInt \\[0.5em]
            \sem{\unitary(\bar \qbit)} \equaldef& 
            \widehat{\mathcal C_{U}} &
             \sem{\assign{\cInt}{\integer}} \equaldef&
            \widehat{P_{\assign{\cInt}{\integer}}} \\[0.5em]
            \sem{\assign{\cbit}{\bool}} \equaldef&
            \widehat{P_{\assign{\cbit}{\bool}}} &
           \sem{\measure{\cbit}{\qubit}} \equaldef& 
            \widehat{P_{\assign{\cbit}{\qubit}}} 
        \end{array}
    \]
   \begin{align*}
            \sem{\prog_1;\prog_2} &\equaldef \sem{\prog_2} \circ \sem{\prog_1}
            \\
             \sem{\ifthene{\bool}{\prog_1}{\prog_2}} & \equaldef 
                        \rho \mapsto \sem{\prog_1}\left(\widehat{F_{\bool}}(\rho)\right) + 
                \sem{\prog_2}\left(\widehat{F_{\lnot
                \bool}}(\rho)\right)\\
            \sem{\while{\bool}{\prog}} &\equaldef
                \rho \mapsto \lim_{n \to \infty} \widehat{F_{\lnot \bool}}\left(
                \sem{\ifthene{\bool}{\prog}{\skipProg}} ^n(\rho)\right)
     \end{align*}
    \caption{Denotational Semantics of Hybrid Quantum Programs}%
    \label{fig:standard-semantics}
\end{figure}

\begin{example}%
    \label{running:semantics}
    For the \ref{cointoss}, the state space is
    $\hilbert[\signature_{\textsc{CT}}] = \hilbert(\{\qbit, \cbit, \cInt\}) =
    (\hilbert_{2})_{\qbit} \otimes (\hilbert_{2})_{\cbit} \otimes
    {\ell^2(\Z)}_\cInt$. An example CQ state over
    $\hilbert[\signature_{\textsc{CT}}]$ is the result of 
    applying\ \ref{cointoss} to an empty input state
    $I_\emptyset = (1)\in \hilbert[\emptyset]$, namely, the mixed state
    consisting of all possible integer outcomes $i \in \N$ with probability
    $\frac 1 {2^{i+1}}$; that is, $\sem{\ref{cointoss}}(I_{\emptyset}) = \sum_{i
    \in \N} \frac 1 {2^{i+1}} |i\rangle_\cInt \langle i| \otimes |1\rangle_\cbit
    \langle 1| \otimes |1\rangle_\qbit \langle 1| \in
    \mathrm{CQ}(\signature_{\textsc{CT}})$.
\end{example}

\begin{restatable}[Well-definedness of the denotational semantics]{proposition}{restateWelldefinednessDenotational}%
    \label{lem:semantics-well-defined}
    For all $\signature \finsubseteq \addressSet$ and program $\prog \in
    \programSet_{\signature}$, the following holds:
    \[
        \forall\rho
        \in \mathrm{CQ}(\signature),\; \sem{\prog}(\rho) \in
        \mathrm{CQ}(\applySig{\prog}{\signature}).
    \]
\end{restatable}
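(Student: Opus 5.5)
We prove the statement by structural induction on $\prog$, generalising over $\signature$ and $\rho$, and using the validity judgment $\validProg{\signature}{\prog}{\applySig{\prog}{\signature}}$ to fix the target signature in each case. To streamline the cases, we first establish a few closure facts about $\mathrm{CQ}(\signature)$:
\begin{itemize}
\item it is a convex cone truncated by $\mathrm{tr}\leq 1$;
\item it is closed under sums whose total trace stays at most $1$;
\item it is closed under trace-norm limits of sequences in $\mathrm{CQ}(\signature)$.
\end{itemize}
Closure under limits holds because positivity, self-adjointness and the dephasing identity $\rho = (\dephase \otimes I)(\rho)$ are all preserved under such limits: $\dephase$ is trace-norm continuous, being a sum of orthogonal pinchings.

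For the atomic cases we check each operation in turn.
\begin{itemize}
\item Tensoring with $\ket{0}\!\bra{0}$ on a fresh address preserves positivity and trace. It is also compatible with dephasing, since $\ket{0}\!\bra{0}$ is diagonal.
\item $\widehat{\mathcal C_U}$ acts only on qubit factors. It therefore commutes with $\dephase_{\signature\cap\classAddressSet}\otimes I$, and it preserves positivity and trace.
\item $P_{\assign{\address}{\programTerm}}$ has Kraus operators $\ket{\classicalState[\address\mapsto\programTerm]}\!\bra{\classicalState}$, so it is completely positive. Their sum $\sum_\classicalState \ket{\classicalState}\!\bra{\classicalState}$ is the identity, so it is trace preserving. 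Its output is diagonal in the classical basis of the touched registers, and it preserves the dephasing of the untouched ones. The measurement case is the same argument with the qubit value copied; the result is diagonal in $\cbit$ and still dephased.
\item The filter $F_\bool$ is a sum of orthogonal pinchings of a subset of basis projectors. It is therefore completely positive and trace non-increasing, and it preserves dephasing.
\end{itemize}
The extension $\widehat{\superop}$ of a completely positive, trace non-increasing, dephasing-compatible map is again of this kind. The infinite-dimensional $\ell^2(\Z)$ factors need care here: we interpret every map through its Kraus form and check convergence in trace norm.

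The compound cases are short. Sequencing follows directly from the induction hypothesis applied twice. For the conditional, $\widehat{F_\bool}(\rho)$ and $\widehat{F_{\lnot\bool}}(\rho)$ are CQ states whose traces sum to at most $\mathrm{tr}(\rho)$, so the conclusion follows from the induction hypothesis and closure under sums. We need a strengthened induction hypothesis recording that $\sem{\prog}$ is trace non-increasing and positive. In fact we should carry the whole package through the induction: trace non-increasing, linear, and mapping CQ states to CQ states.

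The main obstacle is the \texttt{while} case, specifically the existence of the limit. Set $\rho_n \equaldef \widehat{F_{\lnot\bool}}(\mathcal T^n(\rho))$ with $\mathcal T = \sem{\ifthene{\bool}{\prog}{\skipProg}}$. We plan to show that $(\rho_n)$ is increasing in the Löwner order. Since $\prog$ allocates nothing in a loop body, the signature stays fixed, and $\mathcal T = \sem{\prog}\circ F_\bool + F_{\lnot\bool}$. From this we get the decomposition
\[
\mathcal T^{n}(\rho)=F_{\lnot\bool}\text{-part}+\text{mass still looping}.
\]
By induction on $n$, $\widehat{F_{\lnot\bool}}(\mathcal T^{n+1}(\rho)) - \widehat{F_{\lnot\bool}}(\mathcal T^{n}(\rho)) = \widehat{F_{\lnot\bool}}\bigl(\sem{\prog}(\widehat{F_\bool}(\mathcal T^{n}\text{-looping part}))\bigr)\succeq 0$. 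This uses the idempotence and orthogonality of $F_\bool$ and $F_{\lnot\bool}$, and the fact that $F_{\lnot\bool}\mathcal T = F_{\lnot\bool}+F_{\lnot\bool}\sem{\prog}F_\bool$.

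The traces are then bounded by $\mathrm{tr}(\rho)\le 1$, by the trace non-increasing induction hypothesis. A Löwner-increasing sequence of positive trace-class operators with bounded trace converges in trace norm, by a monotone convergence (Vigier-type) argument: $\|\rho_m-\rho_n\|_1 = \mathrm{tr}(\rho_m-\rho_n)\to 0$. The limit lies in $\mathrm{CQ}(\applySig{\prog}{\signature})$ by closure under limits, and it is trace non-increasing, which closes the induction.
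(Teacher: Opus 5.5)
Your proof is correct and follows the paper's route: structural induction on $\prog$, with the \texttt{while} case handled by showing that the successive differences $\widehat{F_{\lnot\bool}}(\mathcal T^{n+1}(\rho))-\widehat{F_{\lnot\bool}}(\mathcal T^{n}(\rho))$ are positive semidefinite and that the traces stay bounded by $\mathrm{tr}(\rho)$. Your additions do not change that argument; they make explicit steps the paper leaves unstated. These are: carrying trace non-increase in the induction hypothesis, which the paper uses implicitly both for the trace bound in the conditional case and in the loop; checking positivity and dephasing for the atomic maps; and proving that $\mathrm{CQ}(\signature)$ is closed under trace-norm limits.
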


For the \texttt{while} loop, the existence of the limit without filtering
($F_{\lnot \bool}$) is not guaranteed (e.g., for
$\while{\trueBool}{\assign{\cbit}{\lnot \cbit}}$). The (relatively standard)
approach is to consider the sequence of only the terminating branches ($F_{\lnot
\bool}$). The resulting sequence $\widehat{F_{\lnot \bool}}(
\sembreak{\ifthene{\bool}{\prog}{\skipProg}} ^n(\rho))$ is increasing and
bounded in the trace norm and therefore does \emph{always} converge.  Moreover, the trace of
$\sem{\prog}(\rho)$ is the termination probability of $\prog$ with
$\sem{\prog}(\rho)$ itself being the terminating state. This allows 
reasoning about programs that are not almost surely terminating.

\paragraph{Discussion and limitations.} This denotational semantics gives
meaning to programs, serving as a foundation to verify the soundness of the
logic, but its immediate use for static analysis is not suitable for two main
reasons.

First, in the bounded case, the size of the density operators grows
exponentially with the number of qubits, making the analysis
unfeasible/intractable; in other words, at least as difficult as strong
simulation~\cite{N08}. Secondly, once the unbounded case is considered, it is
generally undecidable to compute the limits, especially over non-trivial spaces
like $\mathrm{CQ}(\signature)$.

As such, there is a need for compact, tractable representations of quantum
states and their transformations that are amenable to static analysis.  In fact,
it is fair to say that the field of static analysis of quantum programs is
almost entirely about the search for the correct, compact abstractions away from
density operators (symbolic execution~\cite{CIN+26,C+21,BLS23},
automata~\cite{ACY+26,ACC+25}, Hoare logics~\cite{FY21}, etc.). So far, however,
it remains impossible to obtain both the level of detail (full state
description) expressible by the support for symbolic approaches and the
unbounded recursion and hybrid features achievable by Hoare logics.  Towards
that end, we introduce our approach based on \emph{hybrid path-sums} (\ihps{}),
following prior hybrid work in \HQbricks~\cite{CIN+26}, whereby states of a
hybrid computer are represented symbolically and evolved by symbolic execution,
allowing us to extract useful information about programs without resorting to
the full and generally infeasible simulation.

\section{Integer HPS}%
\label{sec:hps}

    Our key component to (classically) represent the state
    of a hybrid computer for the purpose of static analysis is our novel 
    \emph{integer HPS} (\ihps{}) representation. Integer HPS are compact
    symbolic representations of a CQ state with a structure appropriate for the
    analysis of hybrid systems. This representation extends the hybrid
    path-sums (\hps{}) introduced in \HQbricks~\cite{CIN+26} to support unbounded \texttt{while}
    loops.\ \hps{} are also themselves extensions of \emph{path-sums}~\cite{A18},
    a discrete version of Feynman's path-integrals~\cite{F48}, to allow for
    hybrid quantum-classical states.  

    \subsection{Basic Definitions and Notation}%
\label{sec:basic-definitions}

In this section, we describe formally the construction of \ihps{}.  First,
let $\boolVarSet$ and $\intVarSet$ be sets of boolean and integer variables,
respectively (denoted $\boolVar$ and $\intVar$), and let $\varSet \equaldef
\boolVarSet \cup \intVarSet$ be the set of \emph{path-variables}.  We emphasize
that $\varSet$ (for \ihps{}) should not be confused with the set of addresses
$\addressSet$, typeset in typewriter font (for programs). Note that, unlike for
addresses, we do not distinguish quantum or classical boolean \emph{variables}.
We note that, besides our proper extensions, some of the more basic \HQbricks{}
constructions are also treated differently; the differences are described in the
comparison paragraph at the end of this section.

To symbolically represent states in the spaces $\hilbert[\signature]$ as the sum
$\sum_{\vec a} \pasum{p}{\ket[\qubit]{\boolTerm_1}\classKet[\cbit]{\boolTerm_2}}{n}$ from
\cref{sub:Overview}, we need
to introduce symbolic terms representing both the basis vectors including
the quantum $\ket[\qubit]{\boolTerm_1}\in \memoryType$ and classical
$\classKet[\cbit]{\boolTerm_2} \in \memoryType$ parts, as well as the terms $p
\in \phaseType$ and $n \in \normType$ used to represent the phase and
normalization factors, respectively, in $n e^{2\pi i p}
\ket[\qubit]{\boolTerm_1}\classKet[\cbit]{\boolTerm_2}$. 

\begin{definition}[\ihps{} components]%
    \label{def:components}
    The \emph{integer}, \emph{boolean}, \emph{phase}, \emph{norm}, and \emph{memory} expressions are defined as follows:
    \[
        \begin{array}{r l l l}
          \intTerm & ::= k \mid \intVar \mid \lift{\boolTerm} \mid \intTerm + \intTerm
    \mid \intTerm \cdot \intTerm \mid \intTerm^\intTerm &   \hspace{1em} &
     \intType \\
    \boolTerm & ::= 0 \mid 1 \mid \boolVar \mid (\intTerm = \intTerm) \mid (\intTerm \leq
    \intTerm) \mid \boolTerm \cdot \boolTerm \mid \boolTerm \oplus
    \boolTerm &  \hspace{1em} & \boolType \\
            p & ::= \boolTerm/2^\intTerm \mid p + p \mid \intTerm \cdot p &
            \hspace{1em} & \phaseType \\
            n & ::= \frac \intTerm {\sqrt{2^\intTerm}} \mid \cos(2\pi p) \mid \sin(2\pi
            p) & \hspace{1em} & \normType \\
               & \phantom{::=} \mid n \cdot n \mid n + n \mid n / n \mid
               \sqrt n \\
        m & ::= \emptyset \mid \ket[\qubit]{\boolTerm} \mid
        {\classKet[\cbit]{\boolTerm}} \mid {\pastKet[\B]{\boolTerm}} \mid
        {\classKet[\cInt]{\intTerm}} \mid
        {\pastKet[\Z]{\intTerm}}
        \mid m \otimes m
          & \hspace{1em} & \memoryType
        \end{array}
    \]
    
    \noindent where $k \in \Z$, $\lift{\boolTerm}$ is the casting of a boolean
    expression in $\boolType$ into an integer  in $\intType$, and $\cbit \in
    \cbitSet$, $\qbit \in \qbitSet$, as well as $\cInt \in \cIntSet$ are
    addresses. We also admit typical definable integer and boolean expressions
    such as $\lnot b \equaldef 1 \oplus b$ or  $1 \leq x < i \equaldef (1 \leq
    x) \cdot (x \leq i) \cdot (1 \oplus (x = i))$ as syntactic sugar. 
\end{definition}

The constructors of terms in $\phaseType$ and $\normType$ allow both exact representations (compared
to floating-point numbers) and expressivity against the pseudo-universal
gate-set (Clifford+$R_k$).  In fact, the $R_k$ gate produces \emph{dyadic}
phases of the form $e^{2\pi i \frac{1}{2^k}}$, and summing such numbers produces
\emph{constructible} norm terms of the form $\sin( 2\pi p)$ and $\cos(2\pi p)$
(which can in turn be rewritten in terms of $+$, $-$, $\cdot$, $/$, and
$\sqrt{-}$ using half-angle formulae).

The \memoryType{} is used to represent a basis vector of the Hilbert space where
$\ket[\qbit]{\boolTerm}$, $\classKet[\cbit]{\boolTerm}$, and
$\classKet[\cInt]{\intTerm}$ represent the states of the qubit $\qubit$,  bit
$\cbit$, and integer $\cInt$ respectively to be understood as symbolic
representations of basis vectors in $\hilbert[\signature]$. These do not
suffice, though, as if we measure qubit $\measure{\cbit}{\qbit}$, then reset
$\assign{\cbit}{0}$, we lose the information ``$\qbit$ was measured/projected''.
To remedy that, we keep a log of past expressions in the form of `past'
classical bits ${\pastKet[\B] \boolTerm}$ and integers $\pastKet[\Z]{\intTerm}$,
essential for the correct symbolic representation, but which do not correspond
to any variable that is accessible to the program.  Memory terms $m_1$ and $m_2$
can be combined into $m_1 m_2 \equaldef m_1 \otimes m_2$, as long as they do not
share addresses. The signature $\signature(m)$ of a memory $m$ is then the set
of (present) addresses of $\addressSet$ appearing in $m$, with past subterms
$\pastKet[\B]{b}$ and $\pastKet[\Z]{i}$ not being excluded (i.e.,
$\signature(\pastKet[\B]{b}) = \signature(\pastKet[\Z]{i}) =\emptyset$).
With these definitions in place, we can now define integer hybrid path-susm (\ihps{}).

\begin{definition}[Integer hybrid path-sums]%
    \label{def:hps}
    \emph{Integer hybrid path-sums} (\ihps{}) are terms defined by:
    \[ \begin{array}{r l l l}
        \h & ::= \pasum{p}{m}{n} \mid \h \psadd \h \mid \h
        \otimes \h \mid \h \sqpsadd \h \mid \bigpsadd_{\var} \h \mid \bigotimes_{\var} \h \mid
        \lim_{\intVar} \h & \hspace{1em}  & \hpsType
    \end{array}\]
    with $p \in \phaseType$, $n \in \normType$, $m\in \memoryType$, $\var \in \varSet$, and $\intVar \in \intVarSet$.  
\end{definition}

The constructors of an \ihps{} correspond to the linear algebra operations needed in
the symbolic execution and analysis of a program. First, with $j$ being the
imaginary unit, the triplet $\pasum{p}{m}{n}$ describes the vector $n e^{2\pi j
p} \cdot m$ which we call a \emph{path}.  Next, $+$ and $\otimes$ correspond to
the usual addition and tensor product of vectors, while $\sqpsadd$ is a form of
`direct sum' which maps two vectors to orthogonal subspaces before adding
them. Finally, the variable binders $\bigpsadd_{\var}$, $\bigotimes_{\var}$, and
$\lim_{\intVar}$ are used to bind variables to perform sums, tensor products,
and limits over the formal variables $\var$ and $\intVar$. Here, $\lim_\intVar$
is a purely syntactic construct; see
\cref{sec:interpretation} for its
interpretation and for questions of convergence. 
Variables not \emph{bound} by $\bigpsadd$, $\bigotimes$, or $\lim$ are
\emph{free} in a term.  An \ihps{} $\h$ is \emph{closed} if it does not have
free variables.  We can define the partial map $\signature(-)$ for the signature
of an \ihps{} below.\footnote{$\signature(\h) = \emptyset$ in $\bigotimes$
    implies purely past memories $\pastKet[\B]{-}$ and $\pastKet[\Z]{-}$, not
empty memories.} When the conditions on the right are not met, $\signature$ is
undefined. The image of $\signature$ consists of all the well-formed \ihps{}.
\[
\begin{array}{rl@{\quad}l}
    \signature(\pasum{p}{m}{n}) &\equaldef \signature(m) & \\
    \signature(\h_1 + \h_2) & \equaldef
    \signature(\h_1) & \text{if } \signature(\h_1) = \signature(\h_2) \\
    \signature(\h_1 \otimes \h_2) &\equaldef \signature(\h_1) \cup
    \signature(\h_2) & \text{if } \signature(\h_1) \cap \signature(\h_2) =
    \emptyset \\
    \signature(\h_1 \oplus \h_2) &\equaldef \signature(\h_1) & \text{if } \signature(\h_1) = \signature(\h_2) \\
    \signature(\sum_{\var} \h) &\equaldef
        \signature(\h) \\
    \signature(\bigotimes_{\var} \h) &\equaldef \emptyset & \text{if }
    \signature(\h) = \emptyset\\
    \signature(\lim_\intVar \h) &\equaldef
        \signature(\h) 
\end{array}
\]

\begin{example}\label{ex:ihps}
    In \ref{cointoss}, a single round of a Hadamard followed by a measurement
    produces the \ihps{} state $\h_1 \equaldef \sum_{c}
    \pasuminline{0}{\ket[\qubit]{\boolVar} \classKet[\cbit]{\boolVar}}{1/
    {\sqrt 2}}$. Meanwhile, the limiting state of\ \ref{cointoss} is given by an
    \ihps{} including an infinite sum over the integer variable $\intVar$:
    \ihps{} $\h_{\textsc{CT-}\infty} \equaldef \bigpsadd_{\intVar}
    \pasuminline{0}{\ket[\qbit]{0}\classKet[\cbit]{0} \classKet[\cInt]{\intVar}}
    {1 / {\sqrt {2^{{\intVar}+1}}}}$. The derivation of $\h_1$ and
    $\h_{\textsc{CT-}\infty}$ will be given in more detail in
    \cref{tab:hoare-logic-derivation}.
\end{example}

\paragraph{Comparison with hybrid path-sums from \HQbricks{}~\cite{CIN+26}.}%
\ihps{} are an extension of the \hps{} from \HQbricks{} for use in the analysis of
unbounded loops. 
\ihps{} support integer variables and terms, infinite domain binders (i.e.,
$\lim_{\intVar}$, $\bigpsadd_{\intVar}$, and $\bigotimes_{\intVar}$), and the
direct sum $\oplus$, while \hps{} do not. An important implication of lacking
infinite domains is that \HQbricks{} can
afford to use only one constructor $\pasum{p}{m}{n}_{\{\boolVar_1, \ldots,
\boolVar_n\}}$ for path-sums representing the explicit sum $\sum_{\boolVar_1,
\ldots, \boolVar_n} \pasum{p}{m}{n}$, in \ihps{} and define $+$ and
$\otimes$ as syntactic sugar. Meanwhile, in \ihps{}, infinite domains
blocks the reduction of both the binary and the variable binder constructors to
syntactic sugar, hence the need for their explicit introduction in \ihps{}.
Finally, we note a couple of notational differences: first, \hps{} uses qubit
arrays, while \ihps{} names every qubit separately for simplicity; second, in
\hps{} a (single) classical bit stores a list of boolean expressions
$\classKet[\cbit]{\boolTerm_1 : \cdots : \boolTerm_n}$ which includes the present
value $\boolTerm_1$ and the log of the history $\boolTerm_2 : \cdots :
\boolTerm_n$. In \ihps{}, this connection between a bit and its history is
relaxed allowing more flexibility in the use of the past values; instead, the
\hps{} expression is interpreted as: $\classKet[\cbit]{\boolTerm_1} \otimes
\pastKet[\B]{\boolTerm_2} \otimes \cdots \otimes \pastKet[\B]{\boolTerm_n}$. For
instance, the result of $H(\qbit); \measure{\cbit}{\qbit}; \assign{\cbit}{0}$
which would be represented in \hps{} as $\pasum{0}{\ket[\qbit]{y}
\classKet[\cbit]{0 : y}}{1}$ is represented in \ihps{} as $\sum_{\boolVar}
\pasum{0}{\ket[\qbit]{\boolVar} \classKet[\cbit]{0} \pastKet[\B]{\boolVar}}{1}$.

    \subsection{Interpretations of \ihps{}}%
\label{sec:interpretation}

Under (free) variable assignment environment, the expressions defined in
\cref{def:components,def:hps} admit interpretations into concrete domains,
denoted $\semFock{T}$ for each type $T$, as follows.
\[
    \begin{array}{r l@{\hspace{2em}}r l@{\hspace{2em}}rl}
        \semFock{\intType} &\equaldef \Z & 
        \semFock{\phaseType} &\equaldef \R &
        \semFock{\memoryType} &\equaldef \fockSpace \\
        \semFock{\boolType} &\equaldef \B & 
        \semFock{\normType} &\equaldef \R & 
        \semFock{\hpsType} &\equaldef \fockSpace \sqcup \{\bot\} 
    \end{array}
\]

Here, the phases and norms evaluate to real numbers,\footnote{Specifically, the
dyadics $\mathbb D = \{\frac i {2^j} : i,j \in \Z\}$ and the constructibles
(generated by $\Q$ and $\sqrt{-}$), respectively.} while the memories evaluate
to basis vectors in the Fock space $\fockSpace$ defined in \cref{def:fock-space}. Fock
spaces are the standard approach to modeling systems of arbitrary
size~\cite{F932}, which we need here to describe past logs of varying lengths.
\begin{definition}[Extended state space]%
    \label{def:fock-space}
    \ifmargins%
        \marginpar{$\fockSpace$,
        $\basis_{\pastSpaceDecor}$, $\fockSpace_{\pastSpaceDecor}$}
    \fi
    The \emph{Fock space}\footnote{A variation of the bosonic/fermionic
        Fock space is used here that does not involve symmetric/anti-symmetric
    operators.}
    $\fockSpace(\signature)$ associated to the signature $\signature$ is given by:
    \[
            \fockSpace(\signature) \equaldef \hilbert[\signature] \otimes
            \fockSpace_{\pastSpaceDecor}
            \quad \text{where} \quad
        \fockSpace_{\pastSpaceDecor} \equaldef \overline{\bigoplus_{i\in \N}
        {\hilbert_2}^{\otimes i}} \otimes \overline {\bigoplus_{i\in \N}
        {\ell^2(\Z)}^{\otimes i}}
    \]
    We also define the total space $\fockSpace \equaldef \bigcup_{\signature
    \finsubseteq \addressSet} \fockSpace(\signature)$, and
    $\basis_{\pastSpaceDecor} \equaldef \mathtt{List}(\B) \times
    \mathtt{List}(\Z)$ the set of basis vectors of $\fockSpace_{\pastSpaceDecor}$.
\end{definition}

The Fock space $\fockSpace(\signature)$ for $\signature$ is precisely the
Hilbert state $\hilbert[\signature]$ of the present state tensored with the Fock
space $\fockSpace_{\pastSpaceDecor}$ containing a log of all past classical
values. 
Fock spaces are \emph{completed} (the overline) as metric
spaces, meaning that they may contain vectors supported on infinitely many basis
vectors as long as their Hilbert norm remains finite. In our case, this means
that we can encode infinite mixes of states with unbounded past logs, which is
essential in the context of \texttt{while} loops.  Finally, note that the total
space $\fockSpace$ is intentionally not made into a vector space itself, as
otherwise we could encode undesirable superpositions of different signatures,
hence no longer being able to decide if a program is well-formed or not.

Variable assignment environments are given as partial maps $\assignmentContext:
(\intVarSet \rightharpoonup \Z) \times (\boolVarSet \rightharpoonup \B)$ for
which we commit an abuse of notation: for a variable $\var$, we write
$\assignmentContext(\var)$ in $\B$ or in $\Z$, when defined, and according to
the type of $\var$. We also write
$\assignmentContext' =
\assignmentContext[\var \mapsto v]$ to denote the environment such that
$\assignmentContext'(\var) = v$ and $\assignmentContext'(\var') =
\assignmentContext(\var')$ for all $\var' \neq \var$\footnote{Using a Kleene
equality so that $\assignmentContext(\var')$ for $\var'\neq \var$ is undefined
if $\assignmentContext(\var')$ is also undefined}. We write
$\semFock[\assignmentContext]{t}$ for the interpretation of a term $t$ under an
environment $\assignmentContext$ containing
all its free variables
(i.e., $\mathrm{FV}(t) \subseteq
\mathrm{dom}(\assignmentContext)\subseteq \varSet$), with $\semFock[\assignmentContext]{t} \in
\semFock{T}$, when $t$ is of type $T$. If $t$ is a closed term (no free
variables), we write $\semFock{t} \equaldef
\semFock{t}_\emptyset$. These interpretations are defined by structural
induction on $t$. For $\intType$, $\boolType$, $\phaseType$, and
$\normType$, they are as expected ($\semFock[\assignmentContext]{\intTerm_1 + \intTerm_2} =
\semFock[\assignmentContext]{\intTerm_1} +
\semFock[\assignmentContext]{\intTerm_2}$) \appx. More notable is the
interpretation\footnote{In \HQbricks~\cite{CIN+26}, the analog of $\semFock{\h}$ is
$\mathcal V(\h)$.} $\semFock[\assignmentContext]{\h}$
of an \ihps{} $\h$ as a vector of $\fockSpace(\signature(\h))$ where, $\mathcal
F(\signature(\h))$ is the Fock space from \cref{def:fock-space} and $\bot$ is
the image of an $\h$ containing $\sum$, $\lim$, or $\bigotimes$ that do not
converge. In practice, the soundness of the logic
(\cref{thm:soundness}) and the convergence of the denotational semantics
(\cref{lem:semantics-well-defined}) ensure that only convergent
sequences appear from program analysis.

\begin{definition}[\ihps{} interpretation]%
    \label{def:hps-interpretation}
    \ifmargins%
        \marginpar{$\semFock{\h}$}
    \fi
    An \ihps{} $\h$ is interpreted as an element of
    $\fockSpace(\signature(\h)) \cup \{\bot\}$
\[
    \begin{array}{rl@{\quad}rl}
        \semFock[\assignmentContext]{\pasum{p}{m}{n}} &\equaldef \semFock[\assignmentContext]{n} e^{2\pi i
        \semFock[\assignmentContext]{p}} \semFock[\assignmentContext]{m} &\textstyle \semFock[\assignmentContext]{\bigpsadd_{\var} \h} &\equaldef \textstyle\sum_{\varValue \in \mathrm{dom}(\var)} \semFock{\h}_{\assignmentContext[\var
        \mapsto \varValue]} \\
        \semFock[\assignmentContext]{\h_1 \psadd \h_2} & \equaldef \semFock[\assignmentContext]{\h_1} +
        \semFock[\assignmentContext]{\h_2} &\textstyle \semFock[\assignmentContext]{\bigotimes_{\var} \h} &\textstyle \equaldef \bigotimes_{\varValue \in \mathrm{dom}(\var)} \semFock{\h}_{\assignmentContext[\var
            \mapsto \varValue]} \\
        \semFock[\assignmentContext]{\h_1 \sqpsadd \h_2} & \equaldef
        \semFock[\assignmentContext]{\h_1} \oplus
        \semFock[\assignmentContext]{\h_2} & \textstyle \semFock[\assignmentContext]{\lim_\intVar \h} &\textstyle
                \equaldef \lim_{\intValue \to \infty}
                \semFock{\h}_{\assignmentContext[\intVar \mapsto \intValue]}
 \\
        \semFock[\assignmentContext]{\h_1 \otimes \h_2} & \equaldef \semFock[\assignmentContext]{\h_1} \otimes
        \semFock[\assignmentContext]{\h_2}\\
                            \end{array}
\]
where $v_1 \oplus v_2 \equaldef \pastKet[\B]{0} \otimes v_1 + \pastKet[\B]{1}
\otimes v_2$ and $\mathrm{dom}(\intVar) = \N$ for integers\footnote{We do not sum over
$\Z$; only over $\N$.} and $\mathrm{dom}(\boolVar)=\B$ for booleans. When
$\sum$, $\bigotimes$, or $\lim$ do not converge in the Hilbert norm on
$\fockSpace(\signature)$, we write $\semFock[\assignmentContext]{\h} = \bot$, an
absorbing element for all constructors. When $\assignmentContext = \emptyset$,
we write $\semFock{\h} \equaldef \semFock[\emptyset]{\h}$. 
\end{definition}

\begin{example}%
    \label{running:bigpsadd}
    The states $\h_1$ and $\h_{\textsc{CT-}\infty}$ from \cref{ex:ihps} are
    interpreted in $\fockSpace$ as $\semFock{\h_1} = \frac 1 {\sqrt 2}
    \ket[\qbit]{0} \classKet[\cbit]{0} + \frac 1 {\sqrt 2} \ket[\qbit]{1}
    \classKet[\cbit]{1}$ and $\semFock{\h_{\textsc{CT-}\infty}} =
    \sum_{{\intVar}=0}^\infty \frac 1 {\sqrt {2^{{\intVar}+1}}} e^{2\pi i \cdot
    0} \ket[\qbit]{0} \classKet[\cbit]{0} \classKet[\cInt]{\intVar} \in \mathcal
    F(\signature_{\textsc{CT}}) \subseteq \fockSpace$.
\end{example}

    \subsection{CQ Interpretations and Observable Information}%
\label{sub:Observable information}

The interpretation $\semFock[]{\h}$ of a closed \ihps{} $\h$ contains more
information than is accessible via actual physical observations without
knowledge of the system's past or of the phases that are fully determined by
the classical parts. Alternatively, we can extract exactly and only the
observable information of $\h$ as $\semRho{\h}$, the corresponding CQ state
(\cref{def:cq-state}). To obtain $\semRho{\h}$, start by forming
$\ketbra{\semFock{\h}}{\semFock{\h}}$, then drop the past (using a partial
trace) and erase unobservable phases (using a dephasing channel) as described in
\cref{def:density-operator-interpretation}.

\begin{definition}[CQ state interpretation]%
    \label{def:density-operator-interpretation}
    \ifmargins%
        \marginpar{$\semRho{\h}$, $\|\h\|$}
    \fi
    Let $\h \in \hpsType$, and $\assignmentContext$ be an environment with domain
    $\dom{\assignmentContext} \supseteq \freeVars{\h}$. If
    $\semFock[\assignmentContext]{\h} \neq \bot$, the \emph{CQ state
    interpretation} of $\h$ under $\assignmentContext$ is given by
    \[
        \semRho[\assignmentContext]{\h} \equaldef \left(\widehat{\dephase_{\signature(\h) \cap
            \classAddressSet}} \right) (\mathrm{tr}_{\fockSpace_{\pastSpaceDecor}} (
            |\semFock[\assignmentContext]{\h}\rangle
            \langle\semFock[\assignmentContext]{\h}|)) \in
            \mathrm{CQ}(\signature(\h)),
    \]
    
    \noindent where $\mathrm{tr}_{\fockSpace_{\pastSpaceDecor}}$ is the partial
    trace over $\fockSpace_{\pastSpaceDecor}$, and
    $\widehat{\dephase_{\signature(\h) \cap \classAddressSet}}$ is the dephasing
    channel of \cref{def:cq-state}. As with $\semFock{\h}$, we write
    $\semRho{\h}$ when $\h$ is closed and $\Gamma = \emptyset$. Note,
    $\semRho{\h}$ is the \ihps{} analog of \HQbricks' $\mathcal{DO}(\h)$.
\end{definition}
\begin{example}[CQ interpretation]%
    \label{ex:density-operator-interpretation}
    Consider $\h = \sum_{\boolVar_2} \sum_{\boolVar_1} \pasuminline{\frac{\boolVar_1
    \boolVar_2}{2}}{\ket[\qbit]{\boolVar_2} \pastKet[\B]{\boolVar_1}}
    {\frac 1 2}$, which is a mixture of $\ket +$ and $\ket -$. While the Fock
    interpretation retains the decomposition $\semFock{\h} = \frac 1 2 \ket +
    \otimes{\pastKet[\B]{0}} + \frac 1 2 \ket - \otimes{\pastKet[\B]{1}}$, the
    CQ interpretation reflects only the observable mixture: $\semRho{\h} =
    \frac 1 {\sqrt 2} \ketbra{+}{+}_\qbit + \frac 1 {\sqrt 2}
    \ketbra{-}{-}_\qbit = \frac 1 2 I_\qbit$.
\end{example}

The CQ state interpretation naturally introduces a notion of equivalence between
\ihps{} representing the same CQ state.

\begin{definition}[Equivalence]%
    \label{def:equivalence-hps}
    An \emph{equivalence statement} is a pair
    $\h_1 \equiv \h_2$ of \ihps{} $\h_1, \h_2 \in \hpsType$. Let $\assignmentContext$ be an assignment
    environment over $\freeVars{\h_1} \cup \freeVars{\h_2}$. We say that
    $\assignmentContext$ \emph{models} $\h_1 \equiv \h_2$ iff
    $\semRho[\assignmentContext]{\h_1} = \semRho[\assignmentContext]{\h_2}$,
    denoted $\assignmentContext \models \h_1 \equiv \h_2$. The equivalence
    $\h_1 \equiv \h_2$ is \emph{valid} if $\assignmentContext \models \h_1
    \equiv \h_2$ for all assignment environments $\assignmentContext$ over
    $\freeVars{\h_1} \cup \freeVars{\h_2}$, denoted $\models \h_1 \equiv \h_2$.
    In this case, we also say that $\h_1$ and $\h_2$ are \emph{equivalent}.
\end{definition}

\begin{example}%
    \label{ex:equivalence}
We have $\models \sum_\boolVar \pasuminline{\frac \boolVar 2}{\ket[\qbit]{\boolVar}
\pastKet[\B]{\boolVar}}{\frac 1 {\sqrt 2}} \equiv
    \sum_{\boolVar_1, \boolVar_2} \pasuminline{\frac
    {\boolVar_1\boolVar_2}{2}}{\ket[\qbit]{\boolVar_2} \pastKet[\B]{\boolVar_1}}{\frac 1
{\sqrt 2}}$, as both represent the maximally mixed state $\frac 1 2 I_\qbit$.
\end{example}

The last definition we add is that of the norm of an \ihps{}, which serves to
compute the probabilities of certain events such as termination or an address
containing a certain value.
\begin{definition}[\ihps{} norm]%
    \label{def:hps-norm} 
    \ifmargins%
        \marginpar{$\|\h\|$}
    \fi
    Given a closed $\h \in \hpsType$ such that $\semFock{\h} \neq \bot$, the
    \emph{norm} of $\h$ is $\|\h\| \equaldef \sqrt{\mathrm{tr}(\semRho{\h})}$;
    equivalently, $\|\h\| = \|\semFock{\h}\|$.
\end{definition}
\begin{example}%
    \label{running:hps-norm}
    $\|\h_{\textsc{CT-}\infty}\| =
    \sqrt{\mathrm{tr}(\semRho{\h_{\textsc{CT-}\infty}})} =
    \sum_{{\intVar}=0}^\infty \frac 1 {2^{{\intVar}+1}}
    = 1$, the termination probability of \ref{cointoss}.
\end{example}

\section{\ihps{}-based Symbolic Analysis}\label{sec:operational-semantics}
The \ihps{}-based framework we propose relies on two main pillars: (i) a \emph{quantum
Hoare logic} with \ihps{} used as predicates for pre- and
post-conditions, (ii) an \emph{equational theory} allowing rewriting \ihps{} from
one form to another. The strategy is therefore to use the logic to perform a
forward-directed symbolic execution of the program while substituting one \ihps{}
with another when needed for the analysis of loop invariants and the simplification of specifications. Overall, this
gives the logic a more operational, automatable flavor, highlighted notably by
the implementation in \cref{sec:implementation}.

In this section, we first lay out the foundations of the logic
(\cref{sub:operational-semantics}), then those of the
equational theory (\cref{sub:equational-theory}), and
prove their soundness and (partial) adequacy (\cref{sub:soundness-and-adequacy})

    \subsection{Hoare Logic}%
\label{sub:operational-semantics}

\ihps{} are used to provide a Hoare logic for $\lang$ symbolically 
representing program execution.  This logic is
non-branching under program evolution. This allows for a symbolic and compact
representation of the program structure, which avoids growing 
exponentially in the number of branching and relies on \ihps{} invariants to
capture the infinite branchings of loops.

For $\h, \h' \in \hpsType$ and $\prog \in \programSet$, when
$\validProg{\signature(\h)}{\prog}{\signature(\h')}$, we can write the Hoare
triple $\hoare{\h}{\prog}{\h'}$ to denote that $\prog$ transforms $\h$ into
$\h'$. A general triplet will be denoted by $\hoareVar$ for `specification'.
We also define the logical context $\logicContext$ as a set of Hoare triples and
write judgments of the form
\[
    \judgement{\logicContext}{\equivContext}{\hoare{\h}{\prog}{\h'}}
\]
to denote that $\hoare{\h}{\prog}{\h'}$ can be derived from $\logicContext$
using the logic rules. When $\logicContext = \emptyset$,
we simply omit writing it as in $\judgement{}{}{\hoare{\h}{\prog}{\h'}}$. The
\ihps{} appearing in the Hoare triples are subject to substitution by equivalent
ones (\cref{def:equivalence-hps}). As such, we assume a sound equational theory $\equivvdash$ for \ihps{}
(\cref{sub:equational-theory}).)

The rules for deriving judgments of our Hoare logic are given in \cref{fig:hoare-logic}.
The logic rules use auxiliary operations on \ihps{} as follows. The assignment operators
$\h[\address \ot \programTerm]$ assigns to $\address$ the expression to which
$\programTerm$ evaluates in $\h$ while prepending the previously held expression
to the past $\pastKet -$; filters $\bool * \h$ select only those paths that
satisfy $\bool$ from $\h$; $\applyU{\unitary(\bar \qbit)}(\h)$ applies the
unitary $\unitary$ to the qubits $\bar \qbit$ in $\h$.  These operators have
very technical definitions \appx; we also refer the reader to
\cref{running:hoare-logic} for illustration and intuition.
Finally, $\h[\intVar+1/\intVar]$ is
the substitution of the free occurrences of $\intVar$ in $\h$ by $\intVar+1$, and
$\h \otimes m \equaldef \h \otimes \pasuminline{0}{m}{1}$.

\begin{figure}[t]
    \begin{prooftree}
        \AxiomC{$\hoareVar \in \logicContext$}
        \RightLabel{\labelrule{rule:ax}{$\textsc{Ax}$}}
        \UnaryInfC{$\judgement{\logicContext}{\equivContext}{\hoareVar}$}
        \DisplayProof\hskip 1em
        \AxiomC{$\phantom{\Delta}$}
        \RightLabel{\labelrule{rule:skip}{$\textsc{Skip}$}}
        \UnaryInfC{$\judgement{\logicContext}{\equivContext}{\hoare{\h}{\skipProg}{\h}}$}
        \DisplayProof\hskip 1em
        \AxiomC{$\phantom{\Delta}$}
        \RightLabel{\labelrule{rule:qinit}{$\textsc{QInit}$}}
        \UnaryInfC{$\judgement{\logicContext}{\equivContext}{\hoare{\h}{\qubitInitProg{\qbit}}{\h \otimes
        \ket[\qbit]{0}}}$}
        \DisplayProof\vskip 1em
        \AxiomC{}
        \RightLabel{\labelrule{rule:cinit}{$\textsc{CInit}$}}
        \UnaryInfC{$\judgement{\logicContext}{\equivContext}{\hoare{\h}{\cbitInitProg{\cbit}}{\h \otimes
        \classKet[\cbit]{0}}}$}
        \DisplayProof\hskip 1em
        \AxiomC{}
        \RightLabel{\labelrule{rule:intinit}{$\textsc{IntInit}$}}
        \UnaryInfC{$\judgement{\logicContext}{\equivContext}{\hoare{\h}{\intInitProg{\cInt}}{\h \otimes
        \classKet[\cInt]{0}}}$}
        \DisplayProof\vskip 1em
        \AxiomC{}
        \RightLabel{\labelrule{rule:unitary}{$\textsc{Unitary}$}}
        \UnaryInfC{$\judgement{\logicContext}{\equivContext}{\hoare{\h}{\unitary(\bar \qbit)}{
            \applyU{\unitary(\bar \qbit)}(\h)}}$}
        \DisplayProof\vskip 1em
        \AxiomC{}
        \RightLabel{\labelrule{rule:assign-int}{$\textsc{Assign}_\integer$}}
        \UnaryInfC{$\judgement{\logicContext}{\equivContext}{\hoare{\h}{\assign{\cInt}{\integer}}{\h[\cInt \ot \integer]}}$}
        \DisplayProof\hskip 1em
        \AxiomC{}
        \RightLabel{\labelrule{rule:assign-bool}{$\textsc{Assign}_\bool$}}
        \UnaryInfC{$\judgement{\logicContext}{\equivContext}{\hoare{\h}{\assign{\cbit}{\bool}}{\h[\cbit \ot \bool]}}$}
        \DisplayProof\vskip 1em
        \AxiomC{}
        \RightLabel{\labelrule{rule:measure}{$\textsc{Measure}$}}
        \UnaryInfC{$\judgement{\logicContext}{\equivContext}{\hoare{\h}{\measure{\cbit}{\qbit}}{\h[\cbit \ot \qbit]}} $}
        \DisplayProof\vskip 1em
        \AxiomC{$\judgement{\logicContext}{\equivContext}{\hoare{\h_1}{\prog_1}{\h_2}}$}
        \AxiomC{$\judgement{\logicContext}{\equivContext}{\hoare{\h_2}{\prog_2}{\h_3}}$}
        \RightLabel{\labelrule{rule:seq}{$\textsc{Seq}$}}
        \BinaryInfC{$\judgement{\logicContext}{\equivContext}{\hoare{\h_1}{\prog_1 ; \prog_2}{\h_3}}$}
        \DisplayProof\vskip 1em
        \AxiomC{$\judgement{\logicContext}{\equivContext}{\hoare{\bool * \h}{\prog_1}{\h_1}}$}
        \AxiomC{$\judgement{\logicContext}{\equivContext}{\hoare{(1 \oplus \bool) * \h}{\prog_2}{\h_2}}$}
        \RightLabel{\labelrule{rule:if}{$\textsc{If}$}}
        \BinaryInfC{$\judgement{\logicContext}{\equivContext}{\hoare{\h}{\ifthene{\bool}{\prog_1}{\prog_2}}{\h_1 \sqpsadd \h_2}}$}
        \DisplayProof\vskip 1em
        \AxiomC{$\judgement{\logicContext}{\equivContext}{\hoare{\h[{\intVar}]}
            {\ifthene{\bool}{\prog}{\skipProg}}{\h[{\intVar}+1/{\intVar}]}}$}
        \RightLabel{\labelrule{rule:while}{$\textsc{While}$}}
        \UnaryInfC{$\judgement{\logicContext}{\equivContext}{\hoare{\h[0/{\intVar}]}{\while{\bool}{\prog}}{\displaystyle
        \lim_{{\intVar}} \left((1 \oplus \bool) * \h[\intVar]\right)}}$}
        \DisplayProof\vskip 1em
        \AxiomC{$\equivvdash \h_1' \equiv \h_1$}
        \AxiomC{$\judgement{\logicContext}{\equivContext}{\hoare{\h_1}{\prog}{\h_2}}$}
        \AxiomC{$\equivvdash \h_2 \equiv \h_2'$}
        \RightLabel{\labelrule{rule:equiv}{$\textsc{Equiv}$}}
        \TrinaryInfC{$\judgement{\logicContext}{\equivContext}{\hoare{\h_1'}{\prog}{\h_2'}}$}
        \end{prooftree}
        \caption{Hoare logic rules for \ihps{}.}\label{fig:hoare-logic}
\end{figure}

\subsubsection*{Intuition and discussion.}
The logic is non-branching: \emph{quantum superposition} with the Hadamard
gate $H$ are introduced by the addition of a path variable, the \emph{classical
nondeterminism} of measurement merely relabels qubits (see
\cref{running:hoare-logic}), and \emph{conditionals} (see
\cref{fig:hoare-logic}) are encoded in the same \ihps{}.
The logic rules include monoidal rules (\ref{rule:skip},\ \ref{rule:seq}),  and
a substitution rule (\ref{rule:equiv}) that allows us to
substitute equivalent \ihps{} with each other, according to an equational theory
$\equivvdash$ (see \cref{sub:equational-theory}). They also include the
application of unitaries (\ref{rule:unitary}) such as the $X$ and
$H$ gates described in \cref{sub:Overview}. Next, initializations
(\ref{rule:qinit},\ \ref{rule:cinit},\ \ref{rule:intinit}) are handled by
tensoring extra subsystems in the zero state. On the more nuanced side, the
assignments\ \ref{rule:assign-bool} and\ \ref{rule:assign-int} and the
measurement\ \ref{rule:measure} additionally tensor the previously held boolean
or integer expressions to the history of the \ihps{}. Finally, classical control
(\ref{rule:if},\ \ref{rule:while}) uses the filtering operation $\bool * \h$ to
select the relevant paths of the \ihps{} according to the condition $\bool$ and
applies the corresponding branch or loop body accordingly. In\ \ref{rule:if},
using $\sqpsadd$ instead of $\psadd$ ensures the orthogonality of the two
branches is preserved despite substitutions with\ \ref{rule:equiv}.
Finally, the\ \ref{rule:while} rule uses a loop invariant $\h$ in the form of an
\ihps{} with a free integer variable ${\intVar}$ representing the iteration
number of the loop. The limiting behavior of the loop is then captured by
filtering out the non-exiting states and pushing ${\intVar}$ to
infinity.

\begin{example}%
    \label{running:hoare-logic}
    We illustrate our logic in \cref{tab:hoare-logic-derivation} by applying the
    first iteration of~\ref{cointoss}, excluding the counter $\cInt$ for
    simplicity.  We can see in \cref{tab:hoare-logic-derivation} the use of
    $\applyU{-}$ on row \texttt{2} as well as projections $\h_2[\cbit \ot
    \qbit]$ on row \texttt{3} to model measurement. To illustrate the
    functioning of the filtering $\bool * \h$ in the logic rule \ref{rule:if}, consider
    computing $\cbit * \h_3$ to select the branch where $\cbit$ is $1$ by
    multiplying the norm by a factor of $\boolVar$:
    \[
        \cbit * \h_3 = \bigpsadd_{\boolVar} \pasum{0}{\ket[\qbit]{\boolVar}
        \classKet[\cbit]{1}}{\boolVar \frac 1 {\sqrt 2}} \equiv
        \pasum{0}{\ket[\qbit]{1} \classKet[\cbit]{1}}{\frac 1
        {\sqrt 2}}.
    \]

    \begin{table}[htpb]
        \centering
        \caption{The derivation of the first iteration of~\ref{cointoss} in our
        logic.}\label{tab:hoare-logic-derivation}
        \[
            \begin{array}{c@{\;}|c@{\;}|c|@{\;}l|}
                \cline{2-4}
                \text{} & \text{Step}  & \text{State ($\qbit,
                \cbit$)} & \multicolumn{1}{c|}{\text{Hybrid path-sum}} \\
                \hline
                \color{red}\texttt{1} & \text{Initialization}& \ket{0}, 0&
                \rule{0pt}{3ex}\rule[-2.2ex]{0pt}{0pt}\;\h_0 \equaldef \pasum{0}{\ket[\qbit]{0}\classKet[\cbit]{0}}{1} \\
                \hline
                                      &&& \rule{0pt}{2.5ex}\;\h_2 \equaldef \applyU{H(\qbit)}(\h_0) \\[3pt]
                \raisebox{1em}{\color{red}\texttt{2}} & \mathtt{H}(\qbit) & \frac{1}{\sqrt 2}
                \left(\ket{0} + \ket{1} \right), 0&
                \rule[-2.2ex]{0pt}{0pt}\phantom{\h_2} \;= \bigpsadd_{\boolVar} \pasum{0}{\ket[\qbit]{\boolVar}\classKet[\cbit]{0}}{\frac 1 {\sqrt 2}} \\
                \hline
                                                  &&& \rule{0pt}{2.5ex}\;\h_3 \equaldef \h_2[\cbit \ot \qbit] \\
                \raisebox{1em}{\color{red}\texttt{3}} & \measure{\cbit}{\qbit} &
                \begin{array}{l@{\quad}l}
                    \rule{0pt}{1.2em} 
                    |0\rangle,0 & \text{with probability } \frac 1 2 \\[3pt]
                    |1\rangle,1 & \text{with probability } \frac 1 2 \\
                    \vspace{-0.8em}
                \end{array} & 
                \phantom{\h_3} \;= \bigpsadd_{\boolVar} \pasum{0}{\ket[\qbit]{\boolVar} {\left[\boolVar\right]}_{\cbit}}{\frac 1 {\sqrt 2}} \\
                \hline
            \end{array}
        \]
    \end{table}

\end{example}

\paragraph{Reasoning modulo theory.}
The problems of checking the equivalence
of two \ihps{} and of checking the validity of a Hoare triple are undecidable (see \cref{thm:undecidability}).
To overcome this, the framework allows the user
to assume certain Hoare triples
$\logicContext$ as axioms and continue using the logic to derive the desired
properties.  This
technique also allows the framework to be used as-is with extensions
of $\lang$ to new unitaries or other black-box operations, as long as their
semantics are provided as Hoare triple axioms. This is illustrated by the
Quantum Bernoulli Factory (QBF) in \cref{sec:rus-bernoulli} where the \ihps{}
semantics of the unitary $U_p = \sqrt{p} I + \sqrt{1-p} X$, which is not part of
Clifford+$R_k$, is given as an axiom of the form $\hoare{h_0}{U_p(\qbit)}{h_1}
\in \logicContext$.
This allows us to analyze the QBF despite the language not technically including
$U_p$.

\paragraph{Semantics.}
The semantics of the judgments is as follows. Let $\assignmentContext$ be a
variable assignment environment over the set $\freeVars{\h} \cup \freeVars{\h'}$ of
free variables occurring in $\h$ and $\h'$. Then, $\assignmentContext$
\emph{models} a triple $\hoare{\h}{\prog}{\h'}$ denoted $\assignmentContext
\models \hoare{\h}{\prog}{\h'}$ iff $\sem{\prog}(\semRho[\assignmentContext]{ \h
}) = \semRho[\assignmentContext]{\h' }$.  Similarly, $\assignmentContext$
\emph{models} $\judgement{\logicContext}{\equivContext}{\hoareVar}$ denoted
$\assignmentContext \models \judgement{\logicContext}{\equivContext}{\hoareVar}$ iff
$\assignmentContext \models \hoareVar$.  The judgment
$\judgement{\logicContext}{\equivContext}{\hoare{\h}{\prog}{\h'}}$ is
\emph{valid} iff $\assignmentContext \models
\judgement{\logicContext}{\equivContext}{\hoare{\h}{\prog}{\h'}}$ for all
$\assignmentContext$ over $\freeVars{\h} \cup \freeVars{\h'}$. In this case, we
write $\semJudgement{\logicContext}{\equivContext}{\hoare{\h}{\prog}{\h'}}$.

\paragraph{The question of decidability.} The problem of checking the validity
of a triple $\hoare{\h}{\prog}{\h'}$ is undecidable, as described in
\cref{thm:undecidability} below. Specifically, it is $\Pi_1^0$-hard; that is, the
problem of deciding the validity of formulae of the form $\forall \intVar_1,
\ldots, \forall \intVar_n \psi(\intVar_1, \ldots, \intVar_n)$, where $\intVar_1,
\ldots, \intVar_n$ are integer variables and $\psi$ is a quantifier-free
formula, can be reduced to the problem of checking the validity of some Hoare
triple $\hoare{\h_1}{\prog}{\h_2}$ \appx.  This is expected given the expressivity of the language, specifically,
the existence of unbounded loops.

\begin{restatable}[Undecidability]{theorem}{restateUndecidability}%
    \label{thm:undecidability}
    The following problems are $\Pi_1^0$-hard, for \ihps{} terms $\h_1$ and
    $\h_2$:
    \begin{enumerate}
        \item Checking whether $\models \h_1 \equiv \h_2$.
        \item Checking whether $\|\h_1\| = \|\h_2\|$.
        \item Checking whether $\models \hoare{\h_1}{\prog}{\h_2}$ for a given
            program $\prog$.
    \end{enumerate}
    The problems remain $\Pi_1^0$-hard, even when $\h_1$ and $\h_2$ are closed terms.
\end{restatable}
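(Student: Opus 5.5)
We reduce from Hilbert's tenth problem, whose complement is $\Pi_1^0$-complete: deciding whether a polynomial $P \in \Z[\intVar_1,\ldots,\intVar_n]$ has \emph{no} root in $\N^n$ is $\Pi_1^0$-hard (MRDP theorem). More generally, any $\Pi_1^0$ formula $\forall \vec\intVar\,\psi(\vec\intVar)$ with $\psi$ quantifier-free over $+,\cdot,=,\leq$ is directly expressible, since $\boolType$ contains $(\intTerm=\intTerm)$, $(\intTerm\leq\intTerm)$, products and $\oplus$. So we fix such a $\psi$ and let $\boolTerm_\psi$ be the boolean expression encoding $\lnot\psi$. Our aim is to build closed \ihps{} whose interpretation is zero exactly when $\forall\vec\intVar\,\psi$ holds.

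\textbf{Norm problem (2).} We define
\[
    \h_\psi \equaldef \bigpsadd_{\intVar_1}\cdots\bigpsadd_{\intVar_n}\pasum{0}{\pastKet[\Z]{\intVar_1}\otimes\cdots\otimes\pastKet[\Z]{\intVar_n}}{\lift{\boolTerm_\psi}\cdot \textstyle\prod_k \frac{1}{\sqrt{2^{\intVar_k+1}}}},
\]
with the weights written as norm terms $\frac{\intTerm}{\sqrt{2^{\intTerm}}}$ and products. The past-memory tensors make distinct paths orthogonal, so the sum converges in the Hilbert norm (the coefficients are square-summable). Hence $\semFock{\h_\psi}\neq\bot$, and
\[
    \|\h_\psi\|^2=\sum_{\vec v}[\lnot\psi(\vec v)]\prod_k 2^{-(v_k+1)}.
\]
This is $0$ iff $\forall\vec v\,\psi(\vec v)$. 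Taking $\h_2\equaldef\pasum{0}{\emptyset}{0}$, we get $\|\h_\psi\|=\|\h_2\|$ iff the formula holds, which gives (2).

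\textbf{Equivalence and Hoare problems (1), (3).} For (1), we note that $\semRho{\h_\psi}$ is a scalar $\|\h_\psi\|^2$ on the empty signature. Therefore $\models\h_\psi\equiv\h_2$ iff the norms agree, and (1) follows from the same construction. For (3), we take $\prog\equaldef\skipProg$. Since $\sem{\skipProg}=id$, the triple $\hoare{\h_\psi}{\skipProg}{\h_2}$ is valid iff $\semRho{\h_\psi}=\semRho{\h_2}$, which reduces (3) to (1). All terms are closed, so hardness persists for closed terms.

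\textbf{Main obstacle.} The delicate point is making sure the constructions are legitimate for the statement. First, the integer weights must be expressible: $2^{-(\intVar+1)/2}$ must be written with the $\normType$ grammar, for instance as $\frac{1}{\sqrt{2^{\intVar+1}}}$ multiplied by $\lift{\boolTerm_\psi}$ via norm products. Second, $\bot$ must be excluded, i.e., convergence must hold for \emph{every} $\psi$; this is why we use orthogonal past logs together with summable weights. If needed, we also check that $\signature(\h_\psi)=\emptyset$ is well formed under the signature rules, since past memories are allowed. Finally, the reduction itself is computable, being a syntactic translation $\psi\mapsto\h_\psi$, so it is a valid many-one reduction from a $\Pi_1^0$-complete set.
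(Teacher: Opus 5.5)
Your proposal is correct and follows essentially the paper's route. Both encode $\lnot\psi$ as a $\{0,1\}$ weight in a closed, square-summable sum over $\N^n$, compare it with a zero path-sum so that equivalence (and equality of norms) holds iff $\forall\vec\intVar\,\psi$, and reduce (3) to (1) through $\skipProg$. The one difference is that you store the summed integers in past registers $\pastKet[\Z]{\intVar_k}$ rather than in present addresses $\classKet[\cInt_k]{\intVar_k}$: this keeps both signatures empty, so the comparison with $\pasum{0}{\emptyset}{0}$ and the $\skipProg$ triple are well-formed, a signature mismatch the paper's own version glosses over.
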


The practical implication of this is that the
logic is not complete and can never be fully automated. As such, we rely on some
user input in terms of loop invariants and certain equivalence checks to
complete the proofs.


\subsection{Equational Theory}\label{sub:equational-theory}
In contrast to the state-of-the-art quantum Hoare logics, our logic relies on the \ihps{} representation, which is amenable to effective and tractable equational theories. In this section, we elaborate on the choice of said theory.

Substituting an \ihps{} by an equivalent one is necessary for the analysis of
loops. In fact, in general, the rules of the logic produce \ihps{}
post-conditions that are structurally larger than the pre-conditions and loop
invariants cannot be shown to be conserved exactly, but only up to equivalence.
This is where the strength of the path-sum approach shines as it has, since its
inception, been designed to be amenable to rewriting~\cite{A18} with complete
rewriting theories~\cite{V24,A23} having been developed for the purely quantum
case, as well as richer extensions to equational theories of \hps{} in the
hybrid case, as developed in \HQbricks~\cite{CIN+26}.

We enrich and adapt the equational
theories from the \hps{} formalism introduced in \HQbricks~\cite{CIN+26} to the
\ihps{} formalism introduced in this article.

The rules of the equational theory that are inherited from
\HQbricks~\cite{CIN+26} can roughly be divided into four categories: the
\emph{interfere rules} dating back to the original path-sum formalism~\cite{A18}
and allowing for the simplification of interference patterns appearing from
specific circuit equivalence instances such as\
\ref{rule:phase-bisector}\footnote{Technically, \ref{rule:phase-bisector} proves
a stronger $\equiv^{\mathtt{P}}$ equivalence stating $\semFock{\h_1} = \semFock{\h_2}$. \appx}, a
generalization of the HH rule based on $HH= I$~\cite{A18}. The
\emph{algebraic} rules corresponding to standard axioms of vector spaces (e.g., 
\textcolor{darkred}{Add-comm} in \HQbricks), and dating back to the introduction
of unbalanced path-sums~\cite{A23}; and the \emph{world combination} rules,
specific to \hps{}, allowing simplifications specific to the hybrid aspect of
\hps{} in terms of `gauge' symmetries such as global phase elimination
(\textcolor{darkred}{PE} in \HQbricks), elimination of constant past values not
contributing to any separation of worlds (\ref{rule:forget}), merging worlds
differing only in their past values, etc.

\begin{prooftree}
    \AxiomC{$y \not \in \text{Var}(p,q,n,f)$}
    \RightLabel{\labelrule{rule:phase-bisector}{\textsc{PhaseBisector}}}
    \UnaryInfC{$\equivContext \equivvdash \sum_y \pasum{p + y q}{f}{n}
            \equiv \pasum{p + \frac q 2}{f}{2\cos(2\pi \frac q 2) n}$}
    \DisplayProof\vskip 1em
    \AxiomC{$\text{Var}(f) = \emptyset$}
    \RightLabel{\labelrule{rule:forget}{\textsc{Forget}}}
    \UnaryInfC{$\equivContext \equivvdash \pasum{p}{m \pastKet[\mathbb T]{f}}{n}
            \equivWeak \pasum{p}{m}{n}$}
\end{prooftree}

In addition to the rules inherited from \HQbricks~\cite{CIN+26}, we naturally
introduce new \emph{limit rules} specific to the integer \ihps{} formalism,
allowing for the computation and simplification of quantifiers over infinite
domains, including limits $\lim_{\intVar} \h$, sums $\sum_{\intVar} \h$ and
products $\bigotimes_{\intVar} \h$. Non-exhaustively, these include the
computation of limits of path-sums when the underlying functions $p, n, f$
converge\ \ref{rule:comp-cont}, the vanishing of path-sums when the
amplitude function $n$ converges to zero\ \ref{rule:vanish}, and various
operator commutation rules corresponding to various continuity theorems, such
as\ \ref{rule:plus-cont}. Note: in\ \ref{rule:vanish} and\ \ref{rule:comp-cont},
$\to_k$ denotes the convergence, in the standard topology, of a real term to a
real constant when the variable $k$ tends to infinity, established separately,
either manually or by the aid of computer algebra systems. 

\begin{prooftree}
    \AxiomC{\vphantom{$a^b$}}
    \RightLabel{\labelrule{rule:plus-cont}{\textsc{PlusCont}}}
    \UnaryInfC{$\equivContext \equivvdash \lim_k (\h_1 + \h_2) \equiv \lim_k \h_1 + \lim_k \h_2$}
    \DisplayProof\hskip 1em
    \AxiomC{$n \to_k 0$}
    \RightLabel{\labelrule{rule:vanish}{\textsc{Vanish}}}
    \UnaryInfC{$\equivContext \equivvdash \lim_k \pasum{p}{f}{n} \equivWeak 0$}
    \DisplayProof\vskip 1em
    \AxiomC{$p \to_k p_\infty$}
    \AxiomC{$n \to_k n_\infty$}
    \AxiomC{$k \not \in \freeVars{f}$}
    \RightLabel{\labelrule{rule:comp-cont}{\textsc{CompCont}}}
    \TrinaryInfC{$\equivContext \equivvdash \lim_k \pasum{p}{f}{n} \equivWeak
            \pasum{p_\infty}{f}{n_\infty}$}
\end{prooftree}
\vskip 1em

    \subsection{Soundness and Adequacy}%
\label{sub:soundness-and-adequacy}

We note that, as is typical in analysis, nearly all limit rules require some
form of convergence condition. These conditions impose a natural restriction on
the domain of soundness of the rules as in \cref{thm:rewrite-soundness}.

\begin{restatable}[Soundness of the equational theory]{theorem}{restate Rewrite Soundness}%
	\label{thm:rewrite-soundness}
    \[
        \forall \h_1, \h_2, \equivvdash \h_1 \equiv \h_2 \implies \models_\equiv
        \h_1 \equiv \h_2
    \]
 
\end{restatable}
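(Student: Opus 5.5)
The proof goes by induction on the derivation of $\equivvdash \h_1 \equiv \h_2$. The equational theory is the closure of a finite set of axiom schemes (the interfere, algebraic, world-combination and limit rules) under reflexivity, symmetry, transitivity and congruence with respect to the \ihps{} constructors. Reflexivity, symmetry and transitivity are immediate, because $\models \h_1\equiv\h_2$ is defined pointwise in $\assignmentContext$ as an equality of CQ states $\semRho[\assignmentContext]{\h_1}=\semRho[\assignmentContext]{\h_2}$.

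The bulk of the work is checking each axiom scheme separately under an arbitrary environment $\assignmentContext$. For most rules I will prove the stronger statement $\semFock[\assignmentContext]{\h_1}=\semFock[\assignmentContext]{\h_2}$, the $\equiv^{\mathtt P}$ equivalence. This implies CQ-equivalence by \cref{def:density-operator-interpretation}.

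The axiom schemes fall into four groups.
\begin{itemize}
\item \emph{Interfere rules.} For \ref{rule:phase-bisector}, I sum over $y\in\B$ and use $1+e^{2\pi i q}=2\cos(\pi q)e^{\pi i q}$, which is a direct computation.
\item \emph{Algebraic rules.} These are vector-space identities in $\fockSpace(\signature)$. They hold once both sides are $\neq\bot$, and they also hold when both sides are $\bot$, since $\bot$ is absorbing.
\item \emph{World-combination rules.} Here Fock equality fails, so I argue at the level of $\semRho{\cdot}$. For \ref{rule:forget}, a constant past register $\pastKet[\mathbb T]{f}$ is a fixed unit basis vector tensored on, and the partial trace $\mathrm{tr}_{\fockSpace_{\pastSpaceDecor}}$ removes it. Global-phase elimination is handled by the dephasing of classical addresses together with $|e^{i\theta}|=1$ on paths that share a classical basis state. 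Merging worlds that differ only in their past is handled by orthogonality of distinct past basis vectors under the partial trace.
\item \emph{Limit rules.} \ref{rule:comp-cont} follows from continuity of $(p,n)\mapsto n e^{2\pi i p}$ together with the $k$-independence of the memory vector. \ref{rule:vanish} holds because the norm of the path is $|n|\to0$. \ref{rule:plus-cont} is continuity of addition in the Hilbert norm.
\end{itemize}

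The main obstacle is congruence, which requires that $\equiv$ be preserved by every constructor, in particular $\otimes$, $\bigpsadd_\var$, $\bigotimes_\var$ and $\lim_\intVar$. The difficulty is that equality of $\semRho{\cdot}$ does not in general imply equality of $\semRho{\cdot}$ after a coherent sum $\psadd$: two vectors with the same reduced state can interfere differently. My plan is to split the theory into two layers.
\begin{itemize}
\item \emph{Fock layer.} For $\equiv^{\mathtt P}$, congruence under all constructors is trivial.
\item \emph{CQ layer.} For $\equiv$, congruence is admitted only for contexts that act as CQ-linear maps. These are $\otimes$, $\sqpsadd$ (the orthogonal past flag makes it a mixture under the partial trace), sums $\bigpsadd_\intVar$ that are mixing because the summed variable is recorded in a classical or past register, and $\lim_\intVar$.
\end{itemize}
For $\lim_\intVar$, convergence in Fock norm gives trace-norm convergence of $\ketbra{\cdot}{\cdot}$. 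Partial trace and dephasing are trace-norm continuous, so $\semRho{\lim_\intVar\h}=\lim\semRho{\h}$ and pointwise equivalence passes to the limit.

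I expect that the precise side conditions on the congruence rules, which I assume appear in the appendix formulation of $\equivvdash$, are exactly those making each context a well-defined map on CQ interpretations. Verifying this case by case, including the convergence side conditions (so that $\bot$ on one side forces $\bot$ on the other or is excluded by hypothesis), is where most of the care will go.
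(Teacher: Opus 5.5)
Your proposal is correct, and its core is the paper's own argument. The paper likewise verifies the rules one at a time, obtaining \ref{rule:plus-cont} from continuity of addition, \ref{rule:comp-cont} from continuity of scalar multiplication, and \ref{rule:vanish} from convergence of the amplitude to $0$, just as you do.

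The difference is scope. The paper does not reprove the interfere, algebraic and world-combination rules; it defers them to \HQbricks{}. It is also silent on reflexivity, transitivity and congruence. You instead check the inherited rules directly; your \ref{rule:phase-bisector} computation gives the Fock-level equality the paper only mentions in a footnote. You also make the closure explicit: $\psadd$-congruence stays at the $\equiv^{\mathtt P}$ level, and $\equiv$-congruence is allowed only under $\otimes$, $\sqpsadd$, mixing sums and $\lim_\intVar$. This is exactly what the paper's appendix remark on strong equivalence implicitly demands, since it observes that $\psadd$-congruence is unsound for $\equiv$. Your route therefore covers the whole derivable relation rather than just its generating rules. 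The cost is that it assumes a two-layer presentation of $\equivvdash$ that the paper never writes down.

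When you fix the side conditions, note that pointwise $\equiv$ does not transfer convergence. For example, $\pasuminline{0}{\ket[\qbit]{0}}{1}\equiv\pasuminline{\intVar\cdot\frac{1}{2}}{\ket[\qbit]{0}}{1}$ holds for every value of $\intVar$, yet only the first has a convergent $\lim_\intVar$. Likewise, the left side of \ref{rule:plus-cont} can converge while $\lim_k\h_1$ does not. So for congruence under $\lim_\intVar$, and for \ref{rule:plus-cont}, convergence must be an explicit hypothesis: of both sides in the first case, and of $\lim_k\h_1$ and $\lim_k\h_2$ in the second. It cannot be derived from pointwise $\equiv$.

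For mixing sums, convergence does come for free, provided the mixing condition holds on both sides. That condition should be made precise as disjoint classical/past supports for distinct values of the summed variable. Then the summands are orthogonal, with norms fixed by the CQ interpretation, so convergence of one side implies convergence of the other.
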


Once the convergence condition is satisfied, the soundness of the limit rules
is an immediate consequence of basic analysis results, namely, the continuity of
vector addition for \ref{rule:plus-cont} and of scalar multiplication for
\ref{rule:comp-cont} as well as absolute convergence implying convergence for
\ref{rule:vanish}. As for the rules inherited from the \hps{} formalism in
\HQbricks~\cite{CIN+26}, we do not reprove them and refer the reader to that
article for details.

We similarly show the soundness of the logic with respect to the denotational
semantics 
(\cref{fig:standard-semantics}).

\begin{restatable}[Soundness of the logic]{theorem}{restatePhysicality}%
    \label{thm:soundness}
    \[
        \forall \prog, \h_1, \h_2,
        \validProg{\signature(\h_1)}{\prog}{\signature(\h_2)} \land
        \hoare{\h_1}{\prog}{\h_2} \implies \sem{\prog}(\semRho{\h_1})
        = \semRho{\h_2}
    \]
\end{restatable}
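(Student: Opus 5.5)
The proof goes by induction on the derivation of $\judgement{\logicContext}{\equivContext}{\hoare{\h_1}{\prog}{\h_2}}$. The statement is strengthened to all assignment environments $\assignmentContext$ over the free variables, so that the \ref{rule:while} premise, which has $\intVar$ free, can be used as an induction hypothesis. For \ref{rule:ax} we assume the axioms in $\logicContext$ are valid. \ref{rule:skip} is immediate. \ref{rule:seq} follows from $\sem{\prog_1;\prog_2}=\sem{\prog_2}\circ\sem{\prog_1}$. \ref{rule:equiv} follows from \cref{thm:rewrite-soundness}, since $\equivvdash \h\equiv\h'$ gives $\semRho[\assignmentContext]{\h}=\semRho[\assignmentContext]{\h'}$ and the semantics depends on $\h_1$ only through $\semRho{\h_1}$.

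The atomic rules reduce to \emph{commutation lemmas} between the syntactic operators and their semantic counterparts. For each rule we show, for every $\assignmentContext$:
\begin{itemize}
\item $\semRho[\assignmentContext]{\h\otimes m}=\semRho[\assignmentContext]{\h}\otimes\ketbra{0}{0}$ for initialisation;
\item $\semFock[\assignmentContext]{\applyU{\unitary(\bar\qbit)}(\h)}=(U\otimes I)\semFock[\assignmentContext]{\h}$, hence conjugation commutes with the partial trace over the past and with classical dephasing (the unitary acts only on qubits);
\item $\semRho[\assignmentContext]{\h[\address\ot\programTerm]}=\widehat{P_{\assign{\address}{\programTerm}}}(\semRho[\assignmentContext]{\h})$.
\end{itemize}
For the assignment lemma, on Fock vectors the operator maps $\ket{\sigma}\otimes\pastKet{\text{log}}$ to $\ket{\sigma[\address\mapsto\programTerm]}\otimes\pastKet{\sigma(\address)}\pastKet{\text{log}}$. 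This is an isometry whenever the old value is logged, and tracing out the past kills the cross terms between different old values. That is exactly the Kraus form of $P$ followed by dephasing. Measurement is the case $\programTerm=\qbit$. Each of these lemmas is proved by structural induction on $\h$, pushing the operator through $\psadd,\otimes,\sqpsadd,\bigpsadd,\bigotimes,\lim$. This uses linearity and continuity of bounded operators, so $\bot$ is preserved and $\bigpsadd,\lim$ commute.

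For \ref{rule:if} we need a filter lemma: $\semRho[\assignmentContext]{\bool*\h}=\widehat{F_\bool}(\semRho[\assignmentContext]{\h})$, where $F_\bool$ is a projection on classical registers. We also need $\semRho{\h_1\sqpsadd\h_2}=\semRho{\h_1}+\semRho{\h_2}$. The latter holds because the fresh past qubit $\pastKet[\B]{0/1}$ makes the branches orthogonal in $\fockSpace_{\pastSpaceDecor}$, so the partial trace removes the cross terms. These combine with the induction hypotheses and the definition of $\sem{\ifthene{\bool}{\prog_1}{\prog_2}}$. This is why $\sqpsadd$ is used rather than $\psadd$.

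The main obstacle is \ref{rule:while}. By the induction hypothesis applied under $\assignmentContext[\intVar\mapsto k]$ for each $k\in\N$, we get $\sem{\ifthene{\bool}{\prog}{\skipProg}}^k(\semRho{\h[0/\intVar]})=\semRho[\assignmentContext[\intVar\mapsto k]]{\h}$ by iteration. Here a substitution lemma is needed: $\semRho[\assignmentContext]{\h[t/\intVar]}=\semRho[\assignmentContext[\intVar\mapsto\sem{t}]]{\h}$. Applying the filter lemma gives
\[
\widehat{F_{\lnot\bool}}\bigl(\sem{\ifthene{\bool}{\prog}{\skipProg}}^k(\rho)\bigr)=\semRho[\assignmentContext[\intVar\mapsto k]]{(1\oplus\bool)*\h}.
\]
The left side converges in trace norm by \cref{lem:semantics-well-defined}. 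What remains is to exchange the limit with $\semRho{\cdot}$: if $\semFock{\lim_\intVar h}$ converges in Hilbert norm, then $\semRho$ of the limit is the limit of $\semRho$. This holds because $v\mapsto\ketbra{v}{v}$ is continuous from Hilbert norm to trace norm, and partial trace and dephasing are trace-norm contractions. The delicate point is that convergence of the CQ states does not imply convergence of the Fock vectors, since the phases and past logs could oscillate. So the proof must either assume $\semFock{\lim_\intVar(\ldots)}\neq\bot$ as part of well-formedness of the triple, as the paper suggests, or argue that the terminated paths in the invariant are stable in $k$. I would first try the former, treating the convergence as a side condition implicit in the post-condition being defined.
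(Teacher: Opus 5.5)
Your decomposition is the paper's own: induction on the derivation, commutation lemmas for initialisation, unitaries, assignment and filtering, additivity of $\sqpsadd$ for \ref{rule:if}, and iteration, filtering and a limit for \ref{rule:while}. In the \ref{rule:while} case you are more careful than the paper, which simply equates $\lim_n \semRho{\cdot}$ with $\semRho{\lim_k \cdot}$. Your side condition $\semFock{\lim_\intVar(\ldots)}\neq\bot$ is genuinely needed, contrary to the paper's remark that convergence comes for free. Take $\while{\lnot\cbit}{\skipProg}$ started from $\pasum{0}{\classKet[\cbit]{1}}{1}$ with invariant $\pasum{\intVar\cdot\frac12}{\classKet[\cbit]{1}}{1}$ (or one carrying $\pastKet[\Z]{\intVar}$). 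Its CQ interpretation does not depend on $\intVar$, so the premise follows by a valid \ref{rule:equiv} step, yet the resulting post-condition is interpreted as $\bot$.

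The genuine gap is the $\sqpsadd$ case of your assignment/measurement lemma, which you present as routine linearity. The definition $(\h_1\sqpsadd\h_2)[\address\ot\programTerm]=\h_1[\address\ot\programTerm]\sqpsadd\h_2[\address\ot\programTerm]$ yields $\pastKet[\B]{0}\otimes V\semFock{\h_1}+\pastKet[\B]{1}\otimes V\semFock{\h_2}$. The new log entry thus sits \emph{behind} the branch marker, whereas $V\semFock{\h_1\sqpsadd\h_2}$ puts it in front. So $\semFock{\h[\address\ot\programTerm]}=V\semFock{\h}$ fails. The discrepancy also survives the partial trace, because past entries are anonymous and a marked summand can collide with an unmarked one. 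For a concrete case, let $\h_a=\pasum{0}{\ket[\qbit]{0}\classKet[\cbit]{1}\pastKet[\B]{0}}{\frac12}$, $\h_b=\pasum{0}{\ket[\qbit]{1}\classKet[\cbit]{1}}{\frac12}$, $\h_c=\pasum{0}{\ket[\qbit]{1}\classKet[\cbit]{0}\pastKet[\B]{1}\pastKet[\B]{0}}{\frac{1}{\sqrt2}}$, and $\h=(\h_a\sqpsadd\h_b)\psadd\h_c$.

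\begin{itemize}
\item In $\semFock{\h}$ the bit-logs are $00$, $1$ and $10$, all distinct, so $\widehat{P_{\assign{\cbit}{0}}}(\semRho{\h})$ is diagonal on $\qbit$.
\item In $\h[\cbit\ot 0]$ the $\h_a$- and $\h_c$-paths both carry the bit-log $010$ and $\cbit=0$. Hence $\semRho{\h[\cbit\ot 0]}$ contains the coherence $\frac{1}{2\sqrt2}\bigl(\ketbra{0}{1}+\ketbra{1}{0}\bigr)$ on $\qbit$.
\end{itemize}

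So \ref{rule:assign-bool} and \ref{rule:measure} are unsound on such preconditions, which \ref{rule:equiv} can always produce, and no induction proves your lemma as stated. Integer assignment is unaffected, since $\Z$-logs live in a list separate from the $\B$-markers, and filters and unitaries log nothing. This is exactly where the paper's own proof of \cref{lem:soundness-projection} breaks down.

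You have two ways to repair it:
\begin{itemize}
\item Change the definition so the old value is logged in front of the markers. For example, compute $\h[\address\ot\programTerm]$ on $\redsqpsadd{\h}$ with the markers merged into the path memories; your Fock-level identity then holds.
\item Give those rules the side condition that all paths of $\h$ carry marker words from one common prefix-free set. The two placements then differ by an isometry of the past space and agree after tracing it out.
\end{itemize}
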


The soundness of the semantics is a core result of our work, as it shows that
the properties of the program obtained by symbolic analysis are indeed correct with
respect to the denotational semantics. In particular, the expected values
$\expectation[A \mid \rho] \equaldef \mathrm{tr}(\rho A)$ of an observable $A \in
\mathcal L(\hilbert[\signature(\rho)])$ over the CQ state $\rho$,
including termination probability (for $A = I$), expected runtime (for $A =
\cInt$ with $\cInt$ a loop counter), and others, can be extracted without the
need to compute the difficult CQ state semantics directly. Instead, we could
pass to \ihps{}-based symbolic execution and rewriting.

\begin{restatable}{corollary}{restateEstimationViaSymbolicExecution}%
    \label{cor:estimation-via-symbolic-execution}
    For all $\prog, \h_1, \h_2$ such that $\validProg{\signature(\h_1)}{\prog}{\signature(\h_2)}$, and observable $A$ over $\hilbert[\signature(\h_2)]$,   
    \[
        \hoare{\h_1}{\prog}{\h_2} \implies
        \expectation[A \mid \sem{\prog}(\semRho{\h_1})] = \expectation[A \mid
        \semRho{\h_2}]
    \]
\end{restatable}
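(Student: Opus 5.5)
The corollary follows directly from \cref{thm:soundness}; the plan is to rewrite both sides of the claimed equality as traces of CQ states and observe that these states coincide. Assume the hypothesis $\hoare{\h_1}{\prog}{\h_2}$, read as a derivable judgment as in \cref{thm:soundness}, together with $\validProg{\signature(\h_1)}{\prog}{\signature(\h_2)}$. Then \cref{thm:soundness} gives the operator identity $\sem{\prog}(\semRho{\h_1}) = \semRho{\h_2}$ in $\mathrm{CQ}(\signature(\h_2))$. Here we use that $\applySig{\prog}{\signature(\h_1)} = \signature(\h_2)$, which holds by uniqueness of the output signature, and \cref{lem:semantics-well-defined}, which ensures the left-hand side lies in the same space on which $A$ acts.

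Next we unfold $\expectation[A \mid \rho] \equaldef \mathrm{tr}(\rho A)$ for both states and substitute the identity above. The map $\rho \mapsto \mathrm{tr}(\rho A)$ is a function of $\rho$, so equal operators yield equal expectations, which completes the argument.

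The only delicate point is well-definedness in the infinite-dimensional setting, since the observable $A$ may be unbounded (for instance a counter $\cInt$ on $\ell^2(\Z)$). I would make explicit that $\expectation[A\mid\rho]$ is taken in the extended sense, possibly $+\infty$ or undefined. Under this reading, both sides are the same expression evaluated at the same operator, so they are simultaneously defined and equal, and no convergence argument is needed. If the statement is instead intended for free variables under an environment $\assignmentContext$, the same argument applies pointwise, using the $\assignmentContext$-indexed form of soundness from the definition of validity.
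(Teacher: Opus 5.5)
Your argument is correct and is essentially the paper's own route: the paper gives no separate proof and presents the corollary as an immediate consequence of \cref{thm:soundness}, obtained by applying $\rho \mapsto \mathrm{tr}(\rho A)$ to both sides of $\sem{\prog}(\semRho{\h_1}) = \semRho{\h_2}$. Your extra remarks on unbounded observables such as a counter $\cInt$, and on the environment-indexed reading, are harmless refinements that the paper leaves implicit.
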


The Hoare logic is also adequate with respect to the
denotational semantics on the fragment of the language with no \texttt{while}
loops; equivalently, with only bounded loops that can be fully unfolded.

\begin{restatable}[Adequacy on bounded programs]{theorem}{restateAdequacyBounded}%
    \label{thm:adequacy-bounded}
    For any program $\prog$ and state $\h_1$ such that $\prog$ contains no loops
    and $\exists \signature' \subseteq \addressSet,
    \validProg{\signature(\h)}{\prog}{\signature'}$, and for any CQ state
    $\rho$,
    \[
        \sem{\prog}(\semRho{\h_1}) = \rho \implies \exists \h_2, \hoare{\h_1}{\prog}{\h_2} \land \semRho{\h_2} = \rho
    \]
\end{restatable}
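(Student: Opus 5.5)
The plan is a structural induction on the loop-free program $\prog$, building $\h_2$ directly from the rules of \cref{fig:hoare-logic} without using \ref{rule:equiv} except where convenient, and then invoking \cref{thm:soundness} to identify $\semRho{\h_2}$. Indeed, for loop-free programs every syntactic construct except \texttt{while} has a rule whose post-condition is a \emph{function} of the pre-condition (\ref{rule:skip}, \ref{rule:qinit}, \ref{rule:cinit}, \ref{rule:intinit}, \ref{rule:unitary}, \ref{rule:assign-int}, \ref{rule:assign-bool}, \ref{rule:measure}), or is built compositionally from sub-derivations (\ref{rule:seq}, \ref{rule:if}). So I will first prove the stronger claim: for every loop-free $\prog$ and every $\h_1$ with $\validProg{\signature(\h_1)}{\prog}{\signature'}$, there exists $\h_2$ with $\vdash\hoare{\h_1}{\prog}{\h_2}$ and $\validProg{\signature(\h_1)}{\prog}{\signature(\h_2)}$. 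Given this, if $\sem{\prog}(\semRho{\h_1}) = \rho$, then \cref{thm:soundness} yields $\semRho{\h_2} = \sem{\prog}(\semRho{\h_1}) = \rho$, which is the statement.

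The induction proceeds as follows. The atomic cases are immediate: take $\h_2$ to be the post-condition prescribed by the corresponding axiom-like rule. The signature condition then reduces to checking that the auxiliary operators $\h\otimes m$, $\applyU{\unitary(\bar\qbit)}(\h)$, and $\h[\address\ot\programTerm]$ produce well-formed \ihps{} with the signature computed by the validity judgment (extended by the new address for initializations, unchanged otherwise). This is a routine check on their definitions \appx. For $\prog_1;\prog_2$, apply the induction hypothesis to $\prog_1$ with $\h_1$, obtaining $\h'$, and then to $\prog_2$ with $\h'$; since the validity judgment decomposes along sequencing and its output signature is unique, the intermediate signature matches, and \ref{rule:seq} concludes. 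For $\ifthene{\bool}{\prog_1}{\prog_2}$, note that the filters $\bool*\h_1$ and $(1\oplus\bool)*\h_1$ preserve the signature. Applying the induction hypothesis to both branches gives $\h'_1,\h'_2$. Validity of the conditional forces both branches to end in the same signature, so $\signature(\h'_1\sqpsadd\h'_2)$ is defined, and \ref{rule:if} applies.

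The main obstacle is not the induction itself but ensuring that the constructed post-conditions are \emph{defined}: convergent, i.e.\ $\semFock{\h_2}\neq\bot$, and well-signed, so that \cref{thm:soundness} is applicable. Since no $\lim$ binder is introduced without \texttt{while}, convergence issues can only be inherited from $\h_1$. The hypothesis $\sem{\prog}(\semRho{\h_1})=\rho$ implicitly assumes $\semRho{\h_1}$ is defined, so $\semFock{\h_1}\ne\bot$. I would check that each auxiliary operator maps convergent \ihps{} to convergent ones. For the unitary rule this holds because it acts by a bounded (unitary) operator on the Fock interpretation. Assignments and filters act by partial isometries or projections on the basis vectors, tensoring past logs in an orthogonal way. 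Tensoring with a single path is also bounded. Hence each step preserves $\semFock{\cdot}\neq\bot$.
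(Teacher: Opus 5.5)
Your proposal is correct and follows the same route as the paper's proof in the loop-free case. Both build an \ref{rule:equiv}-free, syntax-directed derivation by induction on $\prog$ and then appeal to \cref{thm:soundness} to identify $\semRho{\h_2}$ with $\rho$; your explicit checks of signatures and of $\semFock{\h_2}\neq\bot$ are more careful than the paper's, and the paper's extra sketch for loops that terminate within a fixed bound $T$ (unrolling each \texttt{while} into $T$ conditionals and packaging the intermediate states into a \ref{rule:while} invariant) is not needed for the statement as written.
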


The essential point is that a symbolic representation $\h_2$ always exists for
finitely terminating programs. On the other hand, adequacy is conjectured not to
hold over $\lang$ unrestricted, as the existence of a closed form for the loop
invariant $\h$ for the\ \ref{rule:while} rule is not guaranteed without further
large-scale extensions of \ihps{} we find to be counterproductive for the
intent of this article.  We recall that our goal is not to fully automate the
analysis, which includes undecidable properties such as almost-sure termination
and the computation of expected values~\cite{AMPP24}, but to provide a sound and practical
framework for a semi-automated analysis. In any case, while adequacy is a nice
property to have, soundness is the key property for our purposes, allowing us to
symbolically estimate properties of interest
(\cref{cor:estimation-via-symbolic-execution}).

\section{Heuristics and Application to Resource Analysis}%
\label{sec:generic-strategy}

The analysis of \texttt{while} loops, even symbolically in terms of \ihps{} using
the rule\ \ref{rule:while}, remains challenging as it requires finding a loop
invariant expressed in closed form as an \ihps{} term $\h$ such that
$\hoare{\h[{\intVar}]}{\ifthene{\bool}{\prog}{\skipProg}}{\h[{\intVar}+1/{\intVar}]}$.
By the undecidability of the logic (\cref{thm:undecidability}), this closed-form
invariant may not always exist, and when it does, there is no reasonable
automatic way to compute it in general. To alleviate these issues, we propose a
heuristic, in the form of a new admissible rule of the logic, for finding an
invariant $\h[x]$ of the particular form $\h[{\intVar}] \equaldef
\bigpsadd_{l=0}^{{\intVar}} \left({\h}^{{\lnot \bool}}l/{\intVar}] \otimes
\pastKet[\Z]{l}\right) \psadd \left({\h}^{\bool} \otimes
\pastKet[\Z]{\intVar}\right)$ which is guided by the one-sided branching
structure of the execution of a \texttt{while} loop. The tensoring with
$\pastKet[\Z]{l}$ ensures that branches halting at $l$ do not interfere and are
not in a superposition with states halting at $l' \neq l$.

\begin{restatable}[Heuristic for loops]{theorem}{restateGenericStrategy}%
    \label{thm:generic-strategy}

The following rule is admissible for the logic:
\begin{prooftree}
    \AxiomC{$\equivvdash \h^\bool \equiv \bool * \h^\bool$}
    \AxiomC{$\equivvdash \h^{\lnot \bool} \equiv (\lnot \bool) * \h^\bool$}
    \AxiomC{$\judgement{\logicContext}{}{\hoare{\h^\bool}{\prog}{\h^\bool[\intVar+1/\intVar]
    \oplus \h^{\lnot \bool}[x+1/x]}}$}
    \RightLabel{\labelthis{rule:collect}{\textsc{Collect}}}
    \TrinaryInfC{$\judgement{\logicContext}{}{\hoare{\h^\bool \oplus \h^{\lnot \bool}[0/\intVar]}{\while{\bool}{\prog}}{\sum_x \h^{\lnot b}\otimes \pastKet[\Z]{\intVar}}}$}
\end{prooftree}
\end{restatable}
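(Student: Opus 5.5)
The plan is to derive \ref{rule:collect} from \ref{rule:while} by exhibiting an explicit loop invariant built from $\h^\bool$ and $\h^{\lnot\bool}$, and then to reshape pre- and post-conditions with \ref{rule:equiv}. Throughout, $\h^\bool$ and $\h^{\lnot\bool}$ may contain the free integer variable $\intVar$. I would use the invariant
\[
H[\intVar] \equaldef \h^\bool \sqpsadd \bigpsadd_{l=0}^{\intVar} \left(\h^{\lnot\bool}[l/\intVar] \otimes \pastKet[\Z]{l}\right),
\]
where the bounded sum is the usual sugar $\bigpsadd_l \pasuminline{0}{\emptyset}{\lift{(l\le \intVar)}}\otimes(\cdots)$.

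The workhorse is one semantic fact about direct sums: $\semRho[\assignmentContext]{\h_1\sqpsadd\h_2} = \semRho[\assignmentContext]{\h_1}+\semRho[\assignmentContext]{\h_2}$. This holds because the tags $\pastKet[\B]{0}$ and $\pastKet[\B]{1}$ are orthogonal and are traced out in \cref{def:density-operator-interpretation}. The same argument gives $\semRho{\bigpsadd_l \h'_l\otimes\pastKet[\Z]{l}}=\sum_l\semRho{\h'_l}$. From these two facts I would read off associativity and commutativity of $\sqpsadd$ up to $\equiv$, and the interchangeability of $\sqpsadd$ with a sum of $\pastKet[\Z]{l}$-tagged terms. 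If the equational theory of \cref{sub:equational-theory} does not literally contain these equivalences, I would add them to it; they are sound by the computation above, and \cref{thm:rewrite-soundness} is unaffected.

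For the base case, $H[0] \equiv \h^\bool[0/\intVar]\sqpsadd \left(\h^{\lnot\bool}[0/\intVar]\otimes\pastKet[\Z]{0}\right)$. By \ref{rule:forget} this is equivalent to $\h^\bool\sqpsadd\h^{\lnot\bool}[0/\intVar]$ evaluated at $\intVar=0$, which gives the precondition after one application of \ref{rule:equiv}.

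For the inductive step, I apply \ref{rule:if} to $\ifthene{\bool}{\prog}{\skipProg}$ with pre-condition $H[\intVar]$. The filter distributes over $\sqpsadd$ and $\bigpsadd$, which can be checked semantically. The first hypothesis $\h^\bool\equiv\bool*\h^\bool$ gives $\bool * \h^\bool \equiv \h^\bool$ and $(\lnot\bool)*\h^\bool \equiv 0$. The second hypothesis $\h^{\lnot\bool}\equiv(\lnot\bool)*\h^\bool$ gives $\bool*\h^{\lnot\bool}\equiv 0$ and $(\lnot\bool)*\h^{\lnot\bool}\equiv\h^{\lnot\bool}$, since filtering is idempotent and complementary filters annihilate. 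Hence $\bool*H[\intVar]\equiv \h^\bool$, and the third premise sends it to $\h^\bool[\intVar+1/\intVar]\sqpsadd \h^{\lnot\bool}[\intVar+1/\intVar]$. The else branch is \ref{rule:skip} on $\bigpsadd_{l\le \intVar}(\cdots)$. The conclusion of \ref{rule:if} is therefore
\[
\left(\h^\bool[\intVar+1/\intVar]\sqpsadd \h^{\lnot\bool}[\intVar+1/\intVar]\right)\sqpsadd\bigpsadd_{l\le\intVar}\left(\h^{\lnot\bool}[l/\intVar]\otimes\pastKet[\Z]{l}\right).
\]
Using the $\sqpsadd$ lemmas, I reassociate this expression and absorb $\h^{\lnot\bool}[\intVar+1/\intVar]$ as the $l=\intVar+1$ term. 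The tag $\pastKet[\Z]{\intVar+1}$ can be inserted because the CQ interpretation only sees the sum of the components. The result is $H[\intVar+1]=H[\intVar][\intVar+1/\intVar]$, and \ref{rule:equiv} closes the premise of \ref{rule:while}.

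\ref{rule:while} then yields the post-condition $\lim_\intVar (\lnot\bool)*H[\intVar] \equiv \lim_\intVar \bigpsadd_{l=0}^{\intVar}\h^{\lnot\bool}[l/\intVar]\otimes\pastKet[\Z]{l}$, because the $\h^\bool$ component is annihilated by the filter. Renaming $l$ to $\intVar$, it remains to identify this limit of partial sums with $\sum_\intVar \h^{\lnot\bool}\otimes\pastKet[\Z]{\intVar}$. By \cref{def:hps-interpretation}, the infinite sum is exactly the norm limit of its partial sums whenever it converges. Convergence holds here because the summands are mutually orthogonal, being tagged by distinct $\pastKet[\Z]{l}$, and their squared norms sum to at most the trace of the pre-condition, which is at most $1$.

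I expect the main obstacle to be making this last step and the reassociations genuinely derivable in $\equivvdash$ rather than merely valid. It requires a limit rule equating $\lim_\intVar$ of bounded partial sums with $\bigpsadd_\intVar$, justified in the style of \ref{rule:plus-cont}. It also requires the bookkeeping of the free variable $\intVar$ inside $\h^\bool$ under the substitutions $[l/\intVar]$ and $[\intVar+1/\intVar]$. I would settle both by proving the needed equivalences semantically and adjoining them to the theory.
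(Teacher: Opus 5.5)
Your proof is correct and follows the same strategy as the paper: an explicit invariant holds the already-halted branches, tagged by their iteration index, in a direct sum with the still-running branch; it is pushed through \textsc{If} using the filter equivalences from the first two premises together with the third premise, and then fed to \textsc{While}. The paper instead tags iterations with a fresh present counter $\cInt$, instruments the body as $\sequenceProg{\prog}{\assign{\cInt}{\cInt+1}}$, and leaves implicit the base case, the reassociation of $\sqpsadd$ and the limit-to-sum step, whereas your past tags $\pastKet[\Z]{l}$ give the stated postcondition directly and you supply those omitted steps, relying (as the paper does tacitly) on semantically justified equivalences being available in $\equivvdash$.
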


A limit \ihps{} of this form $\h_{\infty} \equaldef \sum_\intVar \h^{\lnot
\bool} \otimes \pastKet[\Z]{\intVar}$ makes resource analysis significantly
simpler. Indeed, the (sub-)\textbf{probability distribution} of
the termination time is $\probability(\intTerm) = \|\h^{\lnot \bool}[\intTerm/x]\|^2$.
We can then express the estimations we are interested in
(\cref{sub:Context and motivation}) as expectations of observables
(\cref{cor:estimation-via-symbolic-execution}) $\expectation[A \mid \h]$:
\begin{enumerate}
    \item Termination within $\intTerm$ iterations: $\probability(\cInt \leq 
        \intTerm) \equaldef \expectation[\widehat{F_{\cInt \leq {\intTerm}}} \mid \h_{\infty}] =
        \sum_{l=0}^{\intVar} \probability(l)$.
    \item Termination at all: $\probability(\cInt < \infty) \equaldef
        \expectation[I \mid \h_{\infty}] =
    \|\h_\infty\|^2= \sum_{{\intTerm}=0}^\infty \probability({\intTerm})$.
    \item Expected iteration count:
        \[
            \expectation[\cInt \mid \h_{\infty}]  \equaldef
            \textstyle \frac 1 {\probability(\cInt < \infty)} \expectation\left[\textstyle \left(\sum_{\sigma \in \mathrm{Basis}(\signature)}
            \sem{\cInt}_\sigma \ketbra{\sigma}{\sigma}\right) \mid \h_{\infty}\right]
            = \textstyle \frac 1 {\probability(\cInt <
            \infty)} \sum_{{\intTerm}=0}^{\infty} {\intTerm} \probability({\intTerm})
        \]
\end{enumerate}
Beyond these properties, depending on the form of $\h^{\lnot \bool}$, it may be
more or less easy, but nonetheless possible to also extract the probability of
an arbitrary predicate $\bool'$ as $\probability(\bool' \mid \h_{\infty}) =
\|(\bool' * \h^{\lnot \bool})\|^2 $ or the expectation of an arbitrary integer
variable $\texttt{y} \in \signature(\h^{\lnot \bool})$ as
$\expectation[\texttt{y} \mid \h_{\infty}]= \frac 1 {\probability(\texttt{y} <
\infty)} \sum_{{\intTerm}=0}^{\infty} \intTerm \cdot \|(\texttt{y} = \intTerm) *
\h^{\lnot \bool}\|^2$.  In particular, programs can always be modified to
include a counter variable for any resource of interest (e.g., number of
applications of a $T$-gate, weighted gate count, etc.), meaning that the
extraction of the expected value of a program variable extends to an estimation
of a broad class of resources.

\begin{example}%
    \label{running:strategy}

In\ \ref{prog:rus-unitary}, we are able to estimate the resources in terms of
time (the average of $\cInt$) by applying \ref{rule:collect} with
    $
    \h_{\textsc{RUS}}^b \equaldef
    \pasuminline{0}{|\boolVar \rangle_\qbit{[1]}_\cbit\classKet[\cInt]{\intVar}}{\frac 1
    {2^{\intVar}}}
    $ and $
    \h_{\textsc{RUS}}^{\lnot b} \equaldef
    \sum_{\boolVar_1} \pasuminline{\frac{c_1}{4}}{\ket[\qubit]{\boolVar \oplus
    \boolVar_1}{[1]}_\cbit\classKet[\cInt]{\intVar}}{\frac 1
{2^{\intVar}} \cdot \sqrt{2}^{\boolVar_1} \cdot {\lift(\intVar >0)}}
$ to obtain the limiting state
\[
    \h_{\textsc{RUS-}\infty} =
    \bigpsadd_{x}\sum_{\boolVar_1} \pasum{\frac{c_1}{4}}{\ket[\qubit]{\boolVar
        \oplus \boolVar_1}{[1]}_\cbit
        {[{\intVar}]}_\cInt}{\frac 1 {2^{\intVar}} \cdot \sqrt{2}^{\boolVar_1}
    \cdot {\lift(\intVar >0)}} 
    \]
which then allows us to deduce that the probability of the program halting
within ${\intTerm}$ iterations is $\probability({\leq \intTerm}) =
\sum_{j=1}^{\intTerm} {\frac 3 {4^j}} = 1 - \frac 1 {4^{\intTerm}}$, that
the program terminates almost surely ($\lim \probability({\leq \intTerm}) = 1$),
and that the expected termination time is $\expectation[\cInt] =
\sum_{{\intVar}=1}^{\infty} {\intVar} \cdot \probability({\intVar}) = \frac 4 3$
iterations.

\ref{rule:collect} can also be applied to \ref{cointoss} with  $
    \h_{\textsc{CT}}^b \equaldef
    \pasuminline{0}{|0\rangle_\qbit{[0]}_\cbit\classKet[\cInt]{\intVar}}{1/
    {\sqrt{2^{\intVar}}}}
    $ and $
    \h_{\textsc{CT}}^{\lnot b} \equaldef
    \pasuminline{0}{|1\rangle_\qbit{[1]}_\cbit\classKet[\cInt]{\intVar}}{ 1/
    {\sqrt{2^{\intVar}}}\cdot \lift(\intVar >0)}
$ to obtain an average of $\expectation[\cInt] = 2$ iterations.

\end{example}

\ref{rule:collect} is, of course, not restricted to simple examples such as
\ \ref{cointoss}. In general, it is easy to apply to the resource analysis of
the repeat-until-success schema which are ubiquitous to most quantum programs
due to their probabilistic nature (see \cref{sec:rus}). We can also use
this strategy for analyzing more complex nested \texttt{while} loops where, for
example, the quantum states measured in the outer loop may depend on the number
of iterations of the inner loop (see \cref{sec:nested-while}).

\section{Applications}\label{sec:applications}

In this section, we illustrate the different features of the framework through a
number of examples. These include axiomatic
reasoning by using non-empty $\logicContext$ (\ref{repeatedQBF},
\ref{weakMeasurement}), the observability of almost-sure-termination and the
possibility to analyze non-almost-surely-terminating programs
(\ref{weakMeasurement}), as well as a stress-test (\ref{nestedWhile}) of the
framework highlighting its ability to handle complex nested \texttt{while} loops
with subtle forms of classical control. We also revisit in detail the unitary
synthesis example \ref{prog:rus-unitary} of \cref{sub:Overview}.

In the examples below, specifically when the first iteration of a \texttt{while}
is certain to be performed, it is often convenient to sum over \emph{non-zero}
integers. It is also practical to group summation quantifiers over multiple
variables together.  For the sake of readability, we will therefore write
$\sum_{\intVar > 0} \h$ to mean $\sum_{\intVar} \lift(\intVar \neq 0) \cdot \h$,
and $\sum_{\var_1, \ldots, \var_n} \h$ to mean $\sum_{\var_1} \cdots
\sum_{\var_n} \h$.

\subsection{Repeat-Until-Success}\label{sec:rus}

Our first three examples are instances of the repeat-until-success (RUS) pattern
(\cref{sub:Overview}) where
a certain operation is performed ending in a measurement indicating
whether it succeeded to produce the desired effect or not, in which case, it is
repeated. In this section, as common in the literature, `repeat-until-success'
refers to the case when the initial state is recoverable after failure and when
the success probability is fixed.  Our framework is applicable generically on
all such patterns: the only dependence of the invariant on the number of
iterations is through the norm term as $p^\intVar$; therefore, the core of the
analysis is fundamentally unchanged across the different instances of RUS.

\subsubsection{Unitary synthesis}\label{sec:rus-unitary}

\phantom{a}

\begin{wrapfigure}[13]{r}{0.48\textwidth}
    \centering
    \vskip -1em
    \begin{tikzpicture}
        \node[inner sep = 0pt, scale=0.6] {
                \begin{quantikz}[wire types={q,q}]
                    \lstick{\ket[\qubit]{\psi}} &&& \targ{} &&\targ{} & &&&
                    \rstick{$U^\boolVar \ket[\qubit]{\psi}$}\\
                    \lstick{\ket[\qubit_1]{0}} &  \gate{H} & \gate{T} &
                    \ctrl{-1} & \gate{H} & \ctrl{-1} & \gate{T} & \gate{H} &
                \meter{} & \cw \rstick{$\boolVar$}
            \end{quantikz}};
        \end{tikzpicture}
        \vskip 1em
\begin{minipage}{0.4\textwidth}
    \centering
    \begin{lstlisting}[caption={The \ref{prog:rus-unitary} program.},
    label={lst:rus-unitary-again}]
qubit q$_1$; bit c;
X(q$_1$); c := 1;
do
    X(q$_1$); H(q$_1$); T(q$_1$);
    CNOT(q$_1$,q); H(q$_1$); CNOT(q$_1$,q);
    T(q$_1$); H(q$_1$);
    c := measure(q$_1$);
while c
    \end{lstlisting}
\end{minipage}
\end{wrapfigure}
In this section, we elaborate on the example from \cref{sub:Overview} of
repeat-until-success unitary synthesis of $U= \frac {1}{\sqrt 3} (I + i \sqrt{2}
X)$ over a qubit $\qubit$. Its code and circuit are reproduced in 
\cref{lst:rus-unitary-again}.


In \cref{fig:rus-unitary-derivation}, we derive the
path-sum Hoare triple for the first iteration of the loop, starting from the
generic basis state $\pasum{0}{\ket[\qubit]{\boolVar}}{1}$ on the qubit $\qubit$. For
the sake of readability, we do not write $\hoare{\h}{\prog}{\h'}$,
instead write an alternation of path-sums with either a program or a 
rule of the equational theory. 
This is to be interpreted as such: the first line
is an \hps{} $\h_1$, it defines a currently derived triple
$\hoare{\h_1}{\skipProg}{\h_1}$ which is updated for each subsequent line as
follows: if the line is $\xrightarrow{\prog}{\h_2}$, we apply \ref{rule:seq} of
the current triple $\hoare{\h_1}{\prog'}{\h'}$ with $\hoare{\h'}{\prog}{\h_2}$
to get a new current triple $\hoare{\h_1}{\prog';\prog}{\h_2}$. If instead the
line is $\xRightarrow{\mathrm{Rule}} \h_2$ then we apply \ref{rule:equiv} to the
current triple $\hoare{\h_1}{\prog'}{\h'}$ and $\h' \equiv \h_2$ to update the
current derived triple to $\hoare{\h_1}{\prog'}{\h_2}$, where the equivalence is
derived from the equational theory with the rule named `Rule'. In particular, on
lines 11 and 14, we perform changes of variables, line 12 uses
\ref{rule:phase-bisector}, line 13 expands the sum over $\boolVar_3$, line 15
eliminates the null case of $\boolVar_1 = 0$, and finally, line 16 eliminates
(unobservable) global phases. We also note that between line 12 and line 13, an
implicit rewriting of the cosine term $n$ was performed.

\begin{figure}[htpb]
    \centering
    \footnotesize
\[
\begin{array}{r|r\sumpasummemcols{l@{}l@{}l}}
    1 & \textcolor{gray}{\mathrm{Init}}                                       & \sumpasumsplit{0}{\ket[\qubit]{\boolVar}
		\ket[\qubit_1]{1}\classKet[\cbit]{1}}{1}
	                                                                      &                                                            \\
        2 & \xrightarrow{X(\qubit_1)}                                             & \sumpasumsplit{0}{\ket[\qubit]{\boolVar}
	\ket[\qubit_1]{0}\classKet[\cbit]{1}}{1}                                                                                           \\
            3 & \xrightarrow{H(\qubit_1)}                                             & \sum_{\boolVar_1} \sumpasumsplit{0}{\ket[\qubit]{\boolVar}
	\ket[\qubit_1]{\boolVar_1}\classKet[\cbit]{1}}{\frac{1}{\sqrt{2}}}                                                                 \\
                4 & \xrightarrow{T(\qubit_1)}                                             & \sum_{\boolVar_1}
	\sumpasumsplit{\frac{\boolVar_1}{8}}{\ket[\qubit]{\boolVar}
	\ket[\qubit_1]{\boolVar_1}\classKet[\cbit]{1}}{\frac{1}{\sqrt{2}}}                                                                 \\
                5 & \xrightarrow{CNOT(\qubit_1, \qubit)}                                  & \sum_{\boolVar_1}
	\sumpasumsplit{\frac{\boolVar_1}{8}}{\ket[\qubit]{\boolVar \oplus \boolVar_1}
	\ket[\qubit_1]{\boolVar_1}\classKet[\cbit]{1}}{\frac{1}{\sqrt{2}}}                                                                 \\
                6 & \xrightarrow{H(\qubit_1)}                                             &
	\sum_{\boolVar_2, \boolVar_1}
	\sumpasumsplit{\frac{\boolVar_1}{8} + \frac{c_1 c_2}{2}}{\ket[\qubit]{\boolVar
	\oplus \boolVar_1} \ket[\qubit_1]{\boolVar_2}\classKet[\cbit]{1}}{\frac{1}{2}}                                                     \\
                7 & \xrightarrow{CNOT(\qubit_1, \qubit)}                                  &
	\sum_{\boolVar_2, \boolVar_1}
	\sumpasumsplit{\frac{\boolVar_1}{8} + \frac{c_1 c_2}{2}}{\ket[\qubit]{\boolVar
			\oplus \boolVar_1 \oplus \boolVar_2}
	\ket[\qubit_1]{\boolVar_2}\classKet[\cbit]{1}}{\frac{1}{2}}                                                                        \\
                8 & \xrightarrow{T(\qubit_1)}                                             &
	\sum_{\boolVar_2, \boolVar_1}
	\sumpasumsplit{\frac{\boolVar_1}{8} + \frac{c_1 c_2}{2} +
		\frac{\boolVar_2}{8}}{\ket[\qubit]{\boolVar \oplus \boolVar_1 \oplus
	\boolVar_2} \ket[\qubit_1]{\boolVar_2}\classKet[\cbit]{1}}{\frac{1}{2}}                                                            \\
                9 & \xrightarrow{H(\qubit_1)}                                             &
	\sum_{\boolVar_3, \boolVar_2, \boolVar_1}
	\sumpasumsplit{\frac{\boolVar_1}{8} + \frac{c_1 c_2}{2} + \frac{\boolVar_2}{8} +
		\frac{c_2 c_3}{2}}{\ket[\qubit]{\boolVar \oplus \boolVar_1 \oplus
	\boolVar_2} \ket[\qubit_1]{\boolVar_3}\classKet[\cbit]{1}}{\frac{1}{\sqrt{8}}}                                                     \\
                10 & \xrightarrow{\mathrm{Measure}}                                        &
	\sum_{\boolVar_3, \boolVar_2, \boolVar_1}
	\sumpasumsplit{\frac{\boolVar_1}{8} + \frac{c_1 c_2}{2} + \frac{\boolVar_2}{8} +
		\frac{c_2 c_3}{2}}{\ket[\qubit]{\boolVar \oplus \boolVar_1 \oplus
	\boolVar_2}\ket[\qubit_1]{c_3}\classKet[\cbit]{c_3}}{\frac{1}{\sqrt{8}}}                                                           \\
	\noalign{\vskip 0.6ex}
	\hdashline
	\noalign{\vskip 0.6ex}
                11 & \xRightarrow{\mathrm{CV}(\boolVar_1 := \boolVar_1 \oplus \boolVar_2)} &
	\sum_{\boolVar_3, \boolVar_2, \boolVar_1}
	\sumpasumsplit{\frac{\boolVar_1}{8}
		+ \frac{\boolVar_1 \boolVar_2}{4} +
		\frac{3 \boolVar_2}{4} + \frac{\boolVar_2 \boolVar_3}{2}
	}{\ket[\qubit]{\boolVar \oplus \boolVar_1}
	\ket[\qubit_1]{c_3}\classKet[\cbit]{c_3}}{\frac{1}{\sqrt{8}}}                                                                      \\
    12 & \xRightarrow{\mathrm{Phase-Bisector}(\boolVar_2)}                     &
	\sum_{\boolVar_3,\boolVar_1} \sumpasumsplit{
		\frac{\boolVar_1}{4} + \frac{3}{8} + \frac{\boolVar_3}{4}
	}{\ket[\qubit]{\boolVar \oplus \boolVar_1}\ket[\qubit_1]{c_3}\classKet[\cbit]{c_3}}
	{\frac{1}{\sqrt{8}} \cdot n}                                                                                                       \\
    13 & \xRightarrow{\mathrm{SplitClass}(\boolVar_3)}                         &
	\sum_{\boolVar_1} \sumpasumsplit{\frac{3}{8} +
		\frac{1}{2}+\frac{\boolVar_1}{4}}{\ket[\qubit]{\boolVar\oplus\boolVar_1}\ket[\qubit_1]{0}\classKet[\cbit]{0}}{\frac{1}{2}
		\sqrt{2}^{\boolVar_1}}
	\\
      & \oplus                                                                &
	\sum_{\boolVar_1} \sumpasumsplit{\frac{5}{8} + \frac{1}{2} +
		\frac{\boolVar_1}{4}}{\ket[\qubit]{\boolVar\oplus\boolVar_1}\ket[\qubit_1]{1}\classKet[\cbit]{1}}{\frac{1}{2}
		\cdot (1 \oplus \boolVar_1)}
	\\
    14 & \xRightarrow{\mathrm{CV}(\boolVar_1 := 1 \oplus \boolVar_1)}          &
	\sum_{\boolVar_1} \sumpasumsplit{\frac{3}{8} +
		\frac{1}{2}+\frac{\boolVar_1}{4}}{\ket[\qubit]{\boolVar\oplus\boolVar_1}\ket[\qubit_1]{0}\classKet[\cbit]{0}}{\frac{1}{2}
		\sqrt{2}^{\boolVar_1}}
	\\
      & \oplus                                                                &
	\sum_{\boolVar_1} \sumpasumsplit{\frac{5}{8} + \frac{1}{2} +
		\frac{1 \oplus
			\boolVar_1}{4}}{\ket[\qubit]{\boolVar\oplus\boolVar_1\oplus 1}\ket[\qubit_1]{1}\classKet[\cbit]{1}}{\frac{1}{2}
		\cdot \boolVar_1}
	\\
    15 & \xRightarrow{\mathrm{Filter}(\boolVar_1)}                             &
	\sum_{\boolVar_1} \sumpasumsplit{\frac{3}{8} +
		\frac{1}{2}+\frac{\boolVar_1}{4}}{\ket[\qubit]{\boolVar\oplus
			\boolVar_1}\ket[\qubit_1]{0}\classKet[\cbit]{0}}{\frac{1}{2}
		\sqrt{2}^{\boolVar_1}}
	\\
      & \oplus                                                                &
	\sumpasumsplit{\frac{5}{8} +
		\frac{1}{2}}{\ket[\qubit]{\boolVar}\ket[\qubit_1]{1}\classKet[\cbit]{1}}{\frac{1}{2}}
	\\
    16 & \xRightarrow{\mathrm{Phase-Elim}}                                     &
	\sum_{\boolVar_1}
	\sumpasumsplit{\frac{\boolVar_1}{4}}{\ket[\qubit]{\boolVar\oplus \boolVar_1}\ket[\qubit_1]{0}\classKet[\cbit]{0}}{\frac{1}{2}
		\sqrt{2}^{\boolVar_1}}

	\\
                                                                          & \oplus                                                                &
	\sumpasumsplit{0}{\ket[\qubit]{\boolVar}\ket[\qubit_1]{1}\classKet[\cbit]{1}}{\frac{1}{2}}
	\\
\end{array}\]
where $n = 2 \cos\left(2\pi \frac{3 + \boolVar_1 + 2 \boolVar_3}{8}\right)$
    \caption{Derivation of the first iteration of the loop in
    \cref{lst:rus-unitary}.}%
    \label{fig:rus-unitary-derivation}
\end{figure}
We can then pick, for the sake of applying \cref{thm:generic-strategy},
$\h^\bool_{\mathrm{RUS}} \equaldef
\pasum{0}{\ket[\qubit]{\boolVar}\ket[\qubit_1]{1}\classKet[\cbit]{1}}{\frac{1}{2^{\intVar}}}$
and $\h^{\lnot \bool}_{\mathrm{RUS}} \equaldef \lift(\intVar \neq 0) \cdot \sum_{\boolVar_1}
\pasum{\frac{\boolVar_1}{4}}{\ket[\qubit]{\boolVar\oplus
    \boolVar_1}\ket[\qubit_1]{0}\classKet[\cbit]{0}}{\frac{1}{2^{\intVar}}
\sqrt{2}^\boolVar_1}$. It is relatively direct to observe that the derivation in
\cref{fig:rus-unitary-derivation} can be adapted nearly verbatim to show that
$\hoare{\h^\bool_{\mathrm{RUS}}}{\prog}{\left(\h^\bool_{\mathrm{RUS}} \oplus \h^{\lnot
\bool}_{\mathrm{RUS}}\right)[\intVar + 1 / \intVar]}$, with the only difference
being that the norm terms in all \ihps{} that appear in the derivation should
now be multiplied by $\frac{1}{2^{\intVar}}$, and the terms where $\qubit$
contains $0$ must also be multiplied by $\lift(\intVar \neq 0)$.

Finally, by applying \cref{thm:generic-strategy} (and\ \ref{rule:seq}), a Hoare
triple for the program as a whole can be derived describing clearly that the
program will implement the unitary $U$ on $\qubit$ with probability 1, and that
the distribution of the number of iterations before success is $\probability(\intVar)
= \frac{1}{2^{\intVar}} \cdot \frac34$.
\[
    \hoare{\pasum{0}{\ket[\qubit]{\boolVar}}{1}}{\ref{prog:rus-unitary}}{
    \sum_{\intVar > 0} \sum_{\boolVar_1}
\pasum{\frac{\boolVar_1}{4}}{\ket[\qubit]{\boolVar\oplus
    \boolVar_1}\ket[\qubit_1]{0}\classKet[\cbit]{0}\pastKet[\Z]{\intVar}}{\frac{1}{2^{\intVar}}
\sqrt{2}^{\boolVar_1}}}\]

\ifarXiv{\newpage}\fi

\subsubsection{Quantum Bernoulli Factory (QBF)}\label{sec:rus-bernoulli}

\phantom{as}

\begin{wrapfigure}[8]{r}{0.35\textwidth}
    \centering
	\begin{minipage}{0.27\textwidth}
    \vspace{-1em}
        \begin{lstlisting}[caption={The
            \labelthis{repeatedQBF}{\textsc{QBF}} program.},
                label={lst:rus-bernoulli}]
qubit $\qbit$; bit $\cbit$; int $\cInt$;
while ($\lnot \cbit$) do
    U$_p$($\qbit$);
    $\cbit$ := measure $\qbit$;
    x := x + 1
done
            \end{lstlisting}
	\end{minipage}
\end{wrapfigure}

A Quantum Bernoulli Factory (QBF) is a quantum circuit that can produce a random
bit with some probability $p$, given access to a unitary of the form $U_p =
\sqrt{p} I + \sqrt{1-p} X$. It is a prototypical example of a quantum
\texttt{while} program of interest for resource estimation~\cite{LZB+25}.  QBF
is similar to the \ref{cointoss} example, except that the unitary $U_p$ does not
actually belong to the Clifford+$R_k$ set of primitives of $\lang{}$ (see
\cref{sec:language-syntax}). However, we can define its behavior with a family
of axioms $\logicContext$ as follows:

\begin{center}
    $\logicContext_{\mathrm{QBF}} \equaldef \left\{\hoare{\h}{U_p}{\sqrt{p} \h + \sqrt{1-p}
    \h[\qubit:=\lnot{\qubit}]} \mid \h \in \hpsType\right\}$
\end{center}
where $\h[\qubit:=\lnot{\qubit}]$ is the path-sum $\h$ where the expression in
$\qubit$ is negated. 

The heuristic \ref{rule:collect} from \cref{sec:generic-strategy} can then be
applied with $ \h_{\textsc{QBF}}^b \equaldef
\pasuminline{0}{|0\rangle_\qbit{[0]}_\cbit}{\sqrt{p^{\intVar}}} $ and $
\h_{\textsc{QBF}}^{\lnot b} \equaldef \lift(\intVar \neq 0) \cdot
\pasuminline{0}{|1\rangle_\qbit{[1]}_\cbit}{\sqrt{p^{\intVar}} \sqrt{1-p}}$ to
conclude with a final derivation:
\[
    \logicContext_{\mathrm{QBF}} \vdash \hoare{\pasum{0}{|0\rangle_\qbit}{1}}{\ref{repeatedQBF}}{
    \sum_{\intVar > 0} \pasum{0}{|1\rangle_\qbit{[1]}_\cbit\pastKet[\Z]{\intVar}}{\sqrt{p^{\intVar}} \sqrt{1-p}}}
\]

\subsubsection{Weak measurements}\label{sec:weak-measurements}

Another application is to repeat the so-called \emph{weak
$\kappa$-measurement}~\cite{AMH22} which trades a decrease in the success
probability of a quantum measurement from $p$ to $\kappa p$ for the possibility
not to lose the state entirely when measurement fails. In the case of failure,
instead of losing all the amplitude of the desired state, the weak measurement
only drops it by a factor of $\sqrt{1-\kappa}$. In this example, there is
usually a non-zero probability of divergence; in fact, this is precisely the
probability that the (standard) measurement fails. Using our framework, we show
that $\kappa$ is a genuine measure of strength: conditional upon success, the
expected number of measurements required to reach it is $\frac 1 {\kappa}$.

This is also another occasion to illustrate axiomatic analysis in the framework.
Instead of implementing a specific state preparation algorithm and the weak
measurement algorithm concretely, we can instead define them as the oracles
\texttt{Prepare} which prepares a state $\h_0$ on a signature $\signature_0$ to
be measured and \texttt{WeakMeas-$\kappa$} which performs the weak measurement
storing the result in a bit $\cbit$. We can use these oracles by giving them
the following Hoare-logic specifications as axioms.
\[
    \logicContext_{\textsc{WM}} \equaldef \left\{
        \begin{array}{lll}
            \{\h\} & \texttt{Prepare} & \{\h \otimes \h_0\} \mid \h \in \ihps,\\
            \{\h\}&\texttt{WeakMeas-}\kappa&\{\sqrt{\kappa} \cdot
        Q * \h_{\top} + \sqrt{1-\kappa} \cdot Q * \h_{\bot} + (\lnot Q) *
\h_{\bot}\} \mid \h \in \ihps{}
            \end{array}
    \right\}
\]

Here, the semantics of \texttt{Prepare} is straightforward: it simply ignores
what's in $\h$, and prepares a new state $\h_0$ along-side it. As for
\texttt{WeakMeas-$\kappa$}, it results in two branches: either the measurement
succeeds, in which case we project the state $\h$ onto the subspace satisfying
the predicate $Q$ (denoted $Q * -$)\footnote{Assuming $Q$ is expressed as a
boolean expression}, or it fails, in which case we keep the
failing part of the state ($\lnot Q$) and reduce the amplitude of the success
part by $\sqrt{1-\kappa}$\footnote{For the article, $\kappa$ is assumed to be a
constructible number. In the implementation $\libname$, $\kappa$ can be a
formal variable.}.  The result of the measurement is stored in $\cbit$
(specifically, $\h_{\bot}$ is $\h$ marked with failure and $\h_{\top}$ is $\h$
marked with success; i.e., $\models \hoare{\h}{\assign{\cbit}{0}}{\h_{\bot}}$,
and $\models \hoare{\h}{\assign{\cbit}{1}}{\h_{\top}}$).

\begin{wrapfigure}[9]{r}{0.33\textwidth}
    \vspace{-1em}
    \centering
	\begin{minipage}{0.25\textwidth}
        \begin{lstlisting}[label=lst:weak-measurement, caption={The repeated
        \labelthis{weakMeasurement}{\textsc{WeakMeas}} program.}]
Prepare;
int x; bit c;
while ($\lnot$c) do
    WeakMeas-$\kappa$;
    x := x + 1;
done
        \end{lstlisting}
	\end{minipage}
\end{wrapfigure}
With these oracles constructed, we now write a program
(\cref{lst:weak-measurement}) designed to express
the (conditional) average number of weak measurements necessary for a
positive measurement.

Given an initial state $\h_0$, and a classical predicate $Q$\footnote{assumed to
be expressible as a boolean expression}, we
can split $\h_0$ into orthogonal parts $\h_{\mathrm{good}}$ and
$\h_{\mathrm{bad}}$
satisfying or not the predicate $Q$ respectively. The invariant of weak
measurement is then given by $\h^b = \h_{\mathrm{bad}} \otimes \classKet[\cbit]{0}
\otimes {[\intVar]}_\cInt + \sqrt{{(1-\kappa)}^\intVar} \; \h_{\mathrm{good}} \otimes
\classKet[\cbit]{0} \otimes {[\intVar]}_\cInt$ and $\h^{\lnot b} =
\sqrt{{(1-\kappa)}^{\intVar-1} \kappa} \; \h_{\mathrm{good}} \otimes \classKet[\cbit]{1}
\otimes {[\intVar]}_\cInt$ with the limit state being $\h_\infty =
\sum_{\intVar>0}
\sqrt{{(1-\kappa)}^{\intVar-1} \kappa} \; \h_{\mathrm{good}} \otimes
\classKet[\cbit]{1} {[\intVar]}_\cInt$; that is, we can derive the triple:
\[
    \hoare{\pasum{0}{\emptyset}{1}}{\ref{weakMeasurement}}{
    \h_\infty =
        \sum_{\intVar>0} \sqrt{{(1-\kappa)}^{\intVar-1} \kappa} \;
    \h_{\mathrm{good}} \otimes \classKet[\cbit]{1} {[\intVar]}_\cInt}
\]

In conclusion, \ref{weakMeasurement} succeeds with
probability $\sum_{\intVar=1}^{\infty} {(1 - \kappa)}^{\intVar-1} \kappa
|\h_{\mathrm{good}}|^2 = |\h_{\mathrm{good}}|^2$. Moreover, conditional on
success, the expected number of weak measurements performed before halting is
$\frac 1 {\kappa}$. That is, $\kappa$ does indeed express the strength of the
measurement: $\kappa = 1$ is a strong measurement, and as $\kappa$ decreases,
the measurement is buffered over more and more iterations.


\subsection{Nested While Loops}\label{sec:nested-while}

\begin{wrapfigure}[17]{r}{0.39\textwidth}
    \centering
    \vspace{-1em}
    \begin{minipage}{0.30\textwidth}
        \begin{lstlisting}[caption={\labelthis{nestedWhile}{\textsc{Nested}}
        while loops.},
                label={lst:nested-while}]
bit c1; bit c2;
int x1; int x2;
qubit q1; qubit q2;
while($\lnot$c2) do
    H(q2); c1 := 0;
    while($\lnot$c1) do
        H(q1);
        c1 := measure q1;
        R(q2, r);
        x1 := x1 + 1;
    done;
    H(q2);
    c2 := measure q2;
    x2 := x2 + 1; x1 := 0;
done
        \end{lstlisting}
    \end{minipage}
\end{wrapfigure}
Using \ihps{}, it is possible to analyze the
behavior of complex \texttt{while} loops symbolically without necessarily having
to calculate difficult limits over the reals or complex numbers. In fact, as
long as a sequence of complex numbers ${(n[\intVar] e^{2\pi i
p[\intVar]})}_{\intVar \in \N}$ can be expressed in closed form using the syntax
of \normType{} and \phaseType{}, its limit can be expressed symbolically
by injecting it into an \ihps{} $\lim_\intVar \pasum{p}{\emptyset}{n}$ with
empty memory.

Consider, for example, the nested loop in the~\ref{nestedWhile} program in
\cref{lst:nested-while}. Let \labelthis{innerLoop}{\textsc{InnerLoop}} be the
section of the program between lines 6 and 11, and
\labelthis{outerLoop}{\textsc{OuterLoop}} be the section between lines 4 and 15.
This program is designed specifically to stress test the capabilities of our symbolic
representation in the context of nested, communicating \texttt{while} loops.
The\ \ref{innerLoop} program is very similar to the\ \ref{cointoss} program we
have been using as a running example, which tosses qubit $\qbit_1$, except that
it also applies a rotation (line 9) $\texttt{R(qi, r)}$ to a different qubit
$\qbit_2$ which is in a Hadamard basis state $\ket +$ or $\ket -$ at each
iteration of the loop. As such, the final phase applied to $\qbit_2$ after
exiting the\ \ref{innerLoop} depends on the number of iterations
$k$ of the inner loop before exiting the\ \ref{innerLoop}. This, in turn,
influences the probability distribution in the measurement of the second qubit
$\qbit_2$ in the~\ref{outerLoop} program. In fact, each iteration of
the~\ref{outerLoop} roughly corresponds to applying $H \cdot {R(r)}^k \cdot H$
to $\qbit_2$ before measuring it, thereby making the probability of measuring
$0$ or $1$ in $\cbit_2$ dependent on $k$.

By a analysis similar to that of\ \ref{cointoss} applied this time
to~\ref{innerLoop}, starting at line 6, from a state
$\h_{\mathtt{inner},0} \equaldef \sum_\boolVar \pasuminline{0}{\ket[\qbit_1]{0}
\ket[\qbit_2]{\boolVar} {[0]}_{\cbit_1} {[0]}_{\cbit_2}  {[0]}_{\cInt_1}
{[0]}_{\cInt_2}}{1},$ we reach, by line 11, the state
$\h_{\mathtt{inner},\infty} \equaldef \sum_\intVar \sum_\boolVar
\pasuminline{\frac {\intVar \boolVar} {2^{r}}}{\ket[\qbit_1]{1}
    \ket[\qbit_2]{\boolVar} {[1]}_{\cbit_1} {[0]}_{\cbit_2}
{[\intVar]}_{\cInt_1} {[0]}_{\cInt_2}}{\frac 1 {\sqrt{2^{\intVar+1}}}}.$ Then,
by line 14, we have reached the state $\h_{14} \equaldef \sum_{\boolVar'}
\sum_\intVar \sum_\boolVar \pasuminline{\frac {\intVar\boolVar} {2^{r}} + \frac
    {\boolVar\boolVar'}{2} }{\ket[\qbit_1]{1} \ket[\qbit_2]{\boolVar'}
    {[1]}_{\cbit_1} {[\boolVar']}_{\cbit_2} {[0]}_{\cInt_1}
{[1]}_{\cInt_2}{\pastKet{\intVar}}_{\Z}}{\frac 1 {\sqrt{2^{\intVar+1}}}}.$ with
$\boolVar'$ being introduced by an application of $H$ at line 12
and $\intVar$ being moved out of $\cInt_1$ and into the history
$\pastKet[\Z]{\intVar}$ by the
resetting of $\cInt_1$ to $0$ at line 14, and the fact that the first
iteration of the~\ref{outerLoop} has been performed being marked by 
incrementing $\cInt_2$ by $1$ at line 14 from $0$ to $1$.

At this point, we attempt to apply the \ref{rule:collect} strategy for analyzing
the~\ref{outerLoop} program by separating $\h_{14}$ into two parts
$\h^\bool[1/\intVar']$ and $\h^{\lnot \bool}[1/\intVar']$, according to the
values of $\boolVar'$. However, are struck by an issue: each iteration of
the~\ref{outerLoop} appears to introduce a new integer path variable $\intVar$,
meaning that the size of the \ihps{} itself depends on the number of iterations
of the~\ref{outerLoop}. To resolve that, we separate the probabilistic analysis
of the inner loop from that of the outer loop by factoring $\h_{14}$:
\[
    \h_{14} \equiv \sum_{\boolVar'}\left(\sum_{\boolVar} \sum_{k} \pasum{\frac {\intVar\boolVar}
        {2^{r}} + \frac {\boolVar\boolVar'}{2}}{{\pastKet{k}}_{\Z}}{\frac 1 {\sqrt{2^{\intVar+1}}}}
        \otimes \pasum{0}{\ket[\qbit_1]{1} \ket[\qbit_2]{\boolVar'} {[1]}_{\cbit_1}
    {[\boolVar']}_{\cbit_2} {[0]}_{\cInt_1} {[1]}_{\cInt_2}}{1} \right)
\]
Then, we define two \ihps{} $\alpha$ and $\beta$ with null signatures (i.e.\
scalars):
\[
    \alpha \equaldef \sum_{\boolVar} \sum_{\intVar} \pasum{\frac
    {\intVar\boolVar} {2^{r}} }{\pastKet[\Z]{\intVar}}{\frac 1
    {\sqrt{2^{\intVar+1}}}} \quad \mathrm{and} \quad \beta \equaldef
    \sum_{\boolVar} \sum_{\intVar} \pasum{\frac {\intVar\boolVar} {2^{r}} +
    \frac {\boolVar}{2} }{\pastKet[\Z]{\intVar}}{\frac 1 {\sqrt{2^{\intVar+1}}}}
\]
Indeed, since $\signature(\alpha) = \signature(\beta) = \emptyset$, both $\alpha$ and
$\beta$ are interpreted in CQ states as no more than the scalar probability of
obtaining the measurement outcomes $0$ and $1$ in $\cbit_2$ respectively:
\[
    \semRho{\alpha} = \sum_{\intVar} \frac 1 {2^{\intVar+1}} \left| \sum_\boolVar
    e^{2\pi i \frac {\intVar\boolVar} {2^{r}}} \right|^2 \quad \mathrm{and} \quad
    \semRho{\beta} = \sum_{\intVar} \frac 1 {2^{\intVar+1}} \left| \sum_\boolVar
        e^{2\pi i (\frac {\intVar\boolVar} {2^{r}} + \frac {\boolVar}{2})} \right|^2
\]
Those probabilities are highly non-trivial, and yet, we can represent them
symbolically and work however we wish with them without ever having to
explicitly compute them as concrete real numbers.
In any case, this allows us to
express the loop invariant of the~\ref{outerLoop} program as parametrized by the
number of iterations $\intVar'$ as such:
\begin{align*}
    \h_{\mathtt{outer}}^{\bool}[\intVar'] &\equaldef \alpha \beta^{\intVar'} \otimes
    \pasum{0}{\ket[\qbit_1]{1} \ket[\qbit_2]{1} {[1]}_{\cbit_1}
    {[1]}_{\cbit_2} {[0]}_{\cInt_1} {[\intVar'+1]}_{\cInt_2}}{1} \\
    \h_{\mathtt{outer}}^{\lnot \bool}[\intVar'] &\equaldef \beta^{\intVar'+1} \otimes
    \pasum{0}{\ket[\qbit_1]{1} \ket[\qbit_2]{0} {[1]}_{\cbit_1}
    {[0]}_{\cbit_2} {[0]}_{\cInt_1} {[\intVar'+1]}_{\cInt_2}}{1}
\end{align*}
where \[
    \alpha^{\intVar'} \equaldef \bigotimes_{\intVar''} \left(\pasum{0}{\emptyset}{\lift(l \leq \intVar')}
    \otimes \alpha + \pasum{0}{\emptyset}{\lift(1 \oplus (\intVar'' \leq \intVar'))}\right)
\]
By applying \cref{thm:generic-strategy}, we reach:
\[
    \models \hoare{\pasum{0}{\emptyset}{1}}{\ref{nestedWhile}}{\sum_{\intVar'}
    \left(\alpha^{\intVar'} \otimes \pasum{0}{\ket[\qbit_1]{1} \ket[\qbit_2]{1}
{[1]}_{\cbit_1} {[1]}_{\cbit_2} {[0]}_{\cInt_1}
{[\intVar']}_{\cInt_2}}{1}\right)}
\]
which allows us to deduce the probability distribution of the number of
iterations of the~\ref{outerLoop}, despite the nesting and the communication
of the two loops, as $\probability(x_2 = \intVar') = \|\alpha\|^2 \cdot
\|\beta^{\intVar'}\|^2$.
Once again, we can also extract the probability of termination as
$\probability(\mathrm{termination}) = \|\alpha\|^2 \cdot \sum_{\intVar' \in \N}
\|\beta\|^{2\intVar'}$. It is then clear that this is a geometric series of
ratio $\|\beta\|^2<1$; therefore, it can be rewritten as
\[\probability(\mathrm{termination}) = \frac {\|\alpha\|^2} {1 - \|\beta\|^2} = 1\]

While this example is admittedly ad-hoc, it is designed to stress test the
framework so as to illustrate what can be done by the \ihps{} symbolic
representation. Specifically, it illustrates how symbolic execution can be
performed and composed in the context of nested \texttt{while} loops, all
without requiring the computation of limits of sequences over real or complex
numbers.

\ifarXiv{\newpage}\fi
\section{Implementation}%
\label{sec:implementation}

\begin{wrapfigure}[12]{r}{0.60\textwidth}
    \centering
    \vspace{-1em}%
    \begin{minipage}{0.5\textwidth}
        \begin{lstlisting}[caption={Excerpt of \texttt{coin-toss.hyq} containing
        the invariant
        annotation},label={lst:annotation},basicstyle=\ttfamily\small]
while 
{x : Int,
($\Sigma$_{y $\in \mathbb N$} ($\langle $0, liftC($\uparrow$((y $\leq$ x))) * liftC(1)/sqrt(liftC(2^(y+1))) $\cdot$ |1$\rangle $_q[1]_{c : $\mathbb B$}[y]_{x : $\mathbb Z$}$\rangle $)) 
+ ($\langle $0, liftC(1)/sqrt(liftC(2^(x+1))) $\cdot$ |0$\rangle $_q[0]_{c : $\mathbb B$}[x]_{x : $\mathbb Z$}$\rangle $) } 
!c do
        \end{lstlisting}
    \end{minipage}
\end{wrapfigure}

The technical material of this article is implemented in a Haskell library
$\libname{}$. The library includes an implementation of \lang{} from
\cref{sec:language} with
invariant annotations
(parser, AST, signatures, well-\hspace{0pt}formedness), the semantic spaces ($\hilbert[ \signature ], \fockSpace[
\signature ]$ in \cref{def:basis-signature,def:fock-space}), the \ihps{}
representation of \cref{sec:hps} (inductive types in
\cref{def:components,def:hps}, interpretation in
\cref{def:hps-interpretation,def:density-operator-interpretation}, equivalence
\cref{def:equivalence-hps}, filtering and projection, \ldots{} \appx), and a
forward-driven symbolic execution engine using the rules of the logic of
\cref{fig:hoare-logic} which raises proof obligations for the
\emph{initialization} and \emph{conservation} of loop invariants. The library is
accompanied by an executable for demonstration.
We explain the functioning of $\libname$ more concretely with the \ref{cointoss}
example.  We write the source code of \ref{cointoss} \appx{} in a file
\texttt{coin-toss.hyq} with the \texttt{while} loop being annotated with
invariants as the excerpt in \cref{lst:annotation}, and the final \TeX output is rendered in \cref{fig:cointoss-output}.

\begin{figure}[htpb]
    \centering
        \begin{minipage}{0.9\textwidth}
            \small
            \[\text{Result:}\quad \vDash \left\{\pasum{0}{\emptyset}{1}\right\}\texttt{prog}\left\{\lim_{x\in \mathbb{N}} \left(\sum_{y \in \mathbb{N}} \left(\pasum{0}{\ket[\mathtt{q}]{1}\classKet[\mathtt{c}]{1}\classKet[\mathtt{x}]{y}}{\frac{\uparrow\left(y \leq x\right) \cdot 1}{\sqrt{2^{y + 1}}}}\right)\right)\right\}\]

            \noindent \textbf{\large Proof obligations produced:}

            \emph{Sanity checks done with floats of absolute tolerance \textcolor{red}{1.0e-15} and for free integer variables ranging from \textcolor{red}{0 to 100}}:

            \noindent Equivalence 1 (Initialization) --- \textcolor{blue}{PASSED} semantic sanity check: \ensuremath{\textcolor{gray}{\text{FOUND} \equiv \text{EXPECTED}}}
            \begin{dmath*} \sum_{c_0 \in \mathbb{B}} \left(\pasum{0}{\ket[\mathtt{q}]{c_0}\classKet[\mathtt{c}]{c_0}\classKet[\mathtt{x}]{0}\pastKet[\mathbb{B}]{0}}{\frac{1}{\sqrt{2}}}\right)\equiv \left(\sum_{y \in \mathbb{N}} \left(\pasum{0}{\ket[\mathtt{q}]{1}\classKet[\mathtt{c}]{1}\classKet[\mathtt{x}]{y}}{\frac{\uparrow\left(y \leq 0\right)}{\sqrt{2^{y + 1}}}}\right) \psadd \pasum{0}{\ket[\mathtt{q}]{0}\classKet[\mathtt{c}]{0}\classKet[\mathtt{x}]{0}}{\frac{1}{\sqrt{2^{1}}}}\right)\end{dmath*}

            \noindent Equivalence 2 (Conservation) --- \textcolor{blue}{PASSED} semantic sanity check: \ensuremath{\textcolor{gray}{\text{FOUND} \equiv \text{EXPECTED}}}
            \begin{dmath*} \left(\sum_{c_0 \in \mathbb{B}} \left(\pasum{0}{\ket[\mathtt{q}]{c_0}\classKet[\mathtt{c}]{c_0}\classKet[\mathtt{x}]{\left( x + 1 \right)}\pastKet[\mathbb{B}]{0}\pastKet[\mathbb{Z}]{x}}{\frac{1}{\sqrt{2^{x + 2}}}}\right) \sqpsadd \sum_{y \in \mathbb{N}} \left(\pasum{0}{\ket[\mathtt{q}]{1}\classKet[\mathtt{c}]{1}\classKet[\mathtt{x}]{y}}{\frac{\uparrow\left(y \leq x\right) \cdot 1}{\sqrt{2^{y + 1}}}}\right)\right)\equiv \left(\sum_{y \in \mathbb{N}} \left(\pasum{0}{\ket[\mathtt{q}]{1}\classKet[\mathtt{c}]{1}\classKet[\mathtt{x}]{y}}{\frac{\uparrow\left(y \leq \left( x + 1 \right)\right)}{\sqrt{2^{y + 1}}}}\right) \psadd \pasum{0}{\ket[\mathtt{q}]{0}\classKet[\mathtt{c}]{0}\classKet[\mathtt{x}]{\left( x + 1 \right)}}{\frac{1}{\sqrt{2^{x + 2}}}}\right)\end{dmath*}

        \end{minipage}
    \caption{Output of \ihps{} on the annotated source code \texttt{coin-toss.hyq} of
    \ref{cointoss}.}\label{fig:cointoss-output}
\end{figure}

In \cref{lst:annotation}, line 2 defines
the iteration index $\intVar$ used in the invariant. Lines 3 and 4 then show the
invariant \ihps{} itself given by the exiting cases (line 3) and the non-exiting
case (line 4). 
When we call the tool on this file using 
\verb|$ cabal run ihps -- < coin-toss.hyq|, the minimal signature $\signature$
for which the program is well-formed; that is, such that $\prog \in
\programSet_\signature$, is computed, if it exists.  Next, an initial \ihps{} of
the form $\h_0 \equaldef \pasum{0}{m}{1}$ where $m$ is the memory of signature
$\signature$ and where all the addresses are assigned to 0 is generated. For
\ref{cointoss}, $\signature=\emptyset$ and $\h_0 = \pasum{0}{\emptyset}{1}$. The
tool is then able to compute the \ihps{} $\h$ and a set of equivalences
$\equivContext$ such that $\models
\judgement{\;}{\equivContext}{\hoare{\h_0}{\prog}{\h}}$ by forward application
of the logic rules of \cref{fig:hoare-logic}, along the way raising proof
obligations in the form of initialization and conservation equivalences in
$\equivContext$, the resolution of which is left for future work via proof
assistants and/or SMT solvers. The equivalences do pass through a semantic check
which does detect most errors. The tool is efficient: the generation of the
\ihps{} and equivalences takes milliseconds, with the remaining time being
dominated by the semantic check, but remaining on the order of seconds for the
considered examples for all practical purposes.  Finally,
\libname{} then pretty-prints a `report' of the results in \TeX{} format. For
\ref{cointoss}, this produced \TeX{} is, verbatim, the code of
\cref{fig:cointoss-output}.

In \cref{fig:cointoss-output}, we see the resulting derived Hoare triple as
well as the equivalences in $\equivContext$. The tool also states that the
equivalences have passed a semantic sanity check and describes its parameters:
the floating-point CQ interpretations are compared with absolute tolerance
$10^{-15}$ and the invariant conservation equivalences are tested for free
integer variables ranging from 0 to 100, parameters choosable by the user with
command-line arguments.  Note also that the \ihps{} expressions are largely
simplified to a more readable form.

Finally, $\libname$ can handle analyses with real-valued parameters, such as the
analysis of \ref{repeatedQBF} where the value $p$ is a formal symbolic variable.
This means that the analysis is valid universally over the values of $p$. For
the semantic check of this universal validity, the user can specify a range of
values to be tested for $p$.

\section{Conclusion and Future Work}%
\label{sec:conclusion}

We have introduced \lang{} and \ihps{}, forming a framework for symbolic
execution and reasoning about hybrid quantum programs that allows the analysis
of quantum programs with unbounded loops. This static analysis framework is
essential for understanding the behavior of quantum programs in the upcoming era
of practical quantum computing, where testing and benchmarking remain nearly
impossible.  In the future, we seek to extend the expressiveness and the level
of automation of the implementation and integrate the rewrite system into it.
Some open questions in this regard are: what syntactic restrictions can be
imposed on programs to ensure the decidability of the analysis, and when it is
decidable, is the complexity of the analysis reasonable? Another future
direction of work is certainly to loosen the analysis: instead of producing
exact results about fixpoints, expectations, and probabilities, we could aim for
approximations and bounds. This could include abstract interpretation techniques
such as interval analysis, where bounds on the results can be obtained. In
short, we believe this work to be a foundation for a rich framework for the
analysis of the rich hybrid quantum programs expected to run on near-term
quantum computers.

\bibliographystyle{splncs04}
\bibliography{bibliography.bib}

\vfill
\pagebreak
\appendix
\crefname{section}{Appendix}{Appendices}
\Crefname{section}{Appendix}{Appendices}


\section{Exhaustive formalism}%
\label{app:exhaustive-formalism}

In this section of the appendix, we provide the full details of the formalism
which are excluded from the main text for readability and comprehensibility.

\subsection{Memory behavior of programs}%
\label{app:memory-behavior}

As mentioned in Section~\ref{sec:language}, not all programs are valid starting from
any state. We define the validity of a program $\prog$ over a state according to
the memory allocation profile or $\signature$ of the state by a judgement
$\validProg{\signature_1}{\prog}{\signature_2}$ read as ``the program $\prog$ is
valid to be executed starting from a state with memory allocation profile
$\signature_1$ and will result in a state with memory allocation profile
$\signature_2$''. The rules defining this judgement are given in
\cref{fig:language-well-formedness}, and notably include a check on the
unitarity of the program in the~\ref{rule:valid:unitary} rule.

\begin{figure}[H]
        \begin{prooftree}
            \AxiomC{}
            \RightLabel{\labelrule{rule:valid:skip}{\textsc{Skip}}}
            \UnaryInfC{$\validProg{\signature }{ \skipProg }{ \signature}$}
            \DisplayProof\hskip 1em
            \AxiomC{$\validProg{\signature_1 }{ \prog_1 }{ \signature_2}$}
            \AxiomC{$\validProg{\signature_2 }{ \prog_2 }{ \signature_3}$}
            \RightLabel{\labelrule{rule:valid:seq}{\textsc{Seq}}}
            \BinaryInfC{$\validProg{\signature_1}{\sequenceProg{\prog_1}{\prog_2}}{
            \signature_3}$}
            \DisplayProof\vskip 1em
            \AxiomC{$\qubit_1, \ldots, \qubit_n$ distinct}
            \AxiomC{$\{\qubit_1, \ldots, \qubit_n\} \subseteq \signature$}
            \RightLabel{\labelrule{rule:valid:unitary}{\textsc{Unitary}}}
            \BinaryInfC{$\validProg{\signature}{\unitary(\qubit_1, \ldots,
                \qubit_n)}{ \signature}$}
            \DisplayProof\vskip 1em
            \AxiomC{$\cbit \in \signature$}
            \AxiomC{$\qubit \in \signature$}
            \RightLabel{\labelrule{rule:valid:measure}{\textsc{Measure}}}
            \BinaryInfC{$\validProg{\signature }{ \measure{\cbit}{\qubit} }{ \signature}$}
            \DisplayProof\vskip 1em
            \AxiomC{$\cInt \in \signature$}
            \AxiomC{$\mathrm{Var}(\integer) \subseteq \signature$}
            \RightLabel{\labelrule{rule:valid:intassign}{\textsc{IntAssign}}}
            \BinaryInfC{$\validProg{\signature }{ \assign{\cInt}{\integer} }{ \signature}$}
            \DisplayProof\hskip 1em
            \AxiomC{$\cbit \in \signature$}
            \AxiomC{$\mathrm{Var}(\bool) \subseteq \signature$}
            \RightLabel{\labelrule{rule:valid:boolassign}{\textsc{BoolAssign}}}
            \BinaryInfC{$\validProg{\signature }{ \assign{\cbit}{\bool} }{ \signature}$}
            \DisplayProof\vskip 1em

            \AxiomC{$\qbit \notin \signature$}
            \RightLabel{\labelrule{rule:valid:qinit}{\textsc{QInit}}}
            \UnaryInfC{$\validProg{\signature }{ \qubitInitProg{\qbit} }{ \signature \cup \{\qbit\}}$}
            \DisplayProof\hskip 1em
            \AxiomC{$\cbit \notin \signature$}
            \RightLabel{\labelrule{rule:valid:cinit}{\textsc{CInit}}}
            \UnaryInfC{$\validProg{\signature }{ \cbitInitProg{\cbit} }{ \signature \cup \{\cbit\}}$}
            \DisplayProof\vskip 1em

            \AxiomC{$\cInt \notin \signature$}
            \RightLabel{\labelrule{rule:valid:intinit}{\textsc{IntInit}}}
            \UnaryInfC{$\validProg{\signature }{ \intInitProg{\cInt} }{ \signature \cup \{\cInt\}}$}
            \DisplayProof\vskip 1em

            \AxiomC{$\mathrm{Var}(\bool) \subseteq \signature$}
            \AxiomC{$\validProg{\signature }{ \prog_1 }{ \signature'}$}
            \AxiomC{$\validProg{\signature }{ \prog_2 }{ \signature'}$}
            \RightLabel{\labelrule{rule:valid:if}{\textsc{If}}}
            \TrinaryInfC{$\validProg{\signature }{ \ifthene{\bool}{\prog_1}{\prog_2} }{ \signature'}$}
                \DisplayProof\vskip 1em
                \AxiomC{$\mathrm{Var}(\bool) \subseteq \signature$}
                \AxiomC{$\validProg{\signature }{ \prog }{ \signature}$}
                \RightLabel{\labelrule{rule:valid:while}{\textsc{While}}}
                \BinaryInfC{$\validProg{\signature }{ \while{\bool}{\prog} }{ \signature}$}
            \end{prooftree}
        \caption{Memory behavior of programs and the judgement
        $\validProg{\signature}{\prog}{\signature'}$}%
        \label{fig:language-well-formedness}
\end{figure}
\hfill\pagebreak

\subsection{Interpretations of terms under variable assignment environments}%
\label{app:interpretation-terms}

\begin{figure}[H]
    \centering
        \begin{subfigure}{.45\textwidth}
            \begin{align*}
                \semFock[\assignmentContext]{k} & \equaldef k \in \Z \\
                \semFock[\assignmentContext]{\intVar} & \equaldef \assignmentContext(\intVar) \in \Z \\
                \semFock[\assignmentContext]{\lift{\boolTerm}} & \equaldef \begin{cases}
                    0 & \text{if } \semFock[\assignmentContext]{\boolTerm} = 0 \\
                    1 & \text{if } \semFock[\assignmentContext]{\boolTerm} = 1
                \end{cases}  \\
                    \semFock[\assignmentContext]{\intTerm_1 + \intTerm_2} & \equaldef \semFock[\assignmentContext]{\intTerm_1} +
                    \semFock[\assignmentContext]{\intTerm_2} \\
                    \semFock[\assignmentContext]{\intTerm_1 \cdot \intTerm_2} & \equaldef \semFock[\assignmentContext]{\intTerm_1}
                    \cdot
                    \semFock[\assignmentContext]{\intTerm_2}
                \end{align*}
                \caption{Interpretation of $\intType$ in $\Z$}%
                \label{fig:integer-interpretation}
            \end{subfigure}
            \hfill
            \begin{subfigure}{.45\textwidth}
                \begin{align*}
                    \semFock[\assignmentContext]{y} & \equaldef \assignmentContext(y) \in \B \\
                    \semFock[\assignmentContext]{\intTerm_1 = \intTerm_2} & \equaldef \begin{cases}
                        1 & \text{if } \semFock[\assignmentContext]{\intTerm_1} = \semFock[\assignmentContext]{\intTerm_2} \\
                        0 & \text{otherwise}
                    \end{cases} \\
                        \semFock[\assignmentContext]{\intTerm_1 \leq \intTerm_2} & \equaldef \begin{cases}
                            1 & \text{if } \semFock[\assignmentContext]{\intTerm_1} \leq \semFock[\assignmentContext]{\intTerm_2} \\
                            0 & \text{otherwise}
                        \end{cases} \\
                            \semFock[\assignmentContext]{\boolTerm_1 \cdot \boolTerm_2} & \equaldef \semFock[\assignmentContext]{\boolTerm_1}
                            \cdot
                            \semFock[\assignmentContext]{\boolTerm_2}  \\
                            \semFock[\assignmentContext]{\boolTerm_1 \oplus \boolTerm_2} & \equaldef \semFock[\assignmentContext]{\boolTerm_1}
                            + 
                            \semFock[\assignmentContext]{\boolTerm_2}
                        \end{align*}
                        \caption{Interpretation of $\boolType$ in $\B$}%
                        \label{fig:boolTermean-interpretation}
                    \end{subfigure}
                    \begin{subfigure}{.45\textwidth}
                        \begin{align*}
                            \semFock[\assignmentContext]{\boolTerm/2^\intTerm} & \equaldef
                            \frac{\semFock[\assignmentContext]{\boolTerm}}{2^{\semFock[\assignmentContext]{\intTerm}}}
                            \\ \semFock[\assignmentContext]{p_1 + p_2} & \equaldef
                            \semFock[\assignmentContext]{p_1} + \semFock[\assignmentContext]{p_2} \\
                            \semFock[\assignmentContext]{\intTerm \cdot p} & \equaldef
                            \semFock[\assignmentContext]{\intTerm} \cdot
                            \semFock[\assignmentContext]{p}  
                        \end{align*}
                        \caption{Interpretation of $\phaseType$ in $\R$ }%
                        \label{fig:phase-interpretation}
                        \begin{align*}
                            \semFock[\assignmentContext]{\ket[\qbit]{\boolTerm}} & \equaldef
                            \ket[\qbit]{\semFock[\assignmentContext]{\boolTerm}}\\
                            \semFock[\assignmentContext]{{[\intTerm]}_\cInt} & \equaldef
                            {[\semFock[\assignmentContext]{\intTerm}]}^\intTerm_\cInt \\
                            \semFock[\assignmentContext]{m_1 \otimes m_2} & \equaldef 
                            \semFock[\assignmentContext]{m_1} \otimes \semFock[\assignmentContext]{m_2}
                        \end{align*}
                        \caption{Interpretation of $\memoryType$ in $\fockSpace$}%
                        \label{fig:memory-interpretation}
                    \end{subfigure}
                    \hfill
                    \begin{subfigure}{.45\textwidth}
                        \begin{align*}
                            \semFock[\assignmentContext]{\frac \intTerm {\sqrt{2^\intTerm}}} & \equaldef \frac{\semFock[\assignmentContext]{\intTerm}}{\sqrt{2^{\semFock[\assignmentContext]{\intTerm}}}}
                           \\
                            \semFock[\assignmentContext]{\cos(2\pi p)} & \equaldef \cos(2\pi \semFock[\assignmentContext]{p})\\
                            \semFock[\assignmentContext]{\sin(2\pi p)} & \equaldef \sin(2\pi \semFock[\assignmentContext]{p})\\
                            \semFock[\assignmentContext]{n_1 + n_2} & \equaldef \semFock[\assignmentContext]{n_1} + \semFock[\assignmentContext]{n_2}\\
                            \semFock[\assignmentContext]{n_1 \cdot n_2} & \equaldef \semFock[\assignmentContext]{n_1}
                            \cdot
                            \semFock[\assignmentContext]{n_2} \\
                            \semFock[\assignmentContext]{n_1 / n_2} & \equaldef \semFock[\assignmentContext]{n_1} / \semFock[\assignmentContext]{n_2} \\
                            \semFock[\assignmentContext]{\sqrt n} & \equaldef \sqrt{\semFock[\assignmentContext]{n}}
                        \end{align*}
                        \caption{Interpretation of $\normType$ in $\R$}%
                        \label{fig:norm-interpretation}
                    \end{subfigure}
                \caption{Interpretation of terms under variable assignment
                environments}%
                \label{fig:term-interpretations}
            \end{figure}

\subsection{Memory access}%
\label{app:get-function}

The functions $\bool * \h$ and $\h[a \ot t]$ require accessing the memory of
$\h$. Given that $\h$ is a complex expression which is not necessarily of the
form $\pasum{p}{m}{n}$ where the memory $m$ is accessible. Therefore, we need an
intermediate function $\redotimes{\cdot}$ which reduces $\h$ to an equivalent form
$\redotimes{\h} \equiv \h$ where, whenever $\h$ is of the form $\h_1 \otimes
\h_2$, either $\signature(\h_1) = \emptyset$ or $\signature(\h_2) =
\emptyset$; that is, one side of the tensor product contains all the (present)
memory needed for access. The function $\redotimes{\h}$ is defined inductively
on the structure of $\h$ as given in \cref{fig:redotimes}. Assuming this form,
we give the following definitions for $\bool * \h$ and $\h[a \ot t]$:

\begin{figure}[H]
    \centering
        \begin{alignat*}{2}
            \bool * \pasum{p}{m}{n} &\equaldef \pasum{p}{m}{\eval{\bool}{m} n} \\
            \bool * \sum_x \h &\equaldef \sum_x \bool * \h \\
            \bool * \lim_k \h &\equaldef \lim_k \bool * \h \\
            \bool * (\h_1 \psadd \h_2) &\equaldef \bool * \h_1 \psadd \bool * \h_2 \\
            \bool * (\h_1 \sqpsadd \h_2) &\equaldef \bool * \h_1 \sqpsadd \bool * \h_2
            \\
            \bool_1 * (\h_1 \otimes \h_2) &\equaldef (\bool_1 * \h_1) \otimes \h_2 &&
            \text{if } \mathrm{Var}(\bool_1) \subseteq \signature(\h_1) \\
            \bool_2 * (\h_1 \otimes \h_2) &\equaldef \h_1 \otimes (\bool_2 * \h_2) &&
            \text{if } \mathrm{Var}(\bool_2) \subseteq \signature(\h_2) \\
        \end{alignat*}
    \begin{alignat*}{2}
        \pasum{p}{m}{n}[a \ot t] &\equaldef \pasum{p}{\{\eval{t}{m}\} \otimes m[a \ot \eval{t}{m}]}{n} \\
        \left(\sum_x \h\right)[a \ot t] &\equaldef \sum_x \h[a \ot t] \\
        \left(\lim_k \h\right)[a \ot t] &\equaldef \lim_k \h[a \ot t] \\
        (\h_1 \psadd \h_2)[a \ot t] &\equaldef \h_1[a \ot t] \psadd \h_2[a \ot t] \\
        (\h_1 \sqpsadd \h_2)[a \ot t] &\equaldef \h_1[a \ot t] \sqpsadd \h_2[a \ot t] \\
        (\h_1 \otimes \h_2)[a \ot t] &\equaldef (\h_1[a \ot t]) \otimes \h_2 &&
        \text{if } \mathrm{Var}(t) \subseteq \signature(\h_1) \\
        (\h_1 \otimes \h_2)[a \ot t] &\equaldef \h_1 \otimes (\h_2[a \ot t]) &&
        \text{if } \mathrm{Var}(t) \subseteq \signature(\h_2)
    \end{alignat*}
    \caption{The functions $\bool * \h$ and $\h[a \ot t]$ applied to $\h$
    rewritten as $\redotimes{\h}$}%
    \label{fig:projections-and-filtering}%
\end{figure}

Certain desired algebraic rules, such as the one saying that if $\h_1 \equiv
\h_1'$ and $\h_2 \equiv \h_2'$, then $\h_1 + \h_2 \equiv \h_1' + \h_2'$, are not
technically sound for the equivalence introduced since erasable phases global to
$\h_1$ and $\h_2$ become relative in $\h_1 + \h_2$. Therefore, there is a need
to define a stronger equivalence relation $\equivStrong$ for which
$\assignmentContext \models \h_1 \equiv^{\mathtt P} \h_2$ is equivalent to
$\semFock[\assignmentContext]{\h_1} = \semFock[\assignmentContext]{\h_2}$. This
stronger equivalence is not directly used in this article, but remains a notable
technical detail for the soundness of certain rules of the equational theory. We
refer the reader to \HQbricks~\cite{CIN+26}, which explains $\equiv^{\mathtt P}$
in further detail via its \HQbricks{} analogue written $\equiv^{\mathtt P}$.

\begin{definition}[Strong equivalence]%
    \label{def:strong-equivalence}
    $\h_1, \h_2 \in \hpsType$ are \emph{strongly equivalent}, denoted $\h_1
    \equiv^{\mathtt{P}} \h_2$, if and only if
    \[
        \signature(\h_1) = \signature(\h_2)\ \mathrm{and}\ \forall \assignmentContext, \mathrm{s.t.} \mathrm{dom}(\assignmentContext) \supseteq
        \mathrm{Var}(\h_1) \cup \mathrm{Var}(\h_2), \; \semFock{\h_1} =
        \semFock{\h_2}
    \]
\end{definition}

\begin{restatable}[Correctness of
    $\redotimes{\cdot}$]{lemma}{restateCorrectnessRedotimes}%
    \label{lem:redotimes-correctness}
    For all $\h \in \hpsType$ that converge, that is, $\semFock{\h} \neq \bot$, we have
    \begin{enumerate}[(i)]
        \item $\h \equiv_s \redotimes{\h}$, and
        \item If $\h_1 \otimes \h_2$ is a subterm of $\redotimes{\h}$, then either
            $\signature(\h_1) = \emptyset$ or $\signature(\h_2) = \emptyset$.
    \end{enumerate}
\end{restatable}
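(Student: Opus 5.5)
The proof goes by structural induction on $\h$, following the inductive clauses of $\redotimes{\cdot}$ in \cref{fig:redotimes}. Both (i) and (ii) are carried along in the same induction, with the hypothesis that $\semFock[\assignmentContext]{\h} \neq \bot$ for the environments considered. I expect $\redotimes{\cdot}$ to work as follows. It is the identity on paths $\pasum{p}{m}{n}$ and on $\bigotimes_\var \h$, whose signature is $\emptyset$. It commutes with $\psadd$, $\sqpsadd$, $\sum_\var$ and $\lim_\intVar$. On a tensor $\h_1 \otimes \h_2$ it first reduces both factors, then distributes the tensor over the outermost $\psadd$, $\sqpsadd$, $\sum$ or $\lim$ of a factor, renaming bound variables apart when needed. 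When both factors are paths it merges them into $\pasum{p_1+p_2}{m_1 m_2}{n_1 n_2}$. It stops distributing as soon as one factor has empty signature.

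For (i), strong equivalence requires two things: $\signature$ is preserved, and $\semFock[\assignmentContext]{\cdot}$ is preserved for every $\assignmentContext$. Signature preservation is immediate from the definition of $\signature(-)$, since each clause only rearranges addresses among subterms with disjoint signatures. For the interpretation, the congruence clauses ($\psadd$, $\sqpsadd$, $\sum_\var$, $\lim_\intVar$) follow directly from the induction hypothesis applied pointwise in the environment $\assignmentContext[\var\mapsto v]$. The merging clause follows from $(n_1 e^{2\pi i p_1} m_1)\otimes(n_2 e^{2\pi i p_2} m_2) = n_1 n_2 e^{2\pi i(p_1+p_2)}\, m_1\otimes m_2$, using the convention that labelled tensor products are order-independent (\cref{def:basis-signature}). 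The distribution clauses rest on the fact that $(u,w)\mapsto u\otimes w$ is bilinear and bounded, with $\|u\otimes w\| = \|u\|\,\|w\|$. Bilinearity handles $\psadd$. For $\sqpsadd$ we compute $(\pastKet[\B]{0}\otimes v_1 + \pastKet[\B]{1}\otimes v_2)\otimes w = \pastKet[\B]{0}\otimes(v_1\otimes w)+\pastKet[\B]{1}\otimes(v_2\otimes w)$. For a sum $\sum_\intVar$ interpreted as a norm-convergent sequential limit over $\N$, and likewise for $\lim_\intVar$, joint continuity of the bounded bilinear map gives $(\lim_k u_k)\otimes w = \lim_k (u_k\otimes w)$. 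When both factors are infinite sums, the result is an iterated sum $\sum_x\sum_y$, whose value equals $(\sum_x a_x)\otimes(\sum_y b_y)$ by applying continuity twice. Renaming bound variables is harmless because the fresh variable is not free in the other factor, so the environments agree.

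For (ii), I would prove the stronger invariant that every output of $\redotimes{\cdot}$ is built from $\psadd$, $\sqpsadd$, $\sum$, $\lim$ over \emph{leaves}. A leaf is either a path or a tensor one of whose sides has empty signature. Paths, and $\bigotimes_\var\h$ with signature $\emptyset$, trivially satisfy this. The congruence clauses preserve it. For $\h_1\otimes\h_2$, I would run an auxiliary induction on the sum of the sizes of the reduced factors $\redotimes{\h_1}$ and $\redotimes{\h_2}$. Each distribution step strictly decreases this measure on the recursive calls, and the base cases are exactly the leaves. This also shows that $\redotimes{\cdot}$ terminates.

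The main obstacle will be the convergence bookkeeping in (i). The hypothesis $\semFock{\h}\neq\bot$ only says that the original nesting of sums and limits converges. After distributing a tensor inward, the rearranged term must not acquire a divergent or $\bot$ subterm, for example an inner $\sum$ whose value is no longer weighted by the vanishing outer factor. My first approach is to check clause by clause that the rewritten term's sequential limits are images of the original ones under the continuous map $-\otimes w$ (or $u\otimes -$). In that case convergence transfers directly, and no unconditional-summation or absolute-convergence argument is needed.
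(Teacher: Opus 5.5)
For (i) your argument is essentially the paper's. It is the same case analysis along the clauses of $\redotimes{\cdot}$: distributivity of $\otimes$ over $\psadd$ and $\sqpsadd$, continuity of $\otimes$ for $\sum$ and $\lim$, merging of two paths, and freshness of renamed binders. The paper likewise notes that convergence of a tensor forces convergence of its factors, which is your convergence-transfer point. Two things need adjusting:
\begin{itemize}
\item You must add associativity of $\otimes$. The paper's function handles nested tensors through the clauses $\redotimes{(\h_1\otimes\h_2)\otimes\h_3}=\redotimes{\h_1\otimes\redotimes{\h_2\otimes\h_3}}$ and $\redotimes{\pasum{p}{m}{n}\otimes(\h_2\otimes\h_3)}=\redotimes{\pasum{p}{m}{n}\otimes\h_2}\otimes\redotimes{\h_3}$.
\item The induction has to run over the recursion of $\redotimes{\cdot}$, as the paper does, because such calls are not on subterms of $\h$.
\end{itemize}
More importantly, you reconstruct $\redotimes{\cdot}$ from expectation instead of reading \cref{fig:redotimes}, and the result is a different function. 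The paper's version does not pre-reduce both factors and never tests whether a factor has empty signature. It recurses on the head constructor of the left factor, and on that of the right factor once the left is a path. This mismatch is where your proof of (ii) breaks.

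Your step ``the base cases are exactly the leaves'' fails on nested tensors.
\begin{itemize}
\item In your own version, take a path $P$ with $\signature(P)\neq\emptyset$ against a leaf $A\otimes B$ with $\signature(A)=\emptyset\neq\signature(B)$, e.g.\ $A=\bigotimes_{\var}\h'$. This pair can be neither merged, distributed, nor stopped, so no rule applies.
\item In the paper's version, the last clause quoted above does apply, but its output is not a leaf. In fact (ii) is false for the definition as printed. For three one-qubit paths, $\redotimes{\pasum{0}{\ket[\qbit_1]{0}}{1}\otimes(\pasum{0}{\ket[\qbit_2]{0}}{1}\otimes\pasum{0}{\ket[\qbit_3]{0}}{1})}=\pasum{0+0}{\ket[\qbit_1]{0}\ket[\qbit_2]{0}}{1\cdot 1}\otimes\pasum{0}{\ket[\qbit_3]{0}}{1}$. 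This is a tensor whose sides have signatures $\{\qbit_1,\qbit_2\}$ and $\{\qbit_3\}$. The paper disposes of this clause ``by associativity'', which only gives (i).
\item Re-reducing that output, i.e.\ taking $\redotimes{\redotimes{\pasum{p}{m}{n}\otimes\h_2}\otimes\h_3}$, repairs the pure-path case. But on $P\otimes(\bigotimes_{\var}\h'\otimes P')$ it cycles forever through the $(\h_1\otimes\h_2)\otimes\h_3$ and $\bigotimes_{\var}$ clauses.
\end{itemize}
So a genuinely new, terminating clause is needed for a path meeting a tensor with an empty-signature side. One possibility is to split $P$ into its past and present parts. Then only the present part crosses the empty-signature factor, and the ordered past log in $\fockSpace_{\pastSpaceDecor}$ is left untouched. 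You also need to strengthen your leaf invariant so that the non-empty side of a leaf is itself reduced; otherwise (ii) is not inherited by nested subterms. With both changes, your invariant-plus-measure argument is the right way to establish (ii).
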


\begin{figure}[H]
    \centering
        \begin{align*}
            \redotimes{\pasum{p}{m}{n}} &\equaldef \pasum{p}{m}{n} \\
            \redotimes{\h_1 \psadd \h_2} &\equaldef \redotimes{\h_1} \psadd
            \redotimes{\h_2} \\
            \redotimes{\h_1 \sqpsadd \h_2} &\equaldef \redotimes{\h_1} \sqpsadd
            \redotimes{\h_2} \\
            \redotimes{\sum_x \h} &\equaldef \sum_x \redotimes{\h} \\
            \redotimes{\lim_k \h} &\equaldef \lim_k \redotimes{\h}\\
            \redotimes{\bigotimes_x \h} &\equaldef \bigotimes_x \redotimes{\h}
            \\
            \redotimes{(\h_1 \psadd \h_2) \otimes \h_3} &\equaldef
            \redotimes{\h_1 \otimes \h_3} \psadd \redotimes{\h_2 \otimes \h_3}
            \\
            \redotimes{(\h_1 \sqpsadd \h_2) \otimes \h_3} &\equaldef
            \redotimes{\h_1 \otimes \h_3} \sqpsadd \redotimes{\h_2 \otimes \h_3}
            \\
            \redotimes{\lim_\intVar \h_1 \otimes \h_2} &\equaldef
            \lim_{\intVar'} \redotimes{\h_1[\intVar'/\intVar] \otimes \h_2} \\
            \redotimes{\sum_{\var} \h_1 \otimes \h_2} &\equaldef
            \sum_{\var'} \redotimes{\h_1[\var'/\var] \otimes \h_2} \\
            \redotimes{\left(\bigotimes_{\var} \h_1\right) \otimes \h_2}
                                                    &\equaldef
            \left(\bigotimes_{\var}
            \redotimes{\h_1}\right)
            \otimes \redotimes{\h_2} \\
            \redotimes{(\h_1 \otimes \h_2) \otimes \h_3} &\equaldef
            \redotimes{\h_1 \otimes \redotimes{\h_2 \otimes \h_3}} \\
            \redotimes{\pasum{p}{m}{n} \otimes \pasum{p'}{m'}{n'}} &\equaldef
            \pasum{p + p'}{m \otimes m'}{n \cdot n'} \\
            \redotimes{\pasum{p}{m}{n} \otimes (\h_2 \psadd \h_3)} &\equaldef
            \redotimes{\pasum{p}{m}{n} \otimes \h_2} \psadd \redotimes{\pasum{p}{m}{n} \otimes \h_3}
            \\
            \redotimes{\pasum{p}{m}{n} \otimes (\h_2 \sqpsadd \h_3)} &\equaldef
            \redotimes{\pasum{p}{m}{n} \otimes \h_2} \sqpsadd \redotimes{\pasum{p}{m}{n} \otimes \h_3}
            \\
            \redotimes{\pasum{p}{m}{n} \otimes \lim_\intVar \h_2} &\equaldef
            \lim_{\intVar'}
            \redotimes{\pasum{p}{m}{n} \otimes \h_2[\intVar'/\intVar]} \\
            \redotimes{\pasum{p}{m}{n} \otimes \sum_{\var} \h_2} &\equaldef
            \sum_{\var'} \redotimes{\pasum{p}{m}{n} \otimes \h_2[\var'/\var]} \\
            \redotimes{\pasum{p}{m}{n} \otimes \left(\bigotimes_{\var} \h_2\right)}
                                                    &\equaldef
            \redotimes{\pasum{p}{m}{n}} \otimes \left(\bigotimes_{\var} \redotimes{\h_2}\right) \\
            \redotimes{\pasum{p}{m}{n} \otimes (\h_2 \otimes \h_3)} &\equaldef
            (\redotimes{\pasum{p}{m}{n} \otimes \h_2}) \otimes \redotimes{\h_3}
        \end{align*}
        with $\intVar'$ and $\var'$ fresh variables.
    \caption{$\otimes$-reduced form $\redotimes{\h}$ of an \ihps{} $\h$}%
    \label{fig:redotimes}
\end{figure}
\begin{figure}[H]
    \begin{subfigure}[htpb]{0.47\textwidth}
                \begin{align*}
                    \eval{\trueBool}{m} &\equaldef 1 \\
                    \eval{\falseBool}{m} &\equaldef 0 \\
                    \eval{\integer_1 \leq \integer_2}{m} &\equaldef \eval{\integer_1}{m} \leq
                    \eval{\integer_2}{m} \\
                    \eval{\integer_1 = \integer_2}{m} &\equaldef \eval{\integer_1}{m} =
                    \eval{\integer_2}{m} \\
                    \eval{\bool_1 \land \bool_2}{m} &\equaldef \eval{\bool_1}{m} \cdot
                    \eval{\bool_2}{m} \\
                    \eval{\lnot{\bool}}{m} &\equaldef 1 \oplus {\eval{\bool}{m}} \\
                    \eval{\cbit}{m_1  \ket[\cbit]{\boolTerm}  m_2} &\equaldef \boolTerm \\
                    \eval{\qbit}{m_1  \ket[\qbit]{\boolTerm}  m_2} &\equaldef \boolTerm \\
                \end{align*}
            \caption{$\eval{\bool}{m}$ for boolean $\bool$ values}%
            \label{fig:bool-eval}
    \end{subfigure}
    \hfill
    \begin{subfigure}[htpb]{0.47\textwidth}
                \vspace{0.5em}
            \begin{align*}
                \eval{k}{m} &\equaldef k \\
                \eval{\integer_1 + \integer_2}{m} &\equaldef \eval{\integer_1}{m} +
                \eval{\integer_2}{m} \\
                \eval{\integer_1 * \integer_2}{m} &\equaldef \eval{\integer_1}{m} \times
                \eval{\integer_2}{m} \\
                \eval{\integer_1 ^ {\integer_2}}{m} &\equaldef
                \eval{\integer_1}{m}^{\eval{\integer_2}{m}} \\
                \eval{\cInt}{m_1  {\left[\intTerm\right]}_\cInt 
            m_2} & \equaldef \intTerm \\
            \end{align*}
            \caption{$\eval{\integer}{m}$ for integer $\integer$ values}%
            \label{fig:integer-eval}
    \end{subfigure}
    \caption{The inductive definition of $\eval{\cdot}{\h}$ evaluating boolean
    $\bool$ and integer $\integer$ values within the environment $\h$.}%
    \label{fig:full-eval}
\end{figure}

\subsection{Hoare-style rules for unitaries}%
\label{sub:hoare-style-rules-for-unitaries}

\begin{figure}[H]
    \centering
        \begin{align*}
            \applyU{\unitary}(\h_1 \psadd \h_2) &\equaldef \applyU{\unitary}(\h_1) \psadd
            \applyU{\unitary}(\h_2) \\
            \applyU{\unitary}(\h_1 \sqpsadd \h_2) &\equaldef \applyU{\unitary}(\h_1) \sqpsadd
            \applyU{\unitary}(\h_2) \\
            \applyU{\unitary}(\h_1 \otimes \h_2) &\equaldef
            \applyU{\unitary}(\h_1) \otimes \h_2 \text{ if } \signature(\h_2) =
            \emptyset \\
            \applyU{\unitary}(\h_1 \otimes \h_2) &\equaldef
            \h_1 \otimes \applyU{\unitary}(\h_2) \text{ if } \signature(\h_1) =
            \emptyset \\
            \applyU{\unitary}\left(\sum_x \h\right) &\equaldef \sum_x \applyU{\unitary}(\h) \\
            \applyU{\unitary}\left(\lim_k \h\right) &\equaldef \lim_k \applyU{\unitary}(\h)\\
            \applyU{\unitary}\left(\bigotimes_x \h\right) &\equaldef
            \text{never occurs since } \signature(\bigotimes_x \h) = \emptyset \\
            \applyU{\texttt{CNOT}(\qbit_1, \qbit_2)}(
            \pasum{p}{\ket[\qbit_1]{f} \otimes \ket[\qbit_2]{g}}{n})
                                                          &\equaldef
            \pasum{p}{\ket[\qbit_1]{f} \otimes \ket[\qbit_2]{f \oplus g}}{n} \\
            \applyU{\texttt{H}(\qbit)}^*(
            \pasum{p}{\ket[\qbit]{f}}{n}) &\equaldef
            \bigpsadd_y \pasum{p + \frac{f
            y}{2}}{\ket[\qbit]{y}}{\frac{1}{\sqrt{2}}n}
            \\
            \applyU{\texttt{R}_k(\qbit)}(
            \pasum{p}{\ket[\qbit]{f}}{n}) &\equaldef
            \pasum{p + \frac{f}{2^k}}{\ket[\qbit]{f}}{n} \\
            \applyU{\texttt{X}(\qbit)}(
            \pasum{p}{\ket[\qbit]{f}}{n}) &\equaldef
            \pasum{p}{\ket[\qbit]{f \oplus 1}}{n} \\
            \applyU{\texttt{Z}(\qbit)}(
            \pasum{p}{\ket[\qbit]{f}}{n}) &\equaldef
            \pasum{p + \frac{f}{2}}{\ket[\qbit]{f}}{n} 
        \end{align*}
    \caption{Definition of $\applyU{\unitary}(\h)$ given for $\h$ rewritten
    in $\otimes$-reduced form $\h \mapsto \redotimes{\h}$}\label{fig:applyU}
\end{figure}

Unitary gates are applied again as symbolic transformations of \ihps{} terms. We
define them inductively on \ihps{} in general, and for each unitary gate, we
define its effect on primitive \ihps{} terms. They are given in
\cref{fig:applyU}. In said figure, it is assumed that $\applyU{\unitary}(\h)$
will only be used when the program is valid on $\h$; that is, $\unitary \in
\programSet_{\signature(\h)}$. In particular, it is assumed that $\unitary$
always has distinct arguments, and that said arguments are qubits already
initialized in $\h$. Also note that the inductive case of $\bigotimes_x \h$
never occurs under these conditions since $\signature(\bigotimes_x \h) =
\emptyset$ and a unitary must act on at least one qubit. Furthermore, for the
case of $\h_1 \otimes \h_2$, we face the same issues as in
\cref{app:get-function} of memory addressing for a memory that is split over two
\ihps{} terms, and solve it in the same way by assuming $\applyU{\unitary}$
accepts as input the equivalent $\otimes$-reduced form $\redotimes{\h}$ of $\h$
(\cref{fig:redotimes}).

\section{Proofs}%
\label{sec:proofs}

\subsection{Undecidability results}%

\restateUndecidability*
\begin{proof}
    Let $\forall \intVar_1 \cdots \forall \intVar_n \psi(\intVar_1, \ldots,
    \intVar_n)$ be any $\Pi_1^0$ arithmetic formula with $\psi$ being
    quantifier-free. Let $\boolTerm \in \boolType$ be
    the boolean expression corresponding to $\psi$.  We construct the closed
    \ihps{} expressions $\h_\psi \equaldef \sum_{\intVar_1} \cdots
    \sum_{\intVar_n} \pasuminline{0}{\classKet[\cInt_1]{\intVar_1} \cdots
    \classKet[\cInt_n]{\intVar_n} }{\frac 1 {\sqrt{2^{\intVar_1 + \cdots +
    \intVar_n}}} \cdot (1-\lift(\boolTerm(\intVar_1, \ldots, \intVar_n)))}$ and $0 =
    \pasuminline{0}{0}{\emptyset}$, and check if $\h_\psi$ and $0$ are equivalent.
    The equivalence holds iff $\|\h_\psi\| = 0$, which is the case iff
    $\boolTerm(\intVar_1, \ldots, \intVar_n)$ is true for all $\intVar_1, \ldots,
    \intVar_n$, i.e., iff $\forall \intVar_1 \cdots \forall \intVar_n
    \psi(\intVar_1, \ldots, \intVar_n)$ is valid. Furthermore, checking the
    equivalence $\h_1 \equiv \h_2$ reduces to checking $\models
    \hoare{\h_1}{\skipProg}{\h_2}$.
\end{proof}

\subsection{Coherence theorems}%
\label{sub:coherence-theorems}

\restateWelldefinednessDenotational*
\begin{proof}
    We proceed by induction on programs $\prog$.
    \begin{enumerate}
        \item $\prog = \skipProg$: Trivial, as
            $\sem{\skipProg}(\rho) = \rho \in
            \CQ(\applySig{\skipProg}{\signature(\rho)}) = \CQ(\signature(\rho))$.
        \item $\prog = \unitary(\bar \qbit)$: Similar to $\skipProg$
            in that $\applySig{\unitary(\bar \qbit)}{\signature(\rho)} =
            \signature(\rho)$, but also satisfies $\bar \qbit \in
            \signature(\rho)$ since $\prog \in \programSet_{\signature(\rho)}$.
        \item $\prog = \measure{\cbit}{\qbit}$, $\prog =
            \assign{\cbit_1}{\boolTerm}$, and $\prog =
            \assign{\integer_1}{\intTerm}$: almost exactly identical to
            $\unitary(\bar \qbit)$.
        \item $\prog = \qubitInitProg{\qbit}$: We have that
            $\signature(\rho) \not\ni \qbit$ since $\prog \in
            \programSet_{\signature(\rho)}$, and thus $\rho \otimes
            \ket[\qbit]{0}\bra{0}_{\qbit}$ is valid and in 
            $\applySig{\qubitInitProg{\qbit}}{\signature(\rho)} =
            \signature(\rho) \cup \{\qbit\}$. 
        \item $\prog = \cbitInitProg{\cbit}$ and $\prog = \intInitProg{\cInt}$:
            nearly identical to $\qubitInitProg{\qbit}$.
        \item $\prog = \sequenceProg{\prog_1}{\prog_2}$: By induction
            hypothesis, $\sem{\prog_1}(\rho) \in
            \CQ(\applySig{\prog_1}{\signature(\rho)})$, and by another
            application of the induction hypothesis, we have that
            $\sem{\prog_2}(\sem{\prog_1}(\rho)) \in
            \CQ(\applySig{\prog_2}{\applySig{\prog_1}{\signature(\rho)}})$,
            which is exactly
            $\CQ(\applySig{\sequenceProg{\prog_1}{\prog_2}}{\signature(\rho)})$.
        \item $\prog = \ifthene{\bool}{\prog_1}{\prog_2}$: We have that
                $\mathrm {Var}(\bool) \subseteq \signature(\rho)$ and 
                that $\signature(\rho) = \signature(\widehat F_\boolTerm(\rho))
                = \signature(\widehat F_{\lnot \bool}(\rho))$. Therefore, by induction,
                we have that $\sem{\prog_1}(\widehat F_\boolTerm(\rho)) \in
                \CQ(\applySig{\prog_1}{\signature(\rho)})$ and that
                $\sem{\prog_2}(\widehat F_{\lnot \bool}(\rho)) \in
                \CQ(\applySig{\prog_2}{\signature(\rho)})$. However, we assumed
                that $\applySig{\prog_1}{\signature(\rho)} =
                \applySig{\prog_2}{\signature(\rho)}$ for \textbf{if}
                statements, as such, the addition $\sem{\prog_1}(\widehat
                F_\boolTerm(\rho)) + \sem{\prog_2}(\widehat F_{\lnot \bool}(\rho))$ is
                valid as both terms belong to the same same, and the addition
                remains in that space, namely
                $\CQ(\applySig{\prog_1}{\signature(\rho)}) =
                \CQ(\applySig{\prog_2}{\signature(\rho)}) =
                \CQ(\applySig{\prog}{\signature(\rho)})$.
        \item $\prog = \while{\bool}{\prog'}$: In this case, we have
            $\applySig{\prog'}{\signature(\rho)} = \signature(\rho)$, and we can
            conclude that $\applySig{\prog'}{\signature(\rho)} =
            \applySig{\skipProg}{\signature(\rho)}$, which makes
            $\ifthene{\bool}{\prog'}{\skipProg}$ a valid program. By the
                induction hypothesis, and the fact that the signature is
                unchanged, we can then conclude that
                $\sem{\ifthene{\bool}{\prog'}{\skipProg}}^n(\rho) \in
                \CQ(\signature(\rho))$ for all $n \in \N$. Finally, we need to
                show that the limit converges. Indeed, compare the state $\widehat
                F_{\lnot \bool}(\sem {\ifthene{\bool}{\prog'} {\skipProg}}^n
                (\rho))$ with the next state $\widehat F_{\lnot \bool}
                (\sem{\ifthene{\bool} {\prog'}{\skipProg}}^{n+1} (\rho))$:
                \begin{align*}
                    & \widehat F_{\lnot \bool}
                    (\sem{\ifthene{\bool}{\prog'}{\skipProg}}^{n+1} (\rho)) \\
                    =& \widehat F_{\lnot \bool}
                    (\sem{\ifthene{\bool}{\prog'}{\skipProg}}
                    (\sem{\ifthene{\bool}{\prog'}{\skipProg}}^{n} (\rho))) \\
                    =& \widehat F_{\lnot \bool}
                    (\sem{\prog'}(\widehat F_{\boolTerm}
                    (\sem{\ifthene{\bool}{\prog'}{\skipProg}}^{n} (\rho))) + \\
                     & \hspace{4em} \widehat F_{\lnot \bool}
                    (\sem{\ifthene{\bool}{\prog'}{\skipProg}}^{n} (\rho))) \\
                    =& \widehat F_{\lnot \bool}
                    (\sem{\prog'}(\widehat F_{\boolTerm}
                    (\sem{\ifthene{\bool}{\prog'}{\skipProg}}^{n} (\rho)))) + \\
                     & \hspace{4em}\widehat F_{\lnot \bool}
                    (\sem{\ifthene{\bool}{\prog'}{\skipProg}}^{n} (\rho)) \\
                \end{align*}
                In other words, the sequence ${\left(\widehat F_{\lnot
                \bool} (\sem{\ifthene{\bool}{\prog'}{\skipProg}}^{n}
            (\rho))\right)}_{n \in \N}$ has a difference between successive
                terms which is always positive semidefinite. Moreover, since
                programs are trace-non-increasing, the sequence is bounded by
                $\mathrm{tr}(\rho)$. As such, it converges. That is, the limit
                \[
                    \lim_{n
                    \to \infty} \sem{\ifthene{\bool}{\prog'}{\skipProg}}^{n}
                        (\rho)
                    \]
                    exists within the subspace $\CQ(\signature(\rho))$.
    \end{enumerate}
    \qed%
\end{proof}


    

\restateCorrectnessRedotimes*
\begin{proof}
    This follows by induction on the calculation of $\redotimes{\h}$ as given in
    \cref{fig:redotimes}. 
    \begin{enumerate}
        \item $\redotimes{\pasum{p}{m}{n}} \equiv_s \pasum{p}{m}{n}$ by
            definition, and it has no tensor subterm
        \item $\redotimes{\h_1 \psadd \h_2} \equiv_s \h_1 \psadd \h_2$ by
            induction hypothesis, and the tensor subterms of
            $\redotimes{\h_1 \psadd \h_2}$ are those of $\redotimes{\h_1}$ and
            $\redotimes{\h_2}$, which satisfy the property by induction
            hypothesis.
         \item $\redotimes{\h_1 \sqpsadd \h_2}$, $\redotimes{\sum_x \h}$, and
            $\redotimes{\lim_k \h}$ are nearly identical to the case $\redotimes{\h_1 \psadd
            \h_2}$.
         \item $\redotimes{\bigotimes_x \h} \equiv_s \bigotimes_x \h$ by
            induction hypothesis, and $\signature(\bigotimes_x \h) =
            \signature(\h) = \signature(\text{subterms of } \h) = \emptyset$.
         \item $\redotimes{(\h_1 \psadd \h_2) \otimes \h_3}$ follows by
             distributivity of $\otimes$ over $\psadd$ in the Fock space:
             \begin{align*}
                 &\semFock{\redotimes{(\h_1 \psadd \h_2) \otimes \h_3}} \\
                 =& \semFock{\redotimes{\h_1 \otimes \h_3} \psadd
                 \redotimes{\h_2 \otimes \h_3}} \\
                 =& \semFock{\redotimes{\h_1 \otimes \h_3}} +
                 \semFock{\redotimes{\h_2 \otimes \h_3}} \\
                 =& \semFock{\h_1 \otimes \h_3} + \semFock{\h_2 \otimes \h_3}\\
                 =& \semFock{\h_1} \otimes \semFock{\h_3} + \semFock{\h_2}
                 \otimes \semFock{\h_3} \\
                 =& \left(\semFock{\h_1} + \semFock{\h_2}\right) \otimes
                 \semFock{\h_3} \\
                 =& \semFock{(\h_1 \psadd \h_2) \otimes \h_3}
             \end{align*}
             As for signatures, it follows from the induction hypothesis given
             that the right hand side of the definition is itself an application
             of $\redotimes{\cdot}$.
         \item The cases of $\redotimes{(\h_1 \sqpsadd \h_2) \otimes \h_3}$, 
            $\redotimes{\pasum{p}{m}{n} \otimes (\h_2 \psadd \h_3)}$, as well as
            $\redotimes{\pasum{p}{m}{n} \otimes (\h_2 \sqpsadd \h_3)}$ are
            nearly identical to $\redotimes{(\h_1 \psadd \h_2) \otimes
            \h_3}$.
         \item $\redotimes{\lim_\intVar \h_1 \otimes \h_2}$ and
             $\redotimes{\pasum{p}{m}{n} \otimes \lim_\intVar \h_2}$
             follow by the continuity of $\otimes$ in the Fock space and the
             variable renaming ensuring that no variables are accidentally
             absorbed by the limit:
             \begin{align*}
                 &\semFock{\redotimes{\lim_\intVar \h_1 \otimes \h_2}} \\
                 =& \semFock{\lim_{\intVar'} \redotimes{\h_1[\intVar'/\intVar]
                 \otimes \h_2}} \\
                         =& \lim_{\intVar' \to \infty}
                         \semFock{\redotimes{\h_1[\intVar'/\intVar] \otimes
                         \h_2}} \\
                 =& \lim_{\intVar' \to \infty} \semFock{\h_1[\intVar'/\intVar] \otimes
                 \h_2} \\
                     =& \lim_{\intVar' \to \infty} \semFock{\h_1[\intVar'/\intVar]} \otimes
                     \lim_{\intVar' \to \infty} \semFock{\h_2} \\
                     =& \lim_{\intVar' \to \infty} \semFock{\h_1[\intVar'/\intVar]} \otimes
                     \semFock{\h_2} \\
                     =& \lim_{\intVar \to \infty} \semFock{\h_1} \otimes
                     \semFock{\h_2} \\
                     =& \semFock{\lim_\intVar \h_1 \otimes \h_2}
             \end{align*}

             Note, in particular, that this reasoning by continuity is
             contingent on the convergence of the subterms $\lim_\intVar \h_1$
             and $\h_2$, which is implied by the convergence of $\lim_\intVar
             \h_1 \otimes \h_2$.  Meanwhile, the condition on signatures follows
             immediately from the induction hypothesis and from
             $\signature(\lim_\intVar \h_1) = \signature(\h_1)$.
         \item $\redotimes{\sum_{\var} \h_1 \otimes \h_2}$ and
             $\redotimes{\pasum{p}{m}{n} \otimes \sum_{\var} \h_2}$ follow by a
             similar argument using the linearity of $\otimes$ in the Fock space
             but also making use of the distributivity of $\otimes$ over
             $+$, and looks exactly as the case for $\redotimes{\lim_\intVar
             \h_1 \otimes \h_2}$.
         \item $\redotimes{(\h_1 \otimes_{\var} \h_2) \otimes \h_3}$ and
             $\redotimes{\pasum{p}{m}{n} \otimes (\h_2 \otimes_{\var} \h_3)}$
             follow by the associativity of $\otimes$.
         \item From the induction hypothesis and the associativity of $\otimes$,
             we get
             \begin{align*}
                 \textstyle
                 \redotimes{\left(\bigotimes_x \h\right)} &\textstyle\equiv_s \bigotimes_x
                 \redotimes{\h} \\
                 \textstyle\redotimes{\pasum{p}{m}{n} \otimes \left(\bigotimes_x
         \h\right)} &\equiv_s \textstyle\redotimes{\pasum{p}{m}{n}}
                 \otimes \left(\bigotimes_x \redotimes{\h}\right)
             \end{align*}
               As for the
             condition on signatures, note that $\left(\bigotimes_x
             \redotimes{\h_1}\right)$ and all its subterms must have empty
             signatures so that the only non-trivial cases must be subterms of
             $\redotimes{\h_2}$, which then follow by the inductive hypothesis.
             For the case of $\pasum{p}{m}{n}$, it follows vacuously as
             $\pasum{p}{m}{n}$ has no tensor subterm.  \item
             $\redotimes{\pasum{p}{m}{n} \otimes \pasum{p'}{m'}{n'}}$ follows
             from the definition of $\otimes$ on path-sums:
             \begin{align*}
                 &\semFock{\redotimes{\pasum{p}{m}{n} \otimes \pasum{p'}{m'}{n'}}} \\
                 =& \semFock{\pasum{p + p'}{m \otimes m'}{n \cdot n'}} \\
                 =& e^{2 \pi i \semFock{p + p'}} 
                 \semFock{n \cdot n'} \semFock{m \otimes m'} \\
                 =& e^{2 \pi i (\semFock{p} + \semFock{p'})} 
                 (\semFock{n} \cdot \semFock{n'}) (\semFock{m} \otimes
                 \semFock{m'}) \\
                 =& \left(e^{2 \pi i \semFock{p}} \semFock{n} \semFock{m}\right) \otimes
                 \left(e^{2 \pi i \semFock{p'}} \semFock{n'} \semFock{m'}\right)
                 \\
                 =&  \semFock{\pasum{p}{m}{n}}  \otimes
                 \semFock{\pasum{p'}{m'}{n'}} \\
                 =& \semFock{\pasum{p}{m}{n} \otimes \pasum{p'}{m'}{n'}}
             \end{align*}
             With the condition on signatures following vacuously.
     \end{enumerate}
\end{proof}

\begin{proof}
    Let $\h_1$ and $\h_2$ be \ihps{} defined over the signature  $\signature
    \equaldef \signature(\h_1) = \signature(\h_2)$ such that $\h_1 \ultraperp
    \h_2$, then:
    \[
        \mathrm{tr}_{\hilbert(\qbitSet \cap
        \signature)}(\ket{\semFock{\h_1}}\bra{\semFock{\h_1}}) \perp
        \mathrm{tr}_{\hilbert(\qbitSet \cap
        \signature)}(\ket{\semFock{\h_2}}\bra{\semFock{\h_2}})
    \]
    As $\ket{\semFock{\h_1}}\bra{\semFock{\h_1}}$ and
    $\ket{\semFock{\h_2}}\bra{\semFock{\h_2}}$ are positive semi-definite
    operators, the only way for their partial traces to be orthogonal is if
    there is no basis vector of $\fockSpace({\classAddressSet \cap \signature}) =
    \hilbert[\classAddressSet \cap s] \otimes \fockSpace_{\pastSpaceDecor}$, that is, a
    history $\eta \in (\mathtt{List}(\B) \times \basis(\signature \cap
    \classAddressSet))$, for which both $\semFock{\h_1}(\eta) \in
    \hilbert(\qbitSet \cap \signature)$ and $\semFock{\h_2}(\eta) \in \hilbert(\qbitSet \cap
    \signature)$ are non-zero. As such, for a history $\eta$, we have
    \[
        \ket{\semFock{\h_1}(\eta)} \bra{\semFock{\h_2}(\eta)} = 0,
    \]so that
    \begin{align*}
        &\ket{\semFock{\h_1}(\eta) + \semFock{\h_2}(\eta)}
        \bra{\semFock{\h_1}(\eta) + \semFock{\h_2}(\eta)}\\
        =&
        \ket{\semFock{\h_1}(\eta)} \bra{\semFock{\h_1}(\eta)} +
        \ket{\semFock{\h_2}(\eta)} \bra{\semFock{\h_2}(\eta)}
    \end{align*}

    With this property in mind, the CQ state interpretation behaves linearly on
    this sum. Concretely, for a $c \in \basis(\signature
    \cap \classAddressSet)$,
    \begin{align*}
        \semRho{ \h_1 + \h_2}(c)
        &= (\ket c \bra c \otimes I_{\fockSpace_{\pastSpaceDecor}})
        \mathrm{tr}_{\fockSpace_{\pastSpaceDecor}}(\ket{\semFock{\h_1+
        \h_2}}\bra{\semFock{\h_1+ \h_2}}) \\
        &= \sum_{\eta \in \mathtt{List}(\B) \times \pastKet{c}} \ket{\semFock{\h_1+
        \h_2}(\eta)} \bra{\semFock{\h_1+ \h_2}(\eta)} \\
        &= \sum_{\eta \in \mathtt{List}(\B) \times \pastKet{c}}
        \ket{\semFock{\h_1}(\eta) + \semFock{\h_2}(\eta)}
        \bra{\semFock{\h_1}(\eta) + \semFock{\h_2}(\eta)} \\
        &= \sum_{\eta \in \mathtt{List}(\B) \times \pastKet{c}}
        \ket{\semFock{\h_1}(\eta)} \bra{\semFock{\h_1}(\eta)} +
        \ket{\semFock{\h_2}(\eta)} \bra{\semFock{\h_2}(\eta)} \\
        \semRho{ \h_1 + \h_2}(c) &= \semRho{\h_1}(c) + \semRho{\h_2}(c)
    \end{align*}

    It remains to show that the same holds for $\h_1 \sqpsadd \h_2$; i.e., that
    \[
        \semRho{\h_1 \sqpsadd \h_2}(c) = \semRho{\h_1}(c) + \semRho{\h_2}(c).
    \]
    Indeed, since $\semFock{\h_1 \sqpsadd \h_2} = \pastKet{0} \otimes \semFock{\h_1} +
    \pastKet{1} \otimes \semFock{\h_2}$, we have
    \begin{align*}
        \semRho{ \h_1 \sqpsadd \h_2}(c)
        &= (\ket c \bra c \otimes I_{\fockSpace_{\pastSpaceDecor}})
        \mathrm{tr}_{\fockSpace_{\pastSpaceDecor}}(\ket{\semFock{\h_1 \sqpsadd
        \h_2}}\bra{\semFock{\h_1 \sqpsadd \h_2}}) \\
        &= \sum_{\eta \in \mathtt{List}(\B) \times \pastKet{c}} \ket{\semFock{\h_1
                \sqpsadd
        \h_2}(\eta)} \bra{\semFock{\h_1 \sqpsadd \h_2}(\eta)} \\
        &= \sum_{\eta \in \mathtt{List}(\B) \times \pastKet{c}}
        \ket{\pastKet{0} \otimes \semFock{\h_1}(\eta) + \pastKet{1} \otimes
        \semFock{\h_2}(\eta)} \\
        & \phantom{=\sum_{\eta \in \mathtt{List}(\B) \times \pastKet{c}}}
        \bra{\pastKet{0} \otimes \semFock{\h_1}(\eta) + \pastKet{1} \otimes
        \semFock{\h_2}(\eta)} \\
        &= \sum_{\eta \in \mathtt{List}(\B) \times \pastKet{c}}
        \ket{\pastKet{0} \otimes \semFock{\h_1}(\eta)} \bra{\pastKet{0} \otimes
        \semFock{\h_1}(\eta)} \\
        &\phantom{=\sum_{\eta \in \mathtt{List}(\B) \times \pastKet{c}}}
        + \ket{\pastKet{1} \otimes \semFock{\h_2}(\eta)}
        \bra{\pastKet{1} \otimes \semFock{\h_2}(\eta)} \\
        &= \sum_{\eta \in \mathtt{List}(\B) \times \pastKet{c}}
        \ket{\semFock{\h_1}(\eta)} \bra{\semFock{\h_1}(\eta)} +
        \ket{\semFock{\h_2}(\eta)} \bra{\semFock{\h_2}(\eta)} \\
        \semRho{ \h_1 \sqpsadd \h_2}(c) &= \semRho{\h_1}(c) + \semRho{\h_2}(c)
    \end{align*}
    We then have that $\semRho{\h_1 + \h_2} = \semRho{\h_1 \sqpsadd \h_2}$;
    i.e., $\h_1 + \h_2 \equiv \h_1 \sqpsadd \h_2$.\qed%
\end{proof}

\begin{restatable}[Symbolic execution preserves equivalence]{lemma}{restateWelldefinednessOperational}%
    \label{lem:well-definedness-operational}
    For any $\prog \in \programSet$, and $\h_1, \h_2, \h_3 \in \hpsType$, such
    that $\hoare{\h_1}{\prog}{\h_2}$ and $\signature(\h_1) =
    \signature(\h_3)$, we have
    \begin{enumerate}[(i)]
        \item $\signature(\h_2) = \applySig{\prog}{\signature(\h_1)}$, and
        \item If $\h_1 \equiv \h_3$, then, $\forall \h_4 \in \hpsType,
            \left(\hoare{\h_3}{\prog}{\h_4} \implies \signature(\h_2) =
                \signature(\h_4) \land \h_2 \equiv
            \h_4\right)$.
    \end{enumerate}
\end{restatable}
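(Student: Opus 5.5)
The plan is to argue by induction on the derivation of $\hoare{\h_1}{\prog}{\h_2}$, i.e.\ on the last rule of \cref{fig:hoare-logic} applied. Here $\hoare{\cdot}{\cdot}{\cdot}$ is read as derivability with empty context $\logicContext$; for \ref{rule:ax} the triple is assumed well-typed by hypothesis.

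\emph{Part (i).} For each rule I check that the auxiliary operations act on signatures as the validity judgment $\validProg{\signature}{\prog}{\signature'}$ of \cref{fig:language-well-formedness} prescribes.
\begin{itemize}
\item \ref{rule:skip}, \ref{rule:unitary}, the assignments and \ref{rule:measure}: the operations $\applyU{-}$ and $\h[\address\ot\programTerm]$ only rewrite present memory cells and prepend past cells. Past cells have empty signature, so $\signature$ is preserved. This is a routine induction on the clauses of \cref{fig:applyU} and \cref{fig:projections-and-filtering}, using \cref{lem:redotimes-correctness} to justify working on $\redotimes{\h}$.
\item Initializations: tensoring with $\ket[\qbit]{0}$ (or $\classKet[\cbit]{0}$, $\classKet[\cInt]{0}$) adds exactly the new address.
\item \ref{rule:seq}: compose the two induction hypotheses.
\item \ref{rule:if}: filtering preserves signatures. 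The two branch hypotheses give equal signatures, so $\sqpsadd$ is defined with that common signature.
\item \ref{rule:while}: the premise yields $\signature(\h[\intVar])=\signature(\h[\intVar+1/\intVar])$. Since filtering and $\lim$ preserve signatures, the conclusion follows.
\item \ref{rule:equiv}: I need that $\equivvdash$ preserves signatures. This holds because $\semRho{-}$ lands in $\CQ(\signature(\h))$ and equality of CQ states forces equality of spaces, so \cref{thm:rewrite-soundness} suffices.
\end{itemize}

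\emph{Part (ii).} I avoid a second syntactic induction. Instead I reduce (ii) to soundness, \cref{thm:soundness}, applied to both triples, together with (i). From $\hoare{\h_1}{\prog}{\h_2}$ and $\hoare{\h_3}{\prog}{\h_4}$, part (i) gives
\[
\signature(\h_2)=\applySig{\prog}{\signature(\h_1)}=\applySig{\prog}{\signature(\h_3)}=\signature(\h_4).
\]
For any environment $\assignmentContext$ we then compute
\[
\semRho[\assignmentContext]{\h_2}=\sem{\prog}(\semRho[\assignmentContext]{\h_1})=\sem{\prog}(\semRho[\assignmentContext]{\h_3})=\semRho[\assignmentContext]{\h_4},
\]
where the middle equality is $\h_1\equiv\h_3$. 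This is exactly $\h_2\equiv\h_4$.

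\emph{Main obstacle.} The difficulty is the danger of circularity, since soundness may itself be proved using this lemma. If so, I would instead prove (ii) directly by the same induction, showing that each auxiliary operation respects $\equiv$: filters $\bool*-$, assignments $-[\address\ot\programTerm]$, $\applyU{-}$ and $\otimes m$ each correspond to a CQ-level map ($\widehat{F_\bool}$, $\widehat{P}$, $\widehat{\mathcal C_U}$, $-\otimes\ketbra00$) that commutes with $\semRho{-}$.
\begin{itemize}
\item For unitaries this is immediate, since they act on $\qbitSet$ only and commute with the partial trace and with dephasing on classical addresses.
\item For assignments, the prepended past cell is traced out, and the past record is what makes the map act as $\widehat{P}$ on the mixture.
\item For \ref{rule:if}, I would use the orthogonality-based identity $\semRho{\h\sqpsadd\h'}=\semRho{\h}+\semRho{\h'}$ proved just before this lemma.
\item For \ref{rule:while}, I would use continuity of the partial trace and of dephasing under the norm limit, so that $\semRho{\lim_\intVar-}$ is the limit of the $\semRho{-}$.
\end{itemize}
The delicate point is the assignment case: one must check that erasing the overwritten value corresponds on the CQ side to the projection $P$ and does not create coherence between branches that the present memory no longer distinguishes.
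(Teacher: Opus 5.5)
Your proposal is correct and follows the paper's proof: you prove (i) by induction on the last rule of the derivation, and you prove (ii) by applying soundness (\cref{thm:soundness}) to both triples, then using $\semRho{\h_1}=\semRho{\h_3}$ and the functionality of $\sem{\prog}$. The circularity you worry about does not arise, because the paper's soundness proof uses only the filtering, projection and unitary lemmas and never invokes this lemma, so your fallback induction is not needed.
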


\begin{proof}

    For (i), we proceed by induction on the rules of the logic for forming
    $\hoare{\h_1}{\prog}{\h_2}$.
    
    \begin{enumerate}
        \item{}\ref{rule:skip}: In this case, $\h_1 = \h_2$ and $\h_3 = \h_4$,
            so that $\h_2 = \h_1 \equiv \h_3 = \h_4$, and $\signature(\h_2) =
            \signature(\h_1) = \applySig{\prog}{\signature(\h_1)}$
        \item{}\ref{rule:seq}: We have $\prog = \prog_1 ; \prog_2$, and there
            is a $\h_1'$ such that $\hoare{\h_1}{\prog_1}{\h_1'}$ and
            $\hoare{\h_1'}{\prog_2}{\h_2}$. By induction on the first
            derivation, we have that $\signature(\h_1') =
            \applySig{\prog_1}{\signature(\h_1)}$. By
            induction on the second derivation, we have that
            $\signature(\h_2) = \applySig{\prog_2}{\signature(\h_1')}$. As such,
            $\signature(\h_2) =
            \applySig{\prog_2}{\applySig{\prog_1}{\signature(\h_1)}} =
            \applySig{\prog}{\signature(\h_1)}$.

            \begin{jiinote}{}
                Equiv now only exists in two forms. Make sure the explanation of
                the proof still holds.
            \end{jiinote}
        \item{}\ref{rule:equiv}: There exists $\h_2'$ such that $\h_2 \equiv
            \h_2'$ and $\hoare{\h_1}{\prog}{\h_2'}$. By induction, we have that
            $\signature(\h_2') = \applySig{\prog}{\signature(\h_1)}$, and
            $\h_2 \equiv \h_2'$, we have that $\signature(\h_2) =
            \signature(\h_2') = \applySig{\prog}{\signature(\h_1)}$.
        \item{}\ref{rule:unitary},~\ref{rule:assign-bool},~\ref{rule:assign-int}, and~\ref{rule:measure}:
            These programs don't affect signatures, so $
            \signature(\h_2) = \signature(\h_1) =
            \applySig{\prog}{\signature(\h_1)}$
        \item{}\ref{rule:qinit}: We have $\prog = \qubitInitProg \qbit$,
            and $\h_2 = \h_1 \otimes \ket[\qbit]{0}$, so that $\signature(\h_2) =
            \signature(\h_1) \cup \{\qbit\} =
            \applySig{\prog}{\signature(\h_1)}$.
        \item{}\ref{rule:cinit}: Similar to~\ref{rule:qinit}, we have
            $\signature(\h_2) = \signature(\h_1) \cup \{\cbit\} =
            \applySig{\prog}{\signature(\h_1)}$.
        \item{}\ref{rule:intinit}: Similar to~\ref{rule:qinit}, we have
            $\signature(\h_2) = \signature(\h_1) \cup \{\cInt\} =
            \applySig{\prog}{\signature(\h_1)}$.
        \item{}\ref{rule:if}: The derivation end with the following rule for
            $\h_2 = \h_2^{\bool} \psadd \h_2^{\lnot \bool}$:
            \begin{prooftree}
                \AxiomC{$\hoare{\bool * \h_1}{\prog_1}{\h_2^{\bool}}$}
                \AxiomC{$\hoare{(1 \oplus \bool) * \h_1}{\prog_2}{\h_2^{\lnot \bool}}$}
                \RightLabel{\ref{rule:if}}
                \BinaryInfC{$\hoare{\h_1}{\ifthene{\bool}{\prog_1}{\prog_2}}{\h_2^{\bool} \sqpsadd \h_2^{\lnot \bool}}$}
            \end{prooftree}
            By the inductive hypothesis and the assumption that the rules are
            applied on valid program/\ihps{} pairs only, we have that
            $\signature(\h_2^{\bool}) = \applySig{\prog_1}{\signature(\bool *
            \h_1)} = \applySig{\prog_1}{\signature(\h_1)}
            $ and $\signature(\h_2^{\lnot \bool}) = \applySig{\prog_2}{
            \signature((1 \oplus \bool) * \h_1)} = \applySig{\prog_2}{
            \signature(\h_1)}$. As such, $\signature(\h_2) =
            \signature(\h_2^{\bool}) = \signature(\h_2^{\lnot \bool}) =
            \applySig{\prog_1}{\signature(\h_1)} = \applySig{\prog_2}{
            \signature(\h_1)} = \applySig{\prog}{\signature(\h_1)}$.
        \item{}\ref{rule:while}:
            By the assumption that the rules are applied on valid
            program/\ihps{} pairs only, $\prog = \while{\bool}{\prog'}$ does not
            affect signatures and neither does $\prog'$, so by the inductive
            hypothesis, we have that $\signature(\h_2) = \signature(\h_1) =
            \applySig{\prog}{\signature(\h_1)}$.
    \end{enumerate}

    \noindent As for (ii), we use \cref{thm:soundness} to write:
    \begin{align*}
        \sem{\prog}(\semRho{\h_1}) &= \semRho{\h_2} \\
        \sem{\prog}(\semRho{\h_3}) &= \semRho{\h_4}
    \end{align*}
    But then if $\h_1 \equiv \h_3$, then $\semRho{\h_1} = \semRho{\h_3}$, which
    then implies that $\semRho{\h_2} = \semRho{\h_4}$ (by functionality of
    $\sem{\prog}$), and therefore that $\h_2 \equiv \h_4$.
\end{proof}

\restatePhysicality*
\begin{proof}

    \romain{changé le label car il y a un conflit}
    We start with a few lemmas that we will need in the main proof.

    \begin{lemma}[Soundness of filtering]%
        \label{lem:soundness-filtering}

        Let $\h \in \hpsType$ be a closed \ihps{}, and $\bool$ a boolean of the
        $\lang$ such that $\mathrm{Var}(\bool) \subseteq \signature(\h)$, then,
        \[
            \semRho{\bool * \h} = \widehat{F_{\bool}}(\semRho{\h})
        \]
    \end{lemma}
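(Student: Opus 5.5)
Since $\bool * \h$ is defined only on the $\otimes$-reduced form, I will first replace $\h$ by $\redotimes{\h}$. By \cref{lem:redotimes-correctness}, $\h \equiv_s \redotimes{\h}$, so $\semFock{\h}=\semFock{\redotimes{\h}}$ and hence $\semRho{\h}=\semRho{\redotimes{\h}}$. The filtering is defined on that form, so it suffices to prove the claim for $\otimes$-reduced terms.

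I will prove a stronger statement at the Fock level, by structural induction on $\redotimes{\h}$ and with an arbitrary environment $\assignmentContext$ so that open subterms under binders are covered:
\[
\semFock[\assignmentContext]{\bool * \h} = \left(\Pi_{\bool} \otimes I_{\fockSpace_{\pastSpaceDecor}}\right)\semFock[\assignmentContext]{\h},
\qquad
\Pi_{\bool} \equaldef \sum_{\classicalState \in \basis(\signature(\bool)),\ \sem{\bool}_\classicalState = 1} \ketbra{\classicalState}{\classicalState},
\]
where $\Pi_\bool$ is extended by the identity on the remaining present addresses.

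The cases go as follows.
\begin{enumerate}
\item \emph{Base case $\pasum{p}{m}{n}$.} I need $\semFock[\assignmentContext]{\eval{\bool}{m}} = \sem{\bool}_{\classicalState}$, where $\classicalState$ is the basis state $\semFock[\assignmentContext]{m}$ restricted to the present addresses. This follows by a routine induction on $\bool$ using \cref{fig:full-eval}. Since $\semFock[\assignmentContext]{m}$ is a single basis vector, the projector acts on it exactly as multiplication of the norm by that scalar $0/1$.
\item \emph{Cases $\psadd$ and $\sqpsadd$.} These follow from linearity of $\Pi_\bool\otimes I$. For $\sqpsadd$, one also uses that this operator commutes with tagging by $\pastKet[\B]{0}$ and $\pastKet[\B]{1}$.
\item \emph{Cases $\sum_x$ and $\lim_k$.} These follow from boundedness, hence continuity, of the projector. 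Convergence of $\h$ therefore implies convergence of $\bool*\h$ with the expected value.
\item \emph{Tensor case.} In reduced form one factor carries all variables of $\bool$, so the projector factors as $\Pi_\bool \otimes I$ across the tensor product.
\end{enumerate}

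From the Fock identity I then pass to CQ states. Write $P=\Pi_\bool\otimes I$. Then
\[
\ketbra{P\psi}{P\psi} = P\ketbra{\psi}{\psi}P .
\]
Since $P$ acts trivially on $\fockSpace_{\pastSpaceDecor}$, it commutes with the partial trace $\mathrm{tr}_{\fockSpace_{\pastSpaceDecor}}$. Since $\Pi_\bool$ is diagonal in the classical computational basis (note $\mathrm{Var}(\bool)$ lies among classical addresses, or measured qubits treated as bits), it also commutes with the dephasing $\widehat{\dephase}$. This gives
\[
\semRho{\bool*\h} = \Pi_\bool \,\semRho{\h}\, \Pi_\bool .
\]
Finally, for a dephased state, $\Pi_\bool \rho \Pi_\bool = \sum_{\sem{\bool}_\classicalState=1} \ketbra{\classicalState}{\classicalState}\rho\ketbra{\classicalState}{\classicalState}$, and this is precisely $\widehat{F_\bool}(\rho)$.

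The main obstacle is the base-case correspondence between the symbolic $\eval{\bool}{m}$ and the semantic evaluation $\sem{\bool}_\classicalState$. A related subtlety arises if $\bool$ mentions a qubit, where $\Pi_\bool$ is not diagonal on a dephased register. I would restrict, as the statement suggests, to $\mathrm{Var}(\bool)$ among classical addresses. If qubits must be allowed, I would instead argue directly that $\Pi_\bool$ commutes with the partial trace without invoking dephasing, since $F_\bool$ is then the same conjugation.
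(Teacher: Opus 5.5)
Your proposal is correct and follows essentially the paper's route. The paper proves the same Fock-level claim: the components of $\semFock{\bool * \h}$ at classical configurations falsifying $\bool$ vanish, and the others coincide with those of $\semFock{\h}$. It establishes this by structural induction on the $\otimes$-reduced form, then reads off $\widehat{F_{\bool}}$ blockwise from $\semRho{\h}(c)=\sum_{\eta}\ketbra{\semFock{\h}(\eta)}{\semFock{\h}(\eta)}$, which is just the coordinate version of your projector, partial-trace and dephasing commutation argument.

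The qubit case you worry about cannot arise, since booleans of \lang{} only mention bits and integer addresses. Your proposed fallback would in fact fail there, because $F_{\bool}$ would dephase the qubit whereas $\bool * -$ is a coherent projection.
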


    \begin{proof}
        For all $\h' \in \hpsType$, and $c \in \basis(\signature(\h')
        \cap \classAddressSet)$, we have
        \begin{align*}
            \semRho{\h'}(c) 
            &= (\ket c \bra c \otimes I_{\fockSpace_{\pastSpaceDecor}})
            \mathrm{tr}_{\fockSpace_{\pastSpaceDecor}}(\ket{\semFock{\bool *
            \h'}}\bra{\semFock{\h'}}) \\
            &= (\ket c \bra c \otimes I_{\fockSpace_{\pastSpaceDecor}})
            \sum_{\eta \in \basis_{\pastSpaceDecor} \times
            \basis(\signature(\h') \cap \classAddressSet)}
            \ket{\semFock{\h'}(\eta)} \bra{\semFock{\h'}(\eta)} \\
            &= \sum_{\eta \in \basis_{\pastSpaceDecor} \times \pastKet{c}}
            \ket{\semFock{\h'}(\eta)} \bra{\semFock{\h'}(\eta)}
        \end{align*}

    \noindent Claim: Let $\h$, $\bool$ be as in the statement of the lemma and $c \in
    \basis(\signature(\h) \cap \classAddressSet)$, then:
        \[
            \forall \eta \in \basis_{\pastSpaceDecor} \times \pastKet{c},
            \semFock{\bool * \h}(\eta) = 
            \begin{cases}
                \semFock{\h}(\eta) & \text{if } \eval{\bool}{c} = 1
                \\
                0 & \text{otherwise}
            \end{cases}
        \]
    From this claim, we show that 
    \begin{align*}
        \semRho{\bool * \h}(c) 
        &= \sum_{\eta \in \basis_{\pastSpaceDecor} \times \pastKet{c}}
        \ket{\semFock{\bool * \h}(\eta)} \bra{\semFock{\bool * \h}(\eta)} \\
        &= \sum_{\eta \in \basis_{\pastSpaceDecor} \times \pastKet{c}}
        \begin{cases}
            \ket{\semFock{\h}(\eta)} \bra{\semFock{\h}(\eta)} & \text{if }
            \eval{\bool}{c} = 1
            \\
            0 & \text{otherwise}
        \end{cases} \\
        &= \begin{cases}
            \sum_{\eta \in \basis_{\pastSpaceDecor} \times
            \pastKet{c}} \ket{\semFock{\h}(\eta)} \bra{\semFock{\h}(\eta)} &
            \text{if } \eval{\bool}{c} = 1
            \\
            0 & \text{otherwise}
        \end{cases} \\
        &= \begin{cases}
            (\ket c \bra c \otimes I_{\fockSpace_{\pastSpaceDecor}})
            \mathrm{tr}_{\fockSpace_{\pastSpaceDecor}}(\ket{\semFock{\h}}\bra{\semFock{\h}})
            & \text{if } \eval{\bool}{c} = 1
            \\
            0 & \text{otherwise}
        \end{cases} \\        
        &= \begin{cases}
            \semRho{\h}(c) & \text{if } \eval{\bool}{c} = 1
            \\
            0 & \text{otherwise}
        \end{cases} \\        
        &= \widehat{F_{\bool}}(\semRho{\h})(c)
    \end{align*}

    Proof of claim: We proceed by induction on the structure of $\h$. For the
    base case $\h = \pasum{p}{m}{n}$, we have $\semFock{\bool * \h} = e^{2 \pi i
    \semFock{p}} \semFock{\eval{\bool}{m} \cdot n} \semFock{m}$, and for any
    $\eta$, so that
            \[
                \semFock{\pasum{p}{m}{\eval{\bool}{m} \cdot n} }(\eta) =
                \begin{cases}
                    e^{2 \pi i \semFock{p}} \semFock{n}
                    \semFock{m} & \text{if } \eval{\bool}{\eta} = 1
                    \\
                    0 & \text{otherwise}
            \end{cases}
            \]
            The remaining cases are straightforward, with the note that, since
            $\h$ is considered to be in its $\otimes$-reduced form, when
            considering $\h = \h_1 \otimes \h_2$, we must have either
            $\mathrm{Var}(\bool) \subseteq \signature(\h_1)$ or
            $\mathrm{Var}(\bool) \subseteq \signature(\h_2)$, and in that case,
            the other term is left unchanged by the filtering.
    \end{proof}

    \begin{lemma}[Soundness of projection]%
        \label{lem:soundness-projection}
        Let $\h \in \hpsType$ be a closed \ihps{}, $\address \in \addressSet$ an
        address, and $\programTerm$ be a term of $\lang$ of the same type as
        $\address$, or a qubit $\programTerm \in \qbitSet$ (for which
        $\address \in \cbitSet$), and suppose $\mathrm{Var}(\programTerm) \cup
        \{\address\} \subseteq \signature(\h)$, then,
        \[
            \semRho{\h[\address \ot \programTerm]} = \widehat{P_{\address \to
            \programTerm}}(\semRho{\h})
        \]
    \end{lemma}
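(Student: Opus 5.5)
I will mirror the proof of \cref{lem:soundness-filtering}: reduce the statement to a pointwise claim about Fock vectors, then induct on the $\otimes$-reduced form of $\h$ given by \cref{lem:redotimes-correctness}. For a classical basis element $\sigma \in \basis(\signature(\h)\cap\classAddressSet)$ and a past history $\eta \in \basis_{\pastSpaceDecor}$, write $\semFock{\h}(\eta,\sigma) \in \hilbert(\qbitSet\cap\signature(\h))$ for the corresponding quantum component. As in the filtering proof, the dephased partial trace gives
\[
\semRho{\h} = \sum_{\sigma}\ket{\sigma}\bra{\sigma}\otimes \sum_{\eta}\ket{\semFock{\h}(\eta,\sigma)}\bra{\semFock{\h}(\eta,\sigma)} .
\]
On the other side, $\widehat{P_{\address\to\programTerm}}$ sends the block $\ket{\sigma}\bra{\sigma}\otimes A$ to $\ket{\sigma[\address\mapsto\programTerm]}\bra{\sigma[\address\mapsto\programTerm]}\otimes A$. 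Summing over $\sigma$, the block at $\sigma'$ is therefore $\sum_{\sigma:\ \sigma[\address\mapsto\programTerm]=\sigma'}$ of the old blocks.

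The key claim concerns the Fock interpretation of the assigned \ihps{}. Write $v = \sem{\programTerm}_\sigma$ and let $u$ be the old value of $\address$ in $\sigma$. The claim is
\[
\semFock{\h[\address\ot\programTerm]}\bigl((u::\eta),\,\sigma[\address\mapsto v]\bigr)=\semFock{\h}(\eta,\sigma),
\]
and all other components vanish. Here $(u::\eta)$ denotes the history with $u$ prepended in the $\B$ or $\Z$ list, according to the type of $\address$. The claim holds because, in the base case $\pasum{p}{m}{n}$, the definition in \cref{fig:projections-and-filtering} keeps $p$ and $n$ unchanged, replaces the entry of $\address$ by $\eval{\programTerm}{m}$, and logs the previous expression in the past. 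The interpretation of $\eval{\programTerm}{m}$ agrees with $\sem{\programTerm}_\sigma$ on the support, by an easy induction on $\programTerm$ using \cref{fig:full-eval}. For a measurement, $\programTerm=\qbit$, so the new value depends on the quantum component; the same pointwise statement then holds with $\sigma$ extended by the qubit's basis value, which is exactly what $P_{\assign{\cbit}{\qbit}}$ does in the computational basis. The inductive cases $\psadd$, $\sqpsadd$, $\sum_x$ and $\lim_k$ follow from linearity and continuity of the reindexing map $\eta\mapsto(u::\eta)$, which is an isometry. In the $\otimes$ case, the reduced form places all the addresses of $\programTerm$ and $\address$ on one side, and the side conditions of the definition guarantee this.

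Given the claim, the decisive point is the following. Distinct preimages $\sigma_1\neq\sigma_2$ of the same $\sigma'$ differ only at $\address$, so their old values $u_1\neq u_2$ are distinct. They therefore land on orthogonal histories $(u_1::\eta)$ and $(u_2::\eta)$, and no cross terms arise when forming $\ket{\cdot}\bra{\cdot}$ before the partial trace. This orthogonality makes the non-injective projection come out as a sum of blocks rather than a coherent sum, and it is precisely why the rule prepends the old value to the past. Consequently,
\[
\sum_{\eta'}\ket{\semFock{\h[\address\ot\programTerm]}(\eta',\sigma')}\bra{\cdot}
=\sum_{\sigma\mapsto\sigma'}\sum_\eta \ket{\semFock{\h}(\eta,\sigma)}\bra{\cdot},
\]
which matches $\widehat{P}$ block by block.

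I expect the main obstacle to be making the pointwise claim precise for the measurement case. The qubit value there is quantum, so $\sigma$ must range over the qubit's basis value as well. One must also check that the past log $\pastKet[\B]{\boolTerm}$ records the old bit value rather than the qubit value, so that the orthogonality argument applies over the right index.
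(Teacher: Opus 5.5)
Your strategy is essentially the paper's: describe $\h[\address\ot\programTerm]$ at the Fock level as a relabelling of (history, classical value) pairs, then push this through the partial trace and dephasing. You even supply the orthogonality step that the paper leaves implicit.

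The gap is your pointwise claim that the old value $u$ is always \emph{prepended}, giving history $(u::\eta)$. It holds in the base case, but the $\sqpsadd$ and $\otimes$ cases do not preserve it, and they are not instances of ``linearity of the reindexing map''. The definition pushes the assignment inside, $(\h_1\sqpsadd\h_2)[\address\ot\programTerm]=\h_1[\address\ot\programTerm]\sqpsadd\h_2[\address\ot\programTerm]$. The interpretation $\semFock{\cdot}$ of $\sqpsadd$ then tensors its marker $\pastKet[\B]{0}$ or $\pastKet[\B]{1}$ on the \emph{left}. So a path of $\h_1$ ends with history $(0::u::\eta)$, not $(u::0::\eta)$. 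For a bit assignment or a measurement these are different basis vectors of the ordered, non-symmetrized past. Likewise, in a reduced form $\h'\otimes(\h''[\address\ot\programTerm])$ with $\signature(\h')=\emptyset$ (e.g.\ $(\bigotimes_y\h')\otimes\h''$), $u$ lands after the past entries of $\h'$. This is exactly where the paper's own proof breaks down: its $\sqpsadd$ step puts the past entries in the wrong order, as the authors' own note concedes.

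The defect is not cosmetic. The insertion position depends on the syntax and can differ between the summands of a $\psadd$. Two paths can then interfere after the assignment although they did not before (or conversely), so your step ``no cross terms arise'' fails. In fact the lemma as stated has counterexamples under your reading. Take
$\h=(\pasum{0}{\classKet[\cbit]{1}}{\tfrac{1}{\sqrt2}}\sqpsadd\pasum{0}{\classKet[\cbit]{0}}{0})\psadd\pasum{\frac12}{\pastKet[\B]{1}\classKet[\cbit]{0}}{\tfrac{1}{\sqrt2}}$; the zero-amplitude branch only serves to form the $\sqpsadd$.
\begin{itemize}
\item Before the assignment, its two nonzero paths have Boolean pasts $[0]$ and $[1]$, so $\semRho{\h}=\tfrac12 I_\cbit$ and $\widehat{P_{\assign{\cbit}{0}}}(\semRho{\h})=\ketbra{0}{0}_\cbit$.
\item In $\h[\cbit\ot 0]$, both paths carry past $[0,1]$ with opposite phases, so $\semFock{\h[\cbit\ot 0]}=0$.
\end{itemize}

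A correct argument needs the definition changed, or the lemma restricted, so that the logged value always sits at the head of the \emph{entire} past of every path. With that in place, your pointwise claim, your orthogonality argument and your treatment of measurement (extending $\sigma$ by the qubit's value) all go through.

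Your reading that the \emph{old} value is logged is also necessary. The appendix definition literally logs $\eval{\programTerm}{m}$. Under that reading even $\sum_\boolVar\pasum{0}{\classKet[\cbit]{\boolVar}}{\tfrac1{\sqrt2}}$ with $\assign{\cbit}{0}$ fails, producing a state of trace $2$.
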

    \begin{proof}
        Start by rewriting $\h$ into an equivalent form $\h' \equaldef \redsqpsadd{\h}$ which is
        free of $\sqpsadd$ as such:
        \begin{align*}
            \redsqpsadd{\pasum{p}{m}{n}} &\equaldef \pasum{p}{m}{n} \\
            \redsqpsadd{\h_1 \psadd \h_2} &\equaldef \redsqpsadd{\h_1} \psadd
            \redsqpsadd{\h_2} \\
            \redsqpsadd{\h_1 \sqpsadd \h_2} &\equaldef \pastKet[\B]{0}
            \otimes \redsqpsadd{\h_1} + \pastKet[\B]{1} \otimes
            \redsqpsadd{\h_2} \\
            \redsqpsadd{\h_1 \otimes \h_2} &\equaldef \redsqpsadd{\h_1} \otimes
            \redsqpsadd{\h_2} \\
            \redsqpsadd{\lim_\intVar \h} &\equaldef \lim_\intVar \redsqpsadd{\h}
            \\
            \redsqpsadd{\sum_{\var} \h} &\equaldef \sum_{\var} \redsqpsadd{\h} \\
            \redsqpsadd{\bigotimes_x \h} &\equaldef \bigotimes_x \redsqpsadd{\h}
        \end{align*}

        Then, we show, by induction on $\h$, that $\semFock{\h[\address \ot
        \programTerm]} = \Pi_{\address \ot \programTerm, \signature(\h)}
        \semFock{\redsqpsadd{\h}}$, where $\Pi_{\address \ot \programTerm, \signature(\h)}$
        is the linear map defined as:
        \[
            \Pi_{\address \ot \programTerm, \signature} \equaldef \sum_{\substack{\varValue \in
                    \text{type}(\address) \\ \classicalState \in
                \basis(\signature \setminus \{\address\})}} ({\{\dot{ \address}\}
            }_{\text{type}(\address)} \otimes \ket{\classicalState[\address \mapsto
            \programTerm]}) \bra{\classicalState[\address \mapsto \varValue]}
        \]

        The inductive cases $\h_1 + \h_2$, $\sum_{\var} \h$ and $\lim_\intVar \h$
        as well as the tensors $\h_1 \otimes \h_2$ and $\bigotimes_{\var} \h$
        follow relatively immediately from the linearity of $\Pi$, as
        for the base case, $\h = \pasum{p}{m}{n}$,
        \begin{align*}
            &\semFock{\pasum{p}{m}{n}[\address \ot \programTerm]} \\
            =&
            \semFock{\pasum{p}{\pastKet{\eval{\programTerm}{m}_{\text{type}(\address)}
                        \otimes m[\address \mapsto
        \eval{\programTerm}{m}]}}{n}} \\
            =& \sum_{\substack{\varValue \in
                    \text{type}(\address) \\ \classicalState \in 
                    \basis(\signature(\h) \setminus
            \{\address\})}} e^{2 \pi i \semFock{p}} \semFock{n}
            \pastKet[\text{type}(\address)]{\varValue} \otimes
            \semFock{\classicalState[\address \mapsto \varValue]} \\
                    =& \Pi_{\address \ot \programTerm,
                \signature(\h)} \semFock{\pasum{p}{m}{n}}
        \end{align*}
        Finally, the slightly more delicate case is that of $\h = \h_1 \sqpsadd
        \h_2$. In that case, $\redsqpsadd{\h} = \pastKet[\B]{0}
        \otimes \redsqpsadd{\h_1} + \pastKet[\B]{1} \otimes
        \redsqpsadd{\h_2}$, and we have:
        \begin{align*}
            & \semFock{(\h_1 \sqpsadd \h_2)[\address \ot \programTerm]} \\
            =& \semFock{\h_1[\address \ot \programTerm] \sqpsadd \h_2[\address
            \ot \programTerm]}  \\
                =& \pastKet[\B]{0} \otimes \semFock{\h_1[\address \ot \programTerm]} +
                \pastKet[\B]{1} \otimes \semFock{\h_2[\address \ot \programTerm]} \\
                =& \pastKet[\B]{0} \otimes \Pi_{\address \ot \programTerm,
                \signature(\h)} \semFock{\redsqpsadd{\h_1}} +
                \pastKet[\B]{1} \otimes \Pi_{\address \ot \programTerm,
                \signature(\h)} \semFock{\redsqpsadd{\h_2}} \\
                    =& \Pi_{\address \ot \programTerm, \signature(\h)} \left(
                        \pastKet[\B]{0} \otimes
                        \semFock{\redsqpsadd{\h_1}} +
                        \pastKet[\B]{1} \otimes
                        \semFock{\redsqpsadd{\h_2}} \right) \\
        \end{align*}
        \jiwarning{This proof is simply incorrect because it mixes up the
            $\pastKet{-}$ in the wrong order; I don't feel like fixing it atm,
        so I'll see later}

    \end{proof}

    \begin{lemma}[Soundness of unitary application]%
        \label{lem:soundness-unitary}
        Let $\h$ be a closed \ihps{} and $\bar \qbit$ be a tuple of qubits in
        $\signature(\h)$, then,
        \[
            \semRho{\applyU{\unitary(\bar \qbit)}(\h)} = \widehat{U}_{\bar \qbit}
            \semRho{\h} \widehat{U}_{\bar \qbit}^\dagger \\
        \]
    \end{lemma}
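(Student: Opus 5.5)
We prove a stronger, Fock-level statement and then push it through the CQ interpretation. Define the bounded operator $\widetilde U_{\bar\qbit} \equaldef U_{\bar\qbit} \otimes I$ on $\fockSpace(\signature(\h))$, acting as $U$ on the qubits $\bar\qbit$ and as the identity on all other present addresses and on $\fockSpace_{\pastSpaceDecor}$. The core claim is
\[
\semFock{\applyU{\unitary(\bar \qbit)}(\h)} = \widetilde U_{\bar\qbit}\, \semFock{\h},
\]
that is, $\applyU{\unitary(\bar\qbit)}(\h) \equiv^{\mathtt P}$ the ``true'' application. Assuming this claim,
\[
\semRho{\applyU{\unitary(\bar\qbit)}(\h)} = \widehat{\dephase}\bigl(\mathrm{tr}_{\fockSpace_{\pastSpaceDecor}}(\widetilde U\ketbra{\semFock{\h}}{\semFock{\h}}\widetilde U^\dagger)\bigr).
\]
Since $\widetilde U$ acts trivially on $\fockSpace_{\pastSpaceDecor}$, it commutes with the partial trace. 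Since it acts only on qubits, it also commutes with the dephasing channel $\widehat{\dephase_{\signature\cap\classAddressSet}}$, which acts only on classical addresses. Together these give $\widehat U_{\bar\qbit}\semRho{\h}\widehat U_{\bar\qbit}^\dagger$.

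The Fock-level claim is proved by structural induction on $\h$. By \cref{lem:redotimes-correctness} we may assume $\h$ is in $\otimes$-reduced form, because $\equiv_s$ preserves $\semFock{\cdot}$ and $\applyU{\cdot}$ is defined on $\redotimes{\h}$.

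\textbf{Linear cases.} For $\psadd$ and $\sum_x$ the claim follows from linearity of $\widetilde U$. For $\sum_x$ over integer variables, the sum converges in norm and bounded operators commute with norm-convergent series. For $\lim_k$ it follows from continuity of the bounded operator $\widetilde U$.

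\textbf{Direct sum.} For $\sqpsadd$, write $\semFock{\h_1\sqpsadd\h_2} = \pastKet[\B]{0}\otimes\semFock{\h_1}+\pastKet[\B]{1}\otimes\semFock{\h_2}$. Since $\widetilde U$ is the identity on the past register, it passes through this sum.

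\textbf{Tensor products.} In reduced form, one factor of $\h_1\otimes\h_2$ has empty signature, so $\widetilde U$ factors as $\widetilde U\otimes I$ or $I\otimes\widetilde U$ on the two components. The case $\bigotimes_x$ cannot occur, as noted in \cref{fig:applyU}.

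\textbf{Base cases.} For $\pasum{p}{m}{n}$ with $m$ containing $\ket[\qbit]{f}$, check each gate directly on the basis vector $\ket{\semFock{f}}$.
\begin{itemize}
\item $X$ sends $\ket{v}$ to $\ket{v\oplus 1}$.
\item $Z$ multiplies by the phase $e^{2\pi i v/2}$, and $R_k$ by $e^{2\pi i v/2^k}$.
\item $CNOT$ sends $\ket{f}\ket{g}$ to $\ket{f}\ket{f\oplus g}$.
\item $H$ sends $\ket{v}$ to $\tfrac1{\sqrt2}\sum_{y}e^{2\pi i vy/2}\ket{y}$. This matches the interpretation of $\sum_y\pasum{p+\frac{fy}{2}}{\ket[\qbit]{y}}{\frac n{\sqrt2}}$, because $y$ is a fresh variable and so does not capture anything in $p$, $n$, or the rest of $m$.
\end{itemize}

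\textbf{Main obstacle.} The delicate point is the interaction with the reduction $\redotimes{\cdot}$. The base case needs the relevant qubits $\bar\qbit$ to appear together in a single path, and a two-qubit $CNOT$ on qubits sitting in different tensor factors must be shown never to arise after reduction. I would handle this by using \cref{lem:redotimes-correctness}(ii): every tensor subterm has one side with empty signature, so all present addresses, in particular both qubits of the $CNOT$, sit in one factor. I would also check convergence. If $\semFock{\h}\neq\bot$ then all subterms converge, and the image under the bounded $\widetilde U$ converges as well, so the induction never produces $\bot$.
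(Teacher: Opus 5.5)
Your proposal is correct and follows the paper's proof: a structural induction establishing the stronger Fock-level identity $\semFock{\applyU{\unitary(\bar\qbit)}(\h)} = \unitary(\bar\qbit)\semFock{\h}$, where the gate-by-gate base cases for $H$, $X$, $Z$, $\texttt{CNOT}$ and $R_k$ carry the argument and the inductive cases are routine. You go slightly further than the paper by explicitly pushing that identity through the partial trace and the dephasing channel, and by addressing the $\otimes$-reduced form and convergence; note only that the induction has to be stated for open subterms under an arbitrary environment $\assignmentContext$, since the bodies of $\sum_x$ and $\lim_k$ are not closed.
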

    \begin{proof}
        This is again by induction on the structure of $\h$ with the inductive
        cases being almost trivial. We elaborate on the base case for the
        different unitaries, and show more strongly that $\semFock{\applyU{
        \unitary(\bar \qbit)}(\h)} = \unitary(\bar \qbit) \semFock{\h}$.
        \begin{itemize}
            \item For $\unitary = H$, we have
                \begin{align*}
                    &\semFock{\applyU{H(\qbit)}(\pasum{p}{m}{n})} \\
                    =& \semFock{\sum_{\boolVar} \pasum{p + \frac{\boolVar \cdot
                    \eval{\qbit}{m}} {2}}{m[\qbit \mapsto \boolVar]}{\frac n
                {\sqrt 2}}} \\
                    =& \sum_{\boolVar} e^{2 \pi i \left(\semFock{p} +
                    \frac{\boolVar \cdot \eval{\qbit}{m}} {2}\right)}
                    \semFock{\frac n {\sqrt 2}} \semFock{m[\qbit \mapsto
                    \boolVar]} \\
                    =& \sum_{\boolVar} \frac{e^{2 \pi i
                    \semFock{p}}}{\sqrt 2} e^{\pi i \boolVar \cdot
                    \eval{\qbit}{m}} \semFock{n} \semFock{m[\qbit
                    \mapsto \boolVar]} \\
                    =& \begin{cases}
                        \frac{e^{2 \pi i \semFock{p}}}{\sqrt 2} \semFock{n}
                        (\semFock{m[\qbit \mapsto 0]} + \semFock{m[\qbit
                        \mapsto 1]}) & \text{if } \eval{\qbit}{m} = 0
                        \\
                        \frac{e^{2 \pi i \semFock{p}}}{\sqrt 2} \semFock{n}
                        (\semFock{m[\qbit \mapsto 0]} - \semFock{m[\qbit
                        \mapsto 1]}) & \text{if } \eval{\qbit}{m} = 1
                    \end{cases} \\
                    =& H(\qbit) e^{2 \pi i \semFock{p}} \semFock{n}
                    \semFock{m} \\
                    =& H(\qbit) \semFock{\pasum{p}{m}{n}}
                \end{align*}
            \item For $\unitary = X$, we have
                \begin{align*}
                    &\semFock{\applyU{X(\qbit)}(\pasum{p}{m}{n})} \\
                    =& \semFock{\pasum{p}{m[\qbit \mapsto 1 \oplus
                    \eval{\qbit}{m}]}{n}} \\
                    =& e^{2 \pi i \semFock{p}} \semFock{n} \semFock{m[\qbit
                    \mapsto 1 \oplus \eval{\qbit}{m}]} \\
                        =& \begin{cases}
                            e^{2 \pi i \semFock{p}} \semFock{n}
                            \semFock{m[\qbit \mapsto 1]} & \text{if }
                            \eval{\qbit}{m} = 0
                            \\
                            e^{2 \pi i \semFock{p}} \semFock{n}
                            \semFock{m[\qbit \mapsto 0]} & \text{if }
                            \eval{\qbit}{m} = 1
                        \end{cases} \\
                            =& X(\qbit) e^{2 \pi i \semFock{p}} \semFock{n}
                            \semFock{m} \\
                            =& X(\qbit) \semFock{\pasum{p}{m}{n}}
                \end{align*}
            \item For $\unitary = Z$, we have
                \begin{align*}
                    &\semFock{\applyU{Z(\qbit)}(\pasum{p}{m}{n})} \\
                    =& \semFock{\pasum{p + \eval{\qbit}{m}/2}{m}{n}} \\
                    =& e^{2 \pi i (\semFock{p} + \eval{\qbit}{m}/2)}
                    \semFock{n} \semFock{m} \\
                        =& \begin{cases}
                            e^{2 \pi i \semFock{p}} \semFock{n}
                            \semFock{m} & \text{if } \eval{\qbit}{m} = 0
                            \\
                            - e^{2 \pi i \semFock{p}} \semFock{n}
                            \semFock{m} & \text{if } \eval{\qbit}{m} = 1
                        \end{cases} \\
                            =& Z(\qbit) e^{2 \pi i \semFock{p}} \semFock{n}
                            \semFock{m} \\
                            =& Z(\qbit) \semFock{\pasum{p}{m}{n}}
                \end{align*}
            \item For $\unitary = \texttt{CNOT}$, we have
                \begin{align*}
                    &\semFock{\applyU{\texttt{CNOT}(\qbit_1, \qbit_2)}(\pasum{p}{m}{n})} \\
                    =& \semFock{\pasum{p}{m[\qbit_2 \mapsto
                    \eval{\qbit_2}{m} \oplus \eval{\qbit_1}{m}]}{n}} \\
                            =& e^{2 \pi i \semFock{p}} \semFock{n}
                            \semFock{m[\qbit_2 \mapsto \eval{\qbit_2}{m} \oplus
                            \eval{\qbit_1}{m}]} \\
                            =& \begin{cases}
                                e^{2 \pi i \semFock{p}} \semFock{n}
                                \semFock{m[\qbit_2 \mapsto 0]} &
                                \text{if } \eval{\qbit_1}{m} = \eval{\qbit_2}{m}
                                \\
                                e^{2 \pi i \semFock{p}} \semFock{n}
                                \semFock{m[\qbit_2 \mapsto 1]} &
                                \text{if } \eval{\qbit_1}{m} \neq \eval{\qbit_2}{m}
                            \end{cases} \\
                                =& \texttt{CNOT}(\qbit_1, \qbit_2) e^{2 \pi i
                                \semFock{p}} \semFock{n}
                                \semFock{m} \\
                                    =& \texttt{CNOT}(\qbit_1, \qbit_2)
                                    \semFock{\pasum{p}{m}{n}}
                \end{align*}
            \item For $\unitary = R_k$, we have
                \begin{align*}
                    &\semFock{\applyU{R_k(\qbit)}(\pasum{p}{m}{n})} \\
                    =& \semFock{\pasum{p + \frac{\eval{\qbit}{m}}{2^k}}{m}{n}}
                    \\
                    =& \begin{cases}
                        e^{2 \pi i \semFock{p}} \semFock{n} \semFock{m} &
                        \text{if } \eval{\qbit}{m} = 0 \\
                        e^{2\pi i \cdot \frac{1}{2^k}} \cdot e^{2 \pi i
                    \semFock{p}} \semFock{n} \semFock{m} & \text{if }
                    \eval{\qbit}{m} = 1 \\
                    \end{cases}\\
                        =& R_k \semFock{\pasum{p}{m}{n}}
                \end{align*}
                
        \end{itemize}
        \qed%
    \end{proof}

    For the main theorem (\cref{thm:soundness}), we proceed by induction on the derivation \hoare{\h_1}{\prog}{\h_2}.
    \begin{itemize}
        \item{}\ref{rule:skip}: In this case, $\h_1 = \h_2$, and
            $\sem{\skipProg} = \text{id}$, so that
            $\sem{\skipProg}(\semRho{\h_1}) = \semRho{\h_1} = \semRho{\h_2}$.
        \item{}\ref{rule:seq}: We have $\prog = \prog_1 ; \prog_2$, and there
            is a $\h_1'$ such that \hoare{\h_1}{\prog_1}{\h_1'} and
            \hoare{\h_1'}{\prog_2}{\h_2}. By induction on the first
            derivation, we can assume that $\sem{\prog_1}(\semRho{\h_1}) =
            \semRho{\h_1'}$. By induction on the second derivation, we have that
            $\sem{\prog_2}(\semRho{\h_1'}) = \semRho{\h_2}$. As such,
            $\sem{\prog}(\semRho{\h_1}) = (\sem{\prog_2} \circ \sem{\prog_1})(\semRho{\h_1})
            = \semRho{\h_2}$.
        \item{}\ref{rule:equiv}: There exists $\h_2'$ such that $\h_2 \equiv
            \h_2'$ and \hoare{\h_1}{\prog}{\h_2'}. By induction, we have that
            $\sem{\prog}(\semRho{\h_1}) = \semRho{\h_2'}$, and
            $\h_2 \equiv \h_2'$, we have that $\semRho{\h_2} =
            \semRho{\h_2'}$, so that $\sem{\prog}(\semRho{\h_1}) =
            \semRho{\h_2}$.
        \item{}\ref{rule:unitary}: This follows from the soundness of $\applyU{
                \unitary(\bar \qbit) }$ with respect to $\sem{\unitary(
                    \bar \qbit)}$, which is proven separately in
                Theorem~\ref{lem:soundness-unitary}.
        \item{}\ref{rule:assign-bool},~\ref{rule:assign-int},~\ref{rule:measure}:
        Those follow directly from the soundness of the projection
        Theorem~\ref{lem:soundness-projection}.
        \item{}\ref{rule:qinit},~\ref{rule:cinit},~\ref{rule:intinit}:
            Let's consider the case of~\ref{rule:qinit} first. We have $\prog =
            \qubitInitProg \qbit$, and $\h_2 = \h_1 \otimes \ket[\qbit]{0}$, and
            \begin{align*}
                \semRho{\h \otimes {\ket 0}_\qubit}
                &= \widehat{\dephase_{\signature \cap \classAddressSet}}\left(
                \mathrm{tr}_{\fockSpace_{\pastSpaceDecor}}\left(\ket{\semFock{\h \otimes
                {\ket 0}_\qubit}}\bra{\semFock{\h \otimes {\ket
                0}_\qubit}}\right)\right) \\
                &= \widehat{\dephase_{\signature \cap \classAddressSet}}\left(
                    \mathrm{tr}_{\fockSpace_{\pastSpaceDecor}}\left(\ket{\semFock{\h}} \bra{\semFock{\h}} \otimes
                    \ket[\qbit]{0} \bra{0}_\qubit\right)\right) \\
                &= \widehat{\dephase_{\signature \cap \classAddressSet}}\left(
                    \mathrm{tr}_{\fockSpace_{\pastSpaceDecor}}\left(\ket{\semFock{\h}} \bra{\semFock{\h}}\right) \otimes
                    \ket[\qbit]{0} \bra{0}_\qubit\right) \\
                &= \widehat{\dephase_{\signature \cap \classAddressSet}}\left(
                    \mathrm{tr}_{\fockSpace_{\pastSpaceDecor}}\left(\ket{\semFock{\h}} \bra{\semFock{\h}}\right)\right) \otimes
                    \ket[\qbit]{0} \bra{0}_\qubit \\
                &= \semRho{\h} \otimes \ket[\qbit]{0} \bra{0}_\qubit 
            \end{align*}
            where $\signature = \applySig{\qubitInitProg \qbit}{\signature(\h)}
        = \signature(\h \otimes {\ket{0}}_\qbit)$. We note that ${\ket
        0}_{\qbit} {\bra 0}_{\qbit}$ factors out of
        $\mathrm{tr}_{\fockSpace_{\pastSpaceDecor}}$ because $\qbit \not \in
        \pastSpaceDecor$. This is also the
        case for~\ref{rule:cinit} and~\ref{rule:intinit}. As for the factoring
        out of $\widehat{\dephase_{\signature \cap \classAddressSet}}$,
        for~\ref{rule:qinit}, this is again because $\qbit \not \in
        \classAddressSet$,
        while for~\ref{rule:cinit} and~\ref{rule:intinit}, this is because we're
        applying dephasing on a basis state ${\ket 0}_{\cbit}\bra{0}_{\cbit}$ or 
        ${\ket 0}_{\cInt}\bra{0}_{\cInt}$, which is invariant under dephasing.

        \item{}\ref{rule:if}: 
            We have $\prog = \ifthene{\bool}{\prog_1}{\prog_2}$ and $\h_2 =
            \h_2^{\top} \sqpsadd \h_2^{\bot}$, where $\hoare{\bool *
            \h_1}{\prog_1}{\h_2^{\top}}$ and $\hoare{(1 \oplus \bool) *
            \h_1}{\prog_2}{\h_2^{\bot}}$. Then, by the inductive hypothesis, we
            have that $\sem{\prog_1}(\semRho{\bool * \h_1}) =
            \semRho{\h_2^{\top}}$ and $\sem{\prog_2}(\semRho{(1 \oplus \bool) *
            \h_1}) = \semRho{\h_2^{\bot}}$. But then, by
            Theorem~\ref{lem:soundness-filtering}, we have that
            \[
                \sem{\prog_1}(\semRho{\bool * \h_1}) =
                \sem{\prog_1}(\widehat{F_{\bool}}(\semRho{\h_1})) =
                \semRho{\h_2^{\top}}
            \]
            and
            \[
                \sem{\prog_2}(\semRho{(1 \oplus \bool) * \h_1}) =
                \sem{\prog_2}(\widehat{F_{\lnot \bool}}(\semRho{\h_1})) =
                \semRho{\h_2^{\bot}}
            \]
            So that \begin{align*}
                \sem{\prog}(\semRho{\h_1})
                &= \sem{\prog_1}(\widehat{F_{\bool}}(\semRho{\h_1})) +
                \sem{\prog_2}(\widehat{F_{\lnot \bool}}(\semRho{\h_1})) \\
                &= \semRho{\h_2^{\top}} + \semRho{\h_2^{\bot}} \\
                &= \semRho{\h_2^{\top} \sqpsadd \h_2^{\bot}} \\
                &= \semRho{\h_2}
            \end{align*}
        \item{}\ref{rule:while}: We have $\prog = \while{ \bool}{\prog'}$, and
            there exists an invariant $\h_{\mathtt{inv}}[k]$ with a free integer
            variable $k$ such that $\h_1 = \h_{\mathtt{inv}}[0/k]$, $\h_2 = \lim_k
            (1 \oplus \bool) * \h$, and,
            \[
                \hoare{\h_{\mathtt{inv}}[k]}{\ifthene{\bool}{\prog'}{\skipProg}}{\h_{\mathtt{inv}}[k+1/k]}
            \]
            By the inductive hypothesis, we have that
            \[
                \sem{\ifthene{\bool}{\prog'}{\skipProg}}(\semRho{\h_{\mathtt{inv}}[k]}) =
                    \semRho{\h_{\mathtt{inv}}[k+1/k]},
            \] which, by iteration/induction on natural numbers corresponds to:
            \[
                \sem{\ifthene{\bool}{\prog'}{\skipProg}}^n(\semRho{\h_1})
                    = \semRho{\h_{\mathtt{inv}}[n/k]}.
            \]

            We can then proceed using Theorem~\ref{lem:soundness-filtering} to write:
            \[
                (\widehat{F_{\lnot \bool}})
                (\sem{\ifthene{\bool}{\prog'}{\skipProg}}^n
                (\semRho{\h_1})) = \semRho{(1 \oplus \bool) *
                \h_{\mathtt{inv}}[n/k]},
            \] which then implies
            \begin{align*}
                &\phantom{=}\sem{\while{\bool}{\prog'}}(\semRho{\h_1}) \\
                &= \lim_{n \to \infty}
                (\widehat{F_{\lnot \bool}})
                (\sem{\ifthene{\bool}{\prog'}{\skipProg}}^n
                (\semRho{\h_1})) \\
                &= \lim_{n \to \infty} \semRho{(1 \oplus \bool) *
                \h_{\mathtt{inv}}[n/k]} \\
                &= \semRho{\lim_k (1 \oplus \bool) *
                \h_{\mathtt{inv}}[k]} \\
                &= \semRho{\h_2} 
            \end{align*}
\end{itemize} \qed%

\end{proof}

\restateAdequacyBounded%
\begin{proof}
    If the program $\prog$ terminates in time $T$, then, we claim that if the
    subprogram
    $\while{\bool}{\prog'}$ appears within $\prog$, then, if we write:
    \[
        \whileBounded{T}{\bool}{\prog'} \equaldef
        \underbrace{\ifthene{\bool}{\prog'}{\skipProg}; \ldots;
                \ifthene{\bool}{\prog'}{\skipProg}}_{T \text{ times}},
    \] we have:
    \[
        \sem{\whileBounded{T}{\bool}{\prog'}} = \sem{\while{\bool}{\prog'}}
    \]
    Indeed, we have \[
        \sem{\while{\bool}{\prog'}} = \lim_{n \to \infty}
        (\widehat{F_{\lnot \bool}} \circ \sem{\ifthene{\bool}{\prog'}{\skipProg}}^n)
    \]
    However, since the program terminates in time $T$, no more than $T$
    iterations of the loop could have been executed, meaning that, within $T$
    iterations, the loop must have exited with $\lnot \bool$ being satisfied in
    all branches. As such, for all $n \geq T$, we have 

    \begin{align*}
        \sem{\while{\bool}{\prog'}}
        &= (\widehat{F_{\lnot \bool}} \circ
        \sem{\ifthene{\bool}{\prog'}{\skipProg}}^n) \\
        &=
            \sem{\ifthene{\bool}{\prog'}{\skipProg}}^T \\
        &= \sem{\whileBounded{T}{\bool}{\prog'}}
     \end{align*}

     For each \texttt{while} loop in $\prog$, we can therefore unroll it into
     a program without loops that is semantically equivalent. We will then
     produce a derivation for $(\whileBounded{T}{\bool}{\prog'}, \h')$ and
     translate it into a derivation for the ordinary \texttt{while} $(\while{\bool}{\prog'}, \h')$.  Indeed,
     for a \texttt{while}-free program, the semantics of such programs do not
     then require the use of the rule~\ref{rule:while} which was the only rule
     for which the premises are constrictive; i.e.\ it includes not only
     constraints about validity of programs on \ihps{}, but also that the program
     preserves the form of the loop invariant. By liberating ourselves from this
     constraint, we can have an~\ref{rule:equiv}-free syntax-driven derivation
    for \hoare{\h'}{\whileBounded{T}{\bool}{\prog'}}{\h''}. This derivation
     involves $T$ repetitions of the sub-derivation for the conditional
    $\hoare{\h'_i}{\ifthene{\bool}{\prog'}{\skipProg}}{\h'_{i+1}}$
     involving $T$ different \ihps{} $\h'_1, \ldots, \h'_T$. Out of these $T$
     \ihps{}, we can extract a loop invariant $\h'_{\mathtt{inv}}[k]$ for the
     original $\while{\bool}{\prog'}$ loop, by writing:
     \[
        \h'_{\mathtt{inv}}[\intVar] \equaldef 
        \pasum{0}{\emptyset}{\lift(1 \leq \intVar)} \otimes \h'_1 \sqpsadd
        \pasum{0}{\emptyset}{\lift(2 \leq \intVar)} \otimes \h'_2 \sqpsadd \cdots \sqpsadd
        \pasum{0}{\emptyset}{\lift(T \leq \intVar)} \otimes \h'_T
     \]

     Clearly, $\h'_{\mathtt{inv}}[\intVarOther/\intVar] \equiv \h'_\intVarOther$
     for all $1 \leq \intVarOther \leq T$,
     therefore, through~\ref{rule:equiv} applications, this is indeed a loop
     invariant for $\while{\bool}{\prog'}$:
      \[
          \hoare{\h'_{\mathtt{inv}}[\intVar]}{\ifthene{\bool}{\prog'}{\skipProg}}{\h'_{\mathtt{inv}}[\intVar+1/\intVar]},
      \]
      and we can finally write the derivation for \hoare{\h'}{\while{\bool}{\prog'}}{\h''} as:
     \begin{center}
         \AxiomC{$\hoare{\h'_{\mathtt{inv}}[\intVar]}{\ifthene{\bool}{\prog'}{\skipProg}}{\h'_{\mathtt{inv}}[\intVar+1/\intVar]}$}
             \RightLabel{\ref{rule:while}}
             \UnaryInfC{$\hoare{\h'}{\while{\bool}{\prog'}}{\lim_\intVar (1 \oplus \bool) * \h'_{\mathtt{inv}}[\intVar]}$}
             \DisplayProof%
         \end{center}

    We can proceed as such with derivations for all the other \texttt{while}
    constructs in $\prog$, and finally obtain that there is an $\h_2$ such that
    we can derive \hoare{\h_1}{\prog}{\h_2}.  Finally, by soundness
     (\cref{thm:soundness}), we have that $\sem{\prog}( \semRho{\h_1}) =
     \semRho{\h_2}$.\qed{}
\end{proof}
     \begin{remark}
         In view of the details of the proof, we reiterate the failure of
         adequacy in the general case with more detailed comments.
         We note, in particular, that this strategy highlights the need to be
         able to express a loop invariant for any loop. There are certain cases
         where this is possible (e.g., when the loop is known to be bounded as
         we have seen), but in general, there is no guarantee that a loop
         invariant can be expressed in the language of \ihps{}, and our
         suspicion from preliminary investigations is that the extension of
         \ihps{} to support arbitrary loop invariants is unwieldy with \ihps{}
         becoming essentially having to be as expressive as the programming
         language itself. We believe that such extensions, while perhaps
         interesting in theory, are not particularly useful in practice, and
         that they would constitute a nearly verbatim reimplementation of the
         programming language itself with little interest in terms of analysis.
     \end{remark}

\restateGenericStrategy*
\jiwarning{Review slightly hand-wavy proof}
\begin{proof}
    Let $\prog$, $\h^{\bool}[\intVar]$, $\h^{\lnot \bool}[\intVar]$, and
    $\h_{\mathtt{next}}[\intVar]$ be as in the statement, $\signature \equaldef
    \signature(\h^{\bool}) = \signature(\h^{\lnot \bool}) = \signature(
    \h_{\mathtt{next}})$, and $\cInt \not \in \signature$ be a fresh classical
    integer address. Finally, define the following \ihps{}:
    \[
        \h[\intVar] \equaldef \bigpsadd_{\intVar'} \langle 0, \lift ({\intVar'} \leq \intVar) \cdot \emptyset
        \rangle \otimes \h^{\lnot \bool}[{\intVar'}/\intVar] \otimes
        {[{\intVar'}]}_{\cInt}
        \sqpsadd \h^{\bool}[\intVar] \otimes {[\intVar]}_{\cInt}.
    \]
    We show that $\h[\intVar]$ is a loop invariant; that is, that
    \[
        \hoare{\h[\intVar]}{\ifthene{\bool}{\prog; \assign{\cInt}{\cInt + 1}}{\skipProg}}{\h[\intVar+1]}
    \]
    
    For the following, we will use the shorthand notation $\sum_{{\intVar'} = 0}^\intVar
    \h[{\intVar'}/\intVar]$ for $\sum_{\intVar'} \langle 0, \lift ({\intVar'} \leq \intVar) \cdot \emptyset \rangle
    \otimes \h[{\intVar'}/\intVar]$.

    Indeed, we can apply the rule~\ref{rule:if} of the logic as such:
    \begin{prooftree}
        \AxiomC{$\begin{array}{c}\hoare{\bool * \h[\intVar]}{\prog; \assign{\cInt}{\cInt + 1}}{\h^{\lnot \bool}[\intVar+1/\intVar] \otimes {[\intVar+1]}_{\cInt} \sqpsadd \h^{\bool}[\intVar+1/\intVar] \otimes {[\intVar+1]}_{\cInt}}
        \\ \hoare{(1 \oplus \bool) * \h[\intVar]}{\skipProg}{\sum_{{\intVar'}=0}^{\intVar} \h^{\lnot \bool}[{\intVar'}/\intVar] \otimes {[{\intVar'}]}_{\cInt}}\end{array}$}
        \RightLabel{\ref{rule:if}}
        \UnaryInfC{$\hoare{\h[\intVar]}{\ifthene{\bool}{\prog; \assign{\cInt}{\cInt + 1}}{\skipProg}}{\h[\intVar+1]}$}
    \end{prooftree}

    For the \texttt{skip} part, we apply the~\ref{rule:equiv} rule with an
    equivalence $(1 \oplus \bool) * \h[\intVar] \equiv \sum_{{\intVar'}=0}^{\intVar} \h^{\lnot
    \bool}[{\intVar'}/\intVar] \otimes {[{\intVar'}]}_{\cInt}$. Indeed, this is the case by
    propagating $(1 \oplus \bool) * -$ into the two sides of the direct sum in
    $\h[\intVar]$ and using the fact that $(1 \oplus \bool) * \h^{\bool}[\intVar] \equiv
    0$ (as $\h^{\bool}[\intVar]$ is supported exclusively on worlds where $\bool$ is
    true), and that $(1 \oplus \bool) * \h^{\lnot \bool}[{\intVar'}/\intVar] \equiv \h^{\lnot
    \bool}[{\intVar'}/\intVar]$, finishing that branch of the proof with an
    application of the~\ref{rule:skip} rule.

    For the part where we apply $\prog; \assign{\cInt}{\cInt + 1}$, we first
    need to apply the~\ref{rule:seq} rule, which reduces the derivation to an
    easy application of~\ref{rule:assign-int} one one hand, and the following,
    on another hand.
    \[
        \hoare{\bool * \h[\intVar]}{\prog}{\h^{\lnot \bool}[\intVar+1/\intVar] \otimes
        {[\intVar]}_{\cInt} \sqpsadd \h^{\bool}[\intVar+1/\intVar] \otimes {[\intVar]}_{\cInt}}
    \]

    At that point, we apply the~\ref{rule:equiv} rule with the equivalence
    $\h_{\mathtt{next}}[\intVar] \equiv \h^{\bool}[\intVar+1/\intVar] \sqpsadd \h^{\lnot
    \bool}[\intVar+1/\intVar]$, in order to reduce to $\hoare{\bool * \h[\intVar]}{\prog}{
    \h_{\mathtt{next}}[\intVar] \otimes {[\intVar]}_{\cInt}}$. We can establish said
    equivalence by writing $\h_{\mathtt{next}}[\intVar] = \bool *
    \h_{\mathtt{next}}[\intVar] + (1 \oplus \bool) * \h_{\mathtt{next}}[\intVar]$ and using
    the assumptions $\h^{\bool}[\intVar+1/\intVar] \equiv \bool * \h_{\mathtt{next}}[\intVar]$ and
    $\h^{\lnot \bool}[\intVar+1/\intVar] \equiv (1 \oplus \bool) * \h_{\mathtt{next}}[\intVar]$.

    Next, we are left with showing $\hoare{\bool * \h[\intVar]}{\prog}{
    \h_{\mathtt{next}}[\intVar] \otimes {[\intVar]}_{\cInt}}$. Once again, this is finally
    achieved with the application of~\ref{rule:equiv} 
    with the equivalence $\bool * \h[\intVar] \equiv \h^{\bool}[\intVar] \otimes
    {[\intVar]}_{\cInt}$, which itself is correct by the assumptions $\bool *
    \h^{\lnot \bool}[l/\intVar] \equiv 0$ and $\bool * \h^{\bool}[\intVar] \equiv
    \h^{\bool}[\intVar]$. This reduces the problem to $\hoare{\h^{\bool}[\intVar] \otimes
    {[\intVar]}_{\cInt}}{\prog}{\h_{\mathtt{next}}[\intVar] \otimes
    {[\intVar]}_{\cInt}}$,
    which is the last unused assumption.
\end{proof}

\section{Implementation}%
\label{app:source-code}

The source code (with Unicode characters) used for running the coin-toss example
is the following:

\begin{center}
    \begin{minipage}{0.7\textwidth}
        \begin{lstlisting}
qubit q;
bit c;
int x; // Iteration counter
while 
// Invariant index
{x : Int,
// Invariant hps
// Exiting cases
($\Sigma$_{y $\in \mathbb N$} ($\langle $0, liftC($\uparrow$((y $\leq$ x))) * liftC(1)/sqrt(liftC(2^(y+1))) $\cdot$ |1$\rangle $_q[1]_{c : $\mathbb B$}[y]_{x : $\mathbb Z$}$\rangle $)) 
// Non-exiting case
+ ($\langle $0, liftC(1)/sqrt(liftC(2^(x+1))) $\cdot$ |0$\rangle $_q[0]_{c : $\mathbb B$}[x]_{x : $\mathbb Z$}$\rangle $) } 
!c do
    H(q);
    c := measure q;
    x :Z= x + 1
done
        \end{lstlisting}
    \end{minipage}
\end{center}

We can see that in this implementation, we have specified a hint for the loop
invariant using the syntax $\whileAnnot{\bool}{\prog}{\cInt : \texttt{Int},
\h_{\mathtt{inv}}}$, and where $\h_{\mathtt{inv}}$ may be described with Unicode
characters for better readability. Minor temporary modifications of the syntax
have been made (e.g.\ \texttt{:Z=} for integer assignment or explicit liftings
in the \ihps{}) to facilitate parsing in the context of a prototype, but the
questions of comfort in reading and writing such loop invariants are relatively
minor and are being actively addressed.

\end{document}